\def\CT{{CT}}
\newcommand{\nil}{\mathrm{Nil}}
\newcommand{\Ad}{\mathrm{Ad}}
\newcommand{\heisZ}{\mathrm{Heis}^\ZZ}
\newcommand{\heisR}{\mathrm{Heis}^\RR}
\newcommand{\real}{{\mathbb R}}
\newcommand{\complex}{{\mathbb C}}
\newcommand{\torus}{{\mathbb T}}
\newcommand{\integer}{{\mathbb Z}}
\newcommand{\cA}{{\mathcal A}}
\newcommand{\cb}{{\mathcal B}}
\newcommand{\ce}{{\mathcal E}}
\newcommand{\ci}{{\mathcal I}}
\newcommand{\cL}{{\mathcal L}}
\newcommand{\ad}{{\rm ad}}
\newcommand{\cpt}{{\mathcal K}}
\newcommand{\cK}{{\mathcal K}}
\newcommand{\cT}{{\mathcal T}}
\newcommand{\cX}{{\mathcal X}}
\newcommand{\ind}{{\rm Ind}}
\newcommand{\inn}{{\rm Inn}}
\newcommand{\aut}{{\rm Aut}}
\newcommand{\wt}[1]{\widetilde{#1}}
\newcommand{\wh}[1]{\widehat{#1}}
\newcommand\rmap[1]{\stackrel{#1}\longrightarrow}
\newcommand\lmap[1]{\stackrel{#1}\longleftarrow}
\newcommand\lumap[1]{\vcenter{\llap{\scriptsize$#1$}}\uparrow}
\newcommand\rdmap[1]{\downarrow\vcenter{\rlap{\scriptsize$#1$}}}
\newcommand\ldmap[1]{\vcenter{\llap{\scriptsize$#1$}}\downarrow}
\newcommand{\CC}{{\mathbb C}}
\newcommand{\RR}{{\mathbb R}}
\newcommand{\TT}{{\mathbb T}}
\newcommand{\ZZ}{{\mathbb Z}}
\newcommand{\Ind}{\mathrm{Ind}}
\newcommand{\solR}{\mathrm{Sol}^\RR}
\newcommand{\solZ}{\mathrm{Sol}^\ZZ}
\newcommand{\solv}{\mathrm{Solv}}
\newcommand{\bbK}{{\mathbb K}}
\newcommand{\bbQ}{{\mathbb Q}}
\newcommand{\bbR}{{\mathbb R}}
\theoremstyle{plain}
\newtheorem{theorem}{Theorem}[section]
\newtheorem{lemma}[theorem]{Lemma}
\newtheorem{proposition}[theorem]{Proposition}
\newtheorem{corollary}[theorem]{Corollary}
\theoremstyle{definition}
\newtheorem{definition*}{Definition}
\theoremstyle{remark}
\newtheorem{remark}[theorem]{Remark}
\newtheorem{remarks*}{Remarks}
\newtheorem{axioms}[theorem]{Axiom}
\newtheorem*{setup}{Basic setup}
\begin{document}

\title{T-duality simplifies bulk-boundary correspondence:\\
The noncommutative case}

\author{Keith C. Hannabuss\footnote{{Mathematical Institute, 
  Andrew Wiles Building,
  Radcliffe Observatory Quarter,
  Woodstock Road,
  Oxford OX2 6GG,
U.K.,} {\em email}: kch@balliol.ox.ac.uk}, Varghese Mathai\footnote{ Department of Pure Mathematics,
School of  Mathematical Sciences, 
University of Adelaide, 
Adelaide, SA 5005, 
Australia, {\em email}: mathai.varghese@adelaide.edu.au}
 and Guo Chuan Thiang\footnote{ Department of Pure Mathematics,
School of  Mathematical Sciences, 
University of Adelaide, 
Adelaide, SA 5005, 
Australia, {\em email}: guochuan.thiang@adelaide.edu.au}
}

\date{}

\maketitle

\begin{abstract}
We state and prove a general result establishing that T-duality, or the Connes-Thom isomorphism simplifies the bulk-boundary correspondence, given by a boundary map in $K$-theory, in the sense of converting it to a simple geometric restriction map. This settles in the affirmative several earlier conjectures of the authors, and provides a clear geometric picture of the correspondence. In particular, our result holds in arbitrary spatial dimension, in both the real and complex cases, and also in the presence of disorder, magnetic fields, and H-flux. These special cases are relevant both to String Theory and to the study of the quantum Hall effect and topological insulators with defects in Condensed Matter Physics.
\end{abstract}

%%%%%%%%%%%%%%%%%%%%%%%%%%%%%%%%%%%%
%
%                Introduction
%
%%%%%%%%%%%%%%%%%%%%%%%%%%%%%%%%%%%%

\tableofcontents

\section{Introduction: physical motivation}

A succession of insights, starting with Kane--Mele's invariant \cite{KM} and culminating in \cite{Ki,FM,Thiang}, led to an understanding and classification of topological insulators in terms of $K$-theory. The connection to $K$-theory is particularly important in view of earlier rigorous work on the quantum Hall effect \cite{Bellissard,Kellendonk1}. These developments also provided new insight into how the $K$-theory of bulk and boundary systems are related. For various reasons it is actually closer to the spirit of condensed matter theory, as well as more flexible and appropriate, to use the more general $C^*$-algebraic $K$-theory rather than the topological version \cite {B}. The availability of $C^*$-algebraic machinery also opens the way to describe the same system in different forms, using powerful tools such as T-duality and the closely related Connes--Thom isomorphism. As we explain in this paper, especially in the Appendix, the (topological) T-duality transform used in physical applications is precisely the Connes--Thom isomorphism \cite{C} composed with an isomorphism from an imprimitivity theorem.

Whilst the geometric relationship between bulk and boundary is most simply expressed in position space, the physics of Brillouin zones and Fermi levels is more transparent in momentum space. T-duality is a geometric analogue of the Fourier transform, see Section \ref{sect:overview}. It moves between geometric and physical descriptions similar to the Fourier transform that flips position and momentum space in wave mechanics. A recent series of papers has shown that the rather complicated map relating the 
$K$-theory of the physical bulk and boundary algebras of topological insulators, transforms under T-duality into the obvious geometrical restriction map \cite{MT1,MT2,MT3,HMT}. Schematically, these results say that the following diagram commutes in many examples of physical interest:
\begin{equation}
\xymatrix{
\framebox{\parbox{13em}{Position space bulk invariant}}  \ar[dd]_{\parbox{5em}{\footnotesize Restriction to boundary}} \ar[rrr]^{\sim\;}_{\rm T-duality} &&& \framebox{\parbox{14em}{Momentum space bulk invariant}} \ar[dd]^{\parbox{6em}{\footnotesize bulk-boundary homomorphism}} \\ &&& \\
\framebox{\parbox{15em}{Position space boundary invariant}}  \ar[rrr]^{\sim\;\;\;\;}_{\rm T-duality\;\;\;\;\;} &&& \framebox{\parbox{16.5em}{Momentum space boundary invariant}}
}
\label{metadiagram}
\end{equation}
Thus momentum space topological invariants (e.g.\ as defined over a Brilluoin torus) transform isomorphically under T-duality into real (position) space\footnote{Quite often $KR$-theory groups, in the sense of Atiyah's Real $K$-theory, are relevant, and these groups are defined for spaces with involutions. Such spaces are usually called ``Real spaces'', which should not be confused with our usage of ``real space'' as synonymous with ``position space''.} invariants, in such a way that the bulk-boundary homomorphism becomes the simple geometric restriction-to-boundary map.

The main result of this paper is the formalization and \emph{conceptual} proof of the commutativity of diagram \ref{metadiagram} in a very general setting. This result can then be specialized to analyze the bulk-boundary correspondence in: (1) arbitrary spatial dimensions, (2) in both the real and complex cases, (3) in the presence of disorder and/or magnetic fields, and (4) in the presence of H-flux. We had previously worked in some special subcases in which the groups of topological invariants in all four corners of diagram \ref{metadiagram}, or at least the interesting ones, are computable; then commutativity was verified by chasing around the diagram generator-by-generator. Getting a concrete hold of the generators can be very useful when studying particular systems. Nevertheless there are situations, particularly when disorder is modelled in the mathematical description, in which the relevant topological invariants are not easily computable. Our abstract proof then becomes valuable and represents a significant advance compared to our earlier results.

The physical importance of the first three cases is well-known and was also discussed in our earlier papers, so let us briefly elaborate on the fourth case. H-fluxes live in degree-3 cohomology and are important in string theory, although it seldom appears in condensed matter physics. However, as noted in \cite{HMT}, they can be used to describe screw dislocations. More precisely, defects such as screw dislocations break the $\ZZ^{2d+1}$ translation symmetry of the original $(2d+1)$D Euclidean lattice underlying a $(2d+1)$D insulator, so that standard constructions like unit cells and Brillouin zones are not available. This means that topological invariants in the presence of defects are not easily defined using standard methods appealing to Bloch theory. We can circumvent this difficulty by requiring the defects to be distributed in a regular manner, allowing us to define topological invariants which are, in a precise sense, deformed versions of the usual ones arising from Bloch theory \cite{HMT}. Indeed, the non-isotropy introduced by the defects is expected to be reflected in boundary phenomena in the form of gapless modes localized along the dislocations \cite{RZV}.
\begin{figure}[h]
    \centering
    \includegraphics[width=0.5\textwidth]{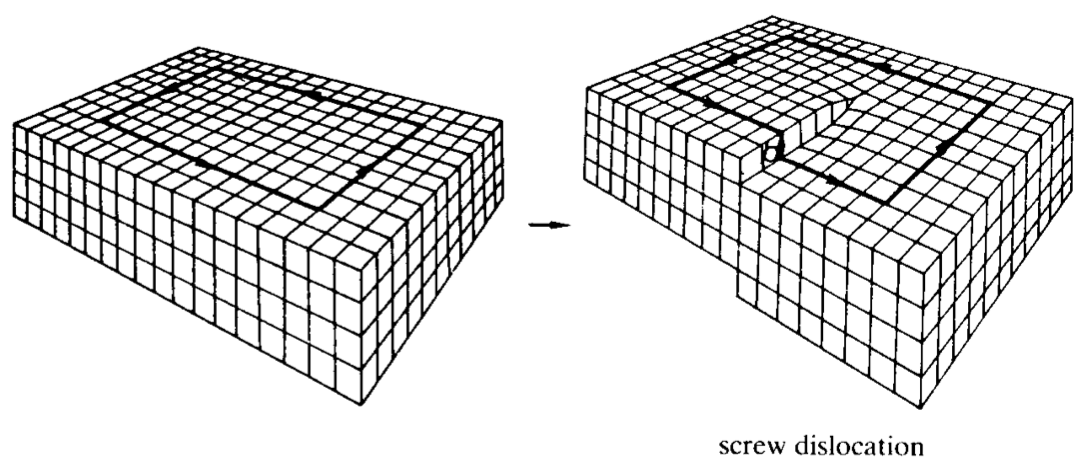}
    \caption{\em \small An elementary screw dislocation (Source: pp.\ 786 in \cite{Kleinert}) oriented in the vertical direction. A circuit of horizontal translations enclosing the dislocation ends at a lattice site which differs from the starting point by a vertical translation.}
    \label{fig:screw}
\end{figure}

There may be other applications, e.g.\ Dirac monopoles are known to be described by 3-cocycles \cite{WZ}. Although monopoles have proved elusive, analogues have been created in condensed matter. Bramwell et al. have detected them through the magnetic Wien effect in spin ices, in particular, in single crystals of cubic Dy$_2$Ti$_2$O$_7$, \cite{Br}, whilst Ray et al. have produced detailed observations in a ${}^{87}$Rb Bose--Einstein condensate, \cite{Ray}.

Torsion H-fluxes have appeared in condensed matter in earlier works  in slightly different contexts, by Freed-Moore \cite{FM} in the presence of symmetry 
and by Gawedzki \cite{Gawedzki} who considers the holonomy of a flat gerbe. In our context, possibly {\em non torsion} H-fluxes 
arise when one takes the partial T-dual, in the fibre direction, of a Heisenberg nilmanifold (associated to screw dislocated lattices mentioned earlier) 
which is thought of as a circle bundle over a torus. The T-dual is in this case another torus, with possibly non-torsion H-flux that is explained in 
Section \ref{sect:heis}. We mention that non torsion H-fluxes also arise naturally in the recent study of 4D semimetals by the 2nd and 3rd authors of this paper, in \cite{MT17}.

\subsection*{Outline}
Section \ref{sect:overview} provides an overview of various T-duality transformations, applied to the bulk-boundary correspondence in a variety of settings. The $C^*$-algebras for the bulk and boundary from both the geometrical and physical perspective are defined very generally in Section \ref{sect:BB}. Theorem \ref{thm:CTbulkboundary} and Corollary \ref{cor:restrictionequalsPV} formalizes the commutative diagram \ref{metadiagram} in this general setting. The Dixmier--Douady classes, or H-fluxes, provide the main focus in Section \ref{sect:FD}, and we prove our \cite[Conjecture 2.1]{HMT} in Theorem \ref{thm:productcase}. 
%Torsion H-fluxes have appeared in condensed matter in earlier works  in slightly different contexts, by \cite{FM} when there is a magnetic field, and by \cite{Gawedzki} who considers the holonomy of a flat gerbe. In our context, {\em non torsion} H-fluxes 
%arise when one takes the partial T-dual, in the fibre direction, of the Heisenberg nilmanifold (associated to screw dislocated lattices mentioned earlier) which is thought of as a circle bundle over a torus. The T-dual is in this case another torus, with non-torsion H-flux that is explained in 
Section \ref{sect:heis}, which contains examples and also generalisations of the discrete Heisenberg groups and Heisenberg nilmanifolds, that play an important role in \cite{HMT} and often arise as symmetry groups in physical systems. 
%We mention that non torsion H-fluxes also arise naturally in the recent study of 4D semimetals in \cite{MT17}. 
The example of the three dimensional solvable group is also worked out here. In Section \ref{sect:disorder}, we study the bulk-boundary correspondence for the higher dimensional quantum Hall effect, as well as topological insulators with disorder in both the real and complex cases, generalising the 2D results of \cite{MT2} and the higher-dimensional results without disorder in \cite{MT3}. The Appendix explains why Paschke's map \cite{Pas} is essentially the same as the Connes--Thom isomorphism \cite{C}, in both the complex and real $C^*$-algebra cases. The Paschke map implements a T-duality transformation in 1D, and we derive his explicit formula starting from abstract arguments.

%%%%%%%%%%%%%%%%%%%%%%%%%%%%%%%%%%%%
%
%                Bulk and boundary algebras
%
%%%%%%%%%%%%%%%%%%%%%%%%%%%%%%%%%%%%

\section{Overview of T-duality applied to bulk-boundary correspondence}\label{sect:overview}
Formalizing and proving commutativity of diagram \ref{metadiagram} is significant because it
\begin{itemize}\vspace{-0.6em}
    \itemsep-0.5em
    \item rigorously captures the geometric intuition behind the bulk-boundary correspondence;
    \item simplifies the complicated (momentum space) boundary map defined at the level of the physical $C^*$-algebras of observables;
    \item continues to hold even when the momentum space is \emph{noncommutative} (as in the IQHE), in the presence of \emph{disorder} \cite{MT2}, in the \emph{real} case (relevant for time-reversal symmetry), and in the \emph{parametrised} and \emph{twisted} setting of \cite{HM09,HM10} (relevant to string theory in the presence of H-flux \cite{BEM}). \vspace{-0.6em}
\end{itemize}
At the mathematical level, the meta-principle expressed by the diagram naturally generalizes a simple phenomenon already present at the level of ordinary Fourier transforms: integration in Fourier space, which can be understood topologically as a push-forward map, picks out the constant Fourier mode, so it is equivalent to the Fourier transform of a ``restriction-to-zero'' map.
A simple example of how the ordinary Fourier transform acts on an integration map goes as follows. Write $({\bf n},n_d)=n\in\ZZ^d$ and let $\iota$ be the inclusion of $\ZZ^{d-1}\hookrightarrow\ZZ^d$ taking ${\bf n}\mapsto({\bf n},0)$. 
The Fourier transform, \, $\widehat{}:f\mapsto\widehat{f}$ takes a rapidly decreasing function $f:\ZZ^d\rightarrow\CC$ to a smooth function $\widehat{f}:\widehat{\ZZ^d}=\TT^d\rightarrow\CC$, and is implemented by the Schwartz kernel $P({\bf n},{\bf k})=e^{2\pi{\rm i} {\bf n}\cdot {\bf k}},\, {\bf n}\in\ZZ^d, {\bf k}\in\TT^d$,
\begin{equation}
    \widehat{f}({\bf k})=\sum_{\bf n} P({\bf n},{\bf k})f({\bf n})=\sum_{\bf n} e^{2\pi {\rm i} {\bf n}\cdot {\bf k}}f({\bf n})\nonumber.
\end{equation}
Let $\partial:\widehat{f}\mapsto \partial\widehat{f}$ be integration along the $d$-th circle in $\TT^d$. This picks out only the part of $\widehat{f}$ with Fourier coefficient $n_d=0$, so there is a commutative diagram
\begin{equation}\label{FTdiagram}
\xymatrix{
f  \ar[d]^{\iota^*} \ar[rr]^{\sim\;}_{\widehat{}} && \widehat{f} \ar[d]^\partial \\
\iota^*f \ar[rr]^{\sim}_{\widehat{}} && \partial\widehat{f}}
\end{equation}
where $\iota^*$ is simply restriction to $n_d=0$. Our paper studies in particular, a very general noncommutative geometry analog of this phenomenon.

\subsection{Physics and geometry of the bulk-boundary correspondence}
The bulk-boundary correspondence refers, generally speaking, to the appearance of some boundary-localised phenomenon (e.g.\ quantised chiral edge currents) inherited from a corresponding bulk phenomenon (e.g.\ quantised Hall conductance). To account for such a correspondence, one begins a model of the bulk-with-boundary physical system. A natural and powerful method is to use a \emph{Toeplitz extension algebra} to ``glue'' together a boundary $C^*$-algebra of observables to a bulk $C^*$-algebra. Such a construction was used to analyse the bulk-boundary correspondence in the context of the Integer quantum Hall effect in \cite{Kellendonk1}, and generalised and explained in the monograph \cite{PSB} and Section 2 of \cite{MT3}. 

The idea behind the Toeplitz extension can be sketched quite easily. In a lattice model, a bulk Hamiltonian acts on a Hilbert space $l^2(\ZZ^d)\otimes V$, where $\ZZ^d$ labels lattice sites and $V\cong\CC^N$ captures the internal degrees of freedom. Writing $U^y, y\in\ZZ^d$ as the shift operators on $l^2(\ZZ^d)$, a lattice model Hamiltonian may be written as $H=\sum_{y\in\ZZ^d} U^y\otimes W_y$, where $W_y$ are $N\times N$ \emph{hopping matrices} satisfying $W_y^*=W_{-y}$. Let $U_i, i=1,\ldots,d$ denote the unit translation in the $i$-th direction, which together generate an algebra $C^*(\ZZ^d)=C^*(U_i,\ldots,U_d)$ isomorphic to $C(\TT^d)$ after Fourier transform. Note that $\TT^d$ here is a \emph{momentum space} torus. Then we see that $H$ is a self-adjoint element of some matrix algebra over $C^*(\ZZ^d)$, which we take as the \emph{bulk algebra} $\wh{\cb}$. If there is a boundary, transverse to the $d$-th direction say, then $U_d$ needs to be replaced by a unilateral shift $\widehat{U}_d$ satisfying $\widehat{U}_d^*\widehat{U}_d=1,\quad \widehat{U}_d\widehat{U}_d^*=1-e,$ where $e$ is projection onto $0$ in the $d$-th component. Note that $\widehat{U}_d$ is no longer unitary but is instead a Toeplitz operator, which together with $U_1,\ldots,U_{d-1}$ generates a \emph{Toeplitz} algebra $\cT$ as the bulk-with-boundary algebra, instead of $C^*(\ZZ^d)$ as in the bulk algebra. The two-sided ideal of $\cT$ generated by $e$ is a stabilised version of the algebra $C^*(U_1,\ldots,U_{d-1})$ generated by translations parallel to the boundary, and is thus identified as the \emph{boundary algebra} $\wh{\ce}$. In summary, the Toeplitz exact sequence $$
0 \longrightarrow \wh{\ce} \longrightarrow \cT \longrightarrow \wh{\cb} \longrightarrow 0,
$$ exhibits the bulk-with-boundary algebra $\cT$ as an extension of the bulk algebra by the boundary algebra. Associated to this sequence is a Toeplitz index map $K_\bullet(\wh{\cb})\xrightarrow{\partial} K_{\bullet-1}(\wh{\ce})$ which maps between the $K$-theories (topological invariants) of the bulk algebra and the boundary algebra.

If a Hamiltonian has $\ZZ^d$ translational symmetry (e.g.\ due to a periodic potential provided by an atomic lattice), one traditionally \emph{defines} bulk/boundary topological invariants in quasi-momentum space $\TT^d$. Namely, one constructs from a $\ZZ^d$-symmetric Hamiltonian, Bloch vector bundles of single-electon states over the Brillouin torus $\TT^d$, which in turn define elements in the $K$-theory of $\TT^d$. Such vector bundles correspond to projections in matrix algebras over the bulk algebra $C^*(\ZZ^d)\cong C(\TT^d)$ described earlier, and this extends in $K$-theory to $K^{-\bullet}(\TT^d)\cong K_\bullet(C^*(\ZZ^d))$; similarly for the boundary algebra. The bulk-to-boundary map $\partial$ is then essentially an integration (push-forward) map along the projection $\TT^d\rightarrow\TT^{d-1}$ \cite{MT2,MT3}. This suggests (as in \eqref{FTdiagram}) that a geometric Fourier transform (T-duality) can turn $\partial$ into a simple geometric restriction map. In more realistic models which incorporate magnetic fields and disorder, the Brillouin zone may be \emph{noncommutative} and the analogous construction yields elements in the $K$-theory of a twisted crossed product by $\ZZ^d$ \cite{B}. For instance, \emph{magnetic} translations commute only up to a phase determined by the magnetic flux, and they generate a noncommutative torus \cite{Bellissard, Kellendonk1} rather than the commutative algebra $C^*(\ZZ^d)$.

It is therefore customary to start with the physical (momentum space) picture involving $C^*$-algebras of observables, and then T-dualize to obtain a geometrical picture in real (position) space, but the exposition in this paper is a bit simpler if one starts with the geometric picture. In any case, the intuition behind the bulk-boundary correspondence in physics is largely a \emph{geometrical} one. For example, in a semiclassical picture of the IQHE, the electrons in the bulk are confined to cyclotron orbits in the plane of the sample due to the applied perpendicular magnetic field \cite{CN}. These circular orbits are intercepted by a geometrical boundary, leading to chiral edge currents, see Figure\ \ref{fig:cyclotron}.
\begin{figure}
    \centering
    \includegraphics[scale=0.3]{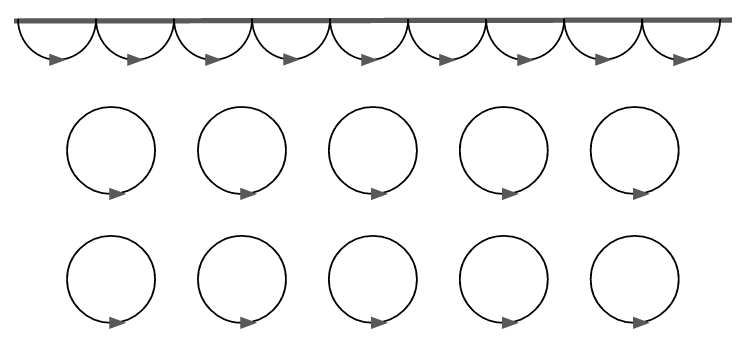}
    \caption{Cyclotron orbits intercepted by a boundary.}
    \label{fig:cyclotron}
\end{figure}
The chirality of the edge phenomena also highlights a crucial necessary ingredient in modelling bulk-boundary correspondences. Namely, although we are free to define topological invariants for the boundary system considered on its own, it is only from some intrinsic geometrical relationship between the boundary system and the bulk system that we can extract a notion of chirality. Furthermore, not all possible boundary invariants are relevant, but only those which ``lift'' to an invariant associated to the bulk system. Mathematically, this says that there should be a homomorphism mapping from bulk topological invariants to boundary ones, which need not be surjective or injective, and which is derived in a strong sense from some geometrical data about how the boundary is related to the bulk.

\subsubsection*{Exact sequences linking bulk and boundary algebras}
Geometrically we can visualise the bulk as a sort of finite or semi-infinite closed \lq\lq cylinder\rq\rq\  with cross-section the boundary, and the interior of the bulk as the corresponding open cylinder. We denote the $C^*$-algebras corresponding to the boundary, the bulk, and the interior as $\ce$, $\cb$, and $\ci$, respectively. (In the simplest cases they are just the functions on the geometric spaces defined by their spectra.) Then the interior, $\ci$, may be considered as the part of the bulk algebra \lq\lq vanishing\rq\rq\ on the boundary, and we have the exact sequence 
\begin{equation*}
0 \longrightarrow \ci \longrightarrow \cb \longrightarrow \ce \longrightarrow 0,
\end{equation*}
where $\cb \longrightarrow \ce$ is the geometrical restriction-to-boundary map. Standard theory enables us to extract from this short exact sequence an exact hexagon of the operator algebraic $K$-groups: 
$$
\begin{matrix}K_0(\ci) &\longrightarrow &K_0(\cb) 
&\longrightarrow  &K_0(\ce) \cr
 \uparrow & & & 
 &\downarrow
\cr
K_1(\ce) &\longleftarrow &K_1(\cb) &\longleftarrow &K_1(\ci). \cr\end{matrix}
$$

In the settings that we are interested in, the bulk has some $\ZZ$-symmetry transverse to the boundary, so that transversal shifts do not essentially change the geometry. For instance, one often studies Hamiltonians with the geometrical property of invariance under translations of a lattice (atomic positions) in the bulk of a material. Then the geometric relationship between the interior and boundary algebras, described in detail in the next section,  means that their $K$-theories differ only by a degree shift, 
$K_j(\ci) \cong K_{j+1}(\ce)$, so that 
\begin{equation}
\begin{matrix}K_1(\ce) &\longrightarrow &K_0(\cb) 
&\longrightarrow  &K_0(\ce) \cr
 \uparrow & & & 
 &\downarrow
\cr
K_1(\ce) &\longleftarrow &K_1(\cb) &\longleftarrow &K_0(\ce). \cr\end{matrix}\label{hexagon1}
\end{equation}

From the \emph{physical} perspective the bulk and boundary algebras of observables, denoted by $\wh{\cb}$, $\wh{\ce}$ respectively,  are the T-duals (Fourier transforms) of their geometric counterparts. (In the simplest cases, this is a is a ``momentum space'' perspective.) They are related by the fact that $\wh{\cb}$ and $\wh{\ce}$ fit in an extension,
$$
0 \longrightarrow \wh{\ce} \longrightarrow \cT \longrightarrow \wh{\cb} \longrightarrow 0,
$$
where $\cT$ is a Toeplitz-like algebra \cite{WO} associated with $\wh{\ce}$ and $\wh{\cb}$ \cite{PSB}. One can show that the $K$-theories of $\cT$ and $\wh{\ce}$ coincide, leading to another similar exact hexagon of K-groups:
\begin{equation}
\begin{matrix}K_0(\wh{\ce}) &\longrightarrow &K_0(\wh{\ce}) &\longrightarrow &K_0(\wh{\cb}) \cr
\lumap{\partial} & & & &\rdmap{\partial} \cr
K_1(\wh{\cb}) &\longleftarrow &K_1(\wh{\ce})&\longleftarrow &K_1(\wh{\ce}), \cr\end{matrix}\label{hexagon2}
\end{equation}
As explained in detail in \cite{PSB}, the Toeplitz extension algebra $\cT$ may be interpreted as the bulk-with-boundary algebra, and the vertical connecting maps $\partial$ in this diagram may be identified as the physical bulk-boundary homomorphisms.

As we explain in Section \ref{sect:BB}, T-duality simply interchanges the geometric and physical perspectives $\cb \leftrightarrow \wh{\cb}$, $\ce \leftrightarrow \wh{\ce}$. So far we have not said anything detailed about the maps in our exact hexagons, but we shall see that the geometric description is much simpler, and that the two hexagons encode the same information. In particular, the bulk-boundary homomorphism $\partial$ which we are interested in becomes the geometric restriction map under T-duality, as one might expect from heuristics.

%So far we have not said anything about the maps in our exact hexagons, and we shall see that there the geometric description is much simpler. Instead of having to check generator by generator, \cite{HMT}, the naturality of the Connes--Thom map immediately shows that the geometric restriction map $\cb\rightarrow\ce$ in $K$-theory is interchanged with the physical bulk-boundary homomorphism $\partial$.

The precise $K$-theory groups which are relevant for a given physical system vary in complexity, and accordingly, there are T-duality transformations of varying intricacy as summarized in Table \ref{Tdualitytable}. The first three rows had been dealt with in previous papers \cite{MT1,MT2,MT3}. For example, the 2D/3D Kane--Mele invariants \cite{KM} (for systems with fermionic time-reversal symmetry) live in (the reduced part of) $KQ^0(\TT^d,\varsigma)\cong KR^{-4}(\TT^d,\varsigma)$ with $d=2,3$, where $\varsigma$ is the momentum reversal involution $k\mapsto-k$, while their T-duals live in $KO^{-1}(\TT^3)$ or $KO^{-2}(\TT^2)$. The Kitaev Majorana chain \cite{Ki} invariants (with particle-hole symmetry) are in $KR^{-2}(\TT,\varsigma)$ with T-dual in $KO^{-1}(\TT)$ . In this paper, we focus on the last three rows, which subsume the first three.

\begin{table}
    \centering
\footnotesize
\begin{tabular}{p{8.5em}|p{9em}|p{10em}|p{13em}}
 \hline\hline
 Topological invariant & T-dual invariant & T-duality transformation & Physical examples \\
  \hline\hline
  $K^{-\bullet}(\TT)$ & $K^{-\bullet-1}(\TT)$ & Fourier--Mukai transform in 1D & Winding mode-momentum mode duality in string theory \\
  \hline
  $K^{-\bullet}(\TT^d)$ & $K^{-\bullet-d}(\TT^d)$ &  Fourier--Mukai transform in $d$-D & Chern insulators, Type II string theories\\
  \hline
  $KO^{-\bullet}(\TT^d)$ & $KR^{-\bullet-d}(\TT^d,\varsigma)$ & Real Fourier--Mukai transform & Time-reversal symmetric topological insulators, orientifold string theories\\
  \hline
    $K_{\bullet}(C(\TT^d))$ & $K_{\bullet+d}(A_\Theta)$ & Noncommutative T-duality & Integer quantum Hall effect\\
  \hline
    $K(O)_{\bullet}(\Ind_{\ZZ^d}^{\RR^d}(\mathcal{A},\alpha))$ & $K(O)_{\bullet+d}(\mathcal{A}\rtimes_\alpha\ZZ^d)$ & Paschke--Connes--Thom isomorphism & Disordered topological insulators\\
    \hline
  $K^{-\bullet}(X\times\TT^d, H)$;\;\;\;\;\; $H=H_1+H_2+H_3$ & $K_{\bullet+d}(\CT(Y_{H_2},H_3)_{H_1}$ & Connes--Thom isomorphism, Rieffel--Kasprzak deformation & String theory with H-flux, topological insulators with defects\\
  \hline
  \hline
    \end{tabular}
    \caption{Table of T-duality transformations}
\label{Tdualitytable}
\end{table}

\normalsize

%%%%%%%%%%%%%%%%%%%%%%%%%%%%%%%%%%%%
%
%                Bulk and boundary algebras
%
%%%%%%%%%%%%%%%%%%%%%%%%%%%%%%%%%%%%

\section{Bulk and boundary algebras}\label{sect:BB}
We proceed to construct the bulk and boundary algebras $\ce, \cb$, as well as their T-duals $\wh{\ce}, \wh{\cb}$, which appear in the long exact sequences \eqref{hexagon1}, \eqref{hexagon2} in the previous section. Let $L$, $M$, and $N= L\oplus M$ be lattices of maximal rank in the vector spaces $U\cong\RR$, $V\cong\RR^{d-1}$, and $W = U\oplus V\cong\RR^d$, with $d\geq 1$. $W$ will be associated with the bulk, $V$ with the boundary directions, and $U\cong\RR$ with the direction transverse to the boundary\footnote{Although we concentrate here on the case where the boundary has codimension 1, our notation and arguments have been written to suggest a way to include higher-codimensional boundaries. The main differences lie in the parity of the $K$-groups linked by Connes--Thom maps and the fact that there can be Mackey obstructions for the boundary.}. 

Usually, it is only the additive group structure of the spaces and lattices which appears in actions as algebra automorphisms. In all the physical examples that we will consider, $\ce$ is a $C^*$-algebra equipped with commuting actions $\tau: V\cong\RR^{d-1}\to \aut(\ce)$ and $\alpha: L\cong\ZZ\to \aut(\ce)$. Physically, $\tau$ is supposed to encode information about translational symmetries parallel to the boundary\footnote{There may be no such symmetries even if the boundary occupies $n\geq 1$ dimensions, for instance, it may not be completely straight. Generally speaking, the collection of boundary symmetries determine what topological invariants may be associated to it, and the more symmetries there are, the more boundary invariants are available.}, while $\alpha$ encodes a transverse $L$-covariance (e.g.\ with respect to the position of the boundary). We will often write $\alpha^\ell$ for the action of $\ell\in L \cong\integer$, so that $\alpha$ will be used both for the action and for its generator. 

As already noted, an obvious way to model the interior would be as an open cylinder on the boundary. We make this more precise by taking the interior algebra to be continuous $\ce$-valued functions on $(0,1)$, i.e.\ $\ci = C_0((0,1),\ce)$. Reflecting covariance with respect to $L$, the bulk algebra $\cb$ is defined to be the \emph{induced algebra} $\ind_L^{U}(\ce,\alpha)$, of continuous $\ce$-valued functions on $U$, which satisfy the equivariance condition $f(u - \ell) = \alpha^\ell [f(u)]$ for $u \in U$, $\ell \in L$; ($\ell$ counts the number of lattice spacings in $L$)\footnote{If $\alpha$ defined a character of $L$ then the \lq\lq periodicity\rq\rq\  would give Bloch-type functions.}. An equivalent definition of $\cb$ is as a \emph{mapping torus}:
\begin{equation}
T_\alpha(\ce) = \{ f\in C([0,1],\ce): f(1) = \alpha^{-1}[f(0)] \}.\label{mappingtorusequation}
\end{equation}
There is the obvious inclusion map $C_0((0,1),\ce) \overset{i}{\to} T_\alpha(\ce)\cong \ind_L^{U}(\ce,\alpha)$. There is automatically a translation action $\tau^\alpha$ of $U$ on $\cb = \ind_L^{U}(\ce,\alpha)$, which combines with the action of $V$ on $\ce$ to give an action of $W$ on $\cb$. The actions of $V$ and $W$ are used, respectively, to define the T-dual boundary and bulk algebras $\wh{\ce}$ and $\wh{\cb}$.

The basic setup is summarized as follows:
\begin{setup}\ 
\begin{itemize}
    \item $U=\RR$, $V=\RR^{d-1}$, $W=U\oplus V=\RR^d$ with standard lattices $L=\ZZ\subset U$, $M=\ZZ^{d-1}\subset V$, $N=\ZZ^d=L\oplus M\subset W$.
    \item The geometrical boundary algebra $\ce$ is a real or complex separable $C^*$-algebra.
    \item There are commuting continuous homomorphisms $\tau: V\to \aut(\ce)$ and $\alpha: L\to \aut(\ce)$ where $\aut(\ce)$ is given the topology of pointwise convergence.
    \item The geometrical bulk algebra $\cb$ is $\ind_L^{U}(\ce,\alpha)$, and so inherits from $\ce$ an action $\tau^\alpha\times\tau$ of $W=U\oplus V\cong \RR^d$.
    \item The T-dual boundary algebra is the crossed product $\wh{\ce} \coloneqq \ce \rtimes_\tau V$, while the T-dual bulk algebra is $\wh{\cb} \coloneqq \ce \rtimes_{\tau^\alpha\times\tau} W$.
\end{itemize}
\end{setup}
For notational ease, we will sometimes drop the action from the crossed product symbol when it is understood. This basic general setup will be specialized to physically interesting examples in later sections. Our arguments in this section work equally well for real or complex $\ce$ with appropriate modifications, so we will proceed with the assumption that all algebras are complex $C^*$-algebras, for which there are two complex $K$-theory groups $K_\bullet(\cdot),\,\bullet\in\ZZ_2$.

From the above setup, we have the ``geometric'' short exact sequence 
\begin{equation}
\begin{matrix}
0 &\rmap{} &\ci &\rmap{i} &\cb 
&\rmap{\varepsilon} &\ce &\rmap{}  &0\cr
 & &\| & &\| & & & &\cr
  & &C_0((0,1),\ce) & &\ind_{L}^{U}(\ce,\alpha). & & & &\cr\end{matrix}\label{geometricSES}
\end{equation}
where $\varepsilon: \ind_{L}^{U}(\ce,\alpha) \to \ce$ is just evaluation at 0, in other words, it restricts from the bulk to the boundary (at 0). Both $\ci,\cb$ inherit actions of $V$ and $L$ from $\ce$, and, by construction, the maps $i$ and $\varepsilon$ are equivariant for these actions.

%%%%%%%%%%%%%%%%%%
\begin{comment}

From the above setup, we have the ``geometric'' exact sequence 
$$
\begin{matrix}
0 &\rmap{} &\ci &\rmap{i} &\cb 
&\rmap{\varepsilon} &\ce &\rmap{}  &0\cr
 & &\| & &\| & & & &\cr
  & &C_0(U,\ce) & &\ind_{L}^{U}(\ce,\alpha). & & & &\cr\end{matrix}
$$
where, identifying $U$ with $\real$, the homomorphism $i:C_0(U,\ce) \to \ind_{L}^{U}(\ce,\alpha)$ sends a function $F$ to 
$i(F)(u) = \alpha(-\lfloor u\rfloor)[F(\cot(\pi u))]$, and the map 
$\varepsilon: \ind_{L}^{U}(\ce,\alpha) \to \ce$ is just evaluation at 0, in other words, it restricts from the bulk to the boundary. Each algebra inherits actions of $V$ and $L$ from $\ce$, and, by construction, the maps $i$ and $\varepsilon$ are equivariant for these actions.

\end{comment}
%%%%%%%%%%%%%%%%%%

The associated long exact sequence in $K$-theory is
$$
\begin{matrix}K_0(\ci) &\rmap{i_*} &K_0(\cb) 
&\rmap{\varepsilon_*} &K_0(\ce) \cr
 \lumap{} & & & 
 &\rdmap{}
\cr
K_1(\ce) &\lmap{\varepsilon_*} &K_1(\cb) &\lmap{i_*} &K_1(\ci). \cr\end{matrix}
$$
Since $\ci$ is isomorphic to the suspension $S\ce=C_0(\RR,\ce)$ of $\ce$, their $K$-groups agree apart from a degree shift. We also write $S$ for the $K$-theory isomorphisms $S:K_\bullet(S(\ce))\rightarrow K_{\bullet+1}(\ce),\,\bullet\in\ZZ_2$. With this identification, we can rewrite the diagram and the vertical connecting homomorphisms in the form
\begin{equation}\label{geometricLES}
\begin{matrix}K_1(\ce) &\rmap{i_*\circ S^{-1}} &K_0(\cb) 
&\rmap{\varepsilon_*} &K_0(\ce) \cr
 \lumap{1-\alpha^{-1}_*} & & & &\rdmap{1-\alpha^{-1}_*}\cr
K_1(\ce) &\lmap{\varepsilon_*} &K_1(\cb) &\lmap{i_*\circ S^{-1}}
 &K_0(\ce). \cr\end{matrix}
\end{equation}
where the subscript star indicates the induced maps on $K$-theory, and $(\alpha(1)^{-1})_*$ has been abbreviated to $\alpha^{-1}_*$ (c.f.\ Exercise 9.K of \cite{WO}).

We now take the crossed product of the geometric short exact sequence \eqref{geometricSES} by $V$ (an exact group), which acts on every term through its action on $\ce$. As in the basic setup, we write $\wh{\ce} = \ce \rtimes V$ for the T-dual boundary algebra.
%, and similarly $\wh{\cb} = \cb\rtimes W$ for the T-dual bulk algebra. 
The action $\alpha$ extends to an action $\wh{\alpha}$ on the crossed product $\wh{\ce}$, and $\cb \rtimes V \cong \ind_L^U(\wh{\ce},\wh{\alpha})$, since the action $\tau$ of $V$ on $\ce$ commutes with that of $L$. Similarly, $\ci\rtimes V \cong C_0((0,1),\wh{\ce}) \cong S(\wh{\ce})$, the suspension, so the effect of applying $\rtimes V$ to \eqref{geometricSES} is the short exact sequence
\begin{equation}
0 \longrightarrow S(\wh{\ce}) \rmap{\wh{i}} \ind_L^U(\wh{\ce},\wh{\alpha}) \rmap{\wh{\varepsilon}} \wh{\ce} \longrightarrow  0,\label{crossedproductSES}
\end{equation}
where the maps $\wh{i}$ and $\wh{\varepsilon}$ extend $i$ and $\varepsilon$ to the crossed product algebras. We use $S$ again for the $K$-theory isomorphisms $S:K_\bullet(S(\wh{\ce}))\rightarrow K_{\bullet+1}(\wh{\ce}),\, \bullet\in\ZZ_2$.

\begin{proposition} \label{prop:d-1CT}
The $K$-groups for $\ind_L^U(\wh{\ce},\wh{\alpha})$ and $\wh{\ce} = \ce \rtimes V$
are related by the following long exact sequence
\smallskip
$$
\begin{matrix}K_1(\wh{\ce}) &\rmap{\wh{j}_*} &K_0(\ind_L^U(\wh{\ce},\wh{\alpha})) &\rmap{\wh{\varepsilon}_*} &K_0(\wh{\ce}) \cr
\lumap{1-\wh{\alpha}^{-1}_*} & & & 
&\rdmap{1-\wh{\alpha}^{-1}_*} \cr
K_1(\wh{\ce}) &\lmap{\wh{\varepsilon}_*} &K_1(\ind_L^U(\wh{\ce},\wh{\alpha})) &\lmap{\wh{j}_*} &K_0(\wh{\ce}), \cr\end{matrix}
$$
where $\wh{j}_*\coloneqq\wh{i}_*\circ S^{-1}$. Moreover, there is a commutative diagram with exact rows:
\smallskip
\small
$$
\ldots\begin{matrix}\to &K_{\bullet+1}(\ce)  &\rmap{1-\alpha^{-1}_*} &K_\bullet(S(\ce)) &\rmap{i_*} &K_\bullet(\cb) 
&\rmap{\varepsilon_*} &K_\bullet(\ce) &\to\cr
 &\ldmap{\phi_V} & &\ldmap{\phi_V} & 
 &\ldmap{\phi_V} & 
 &\ldmap{\phi_V} &\cr
\to &K_{\bullet+d}(\wh{\ce})  &\rmap{S^{-1}\circ (1-\wh{\alpha}^{-1}_*)} &K_{\bullet+d-1}(S(\wh{\ce})) &\rmap{\wh{i}_*} &K_{\bullet+d-1}(\ind_L^U(\wh{\ce},\alpha)) 
&\rmap{\wh{\varepsilon}_*} &K_{\bullet+d-1}(\wh{\ce})  &\to  \cr\end{matrix}\ldots,
$$
\normalsize
where $\phi_V: K_\bullet(*) \mapsto K_{\bullet+d-1}(* \rtimes V)$ denotes the Connes--Thom maps \cite{C} for the various actions of $V$.
\end{proposition}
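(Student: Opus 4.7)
The plan is to derive the hexagonal exact sequence by applying $K$-theory to \eqref{crossedproductSES}, identifying the boundary map via the mapping-torus picture, and then to establish the commutative ladder by invoking naturality of the $6$-term exact sequence together with naturality of the Connes--Thom isomorphism under $V$-equivariant morphisms.

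For the first statement, I apply the standard cyclic $6$-term sequence in $K$-theory to \eqref{crossedproductSES}, and combine the boundary maps with the suspension isomorphism $S\colon K_\bullet(S(\wh\ce))\xrightarrow{\sim}K_{\bullet+1}(\wh\ce)$ to arrange the hexagon in the claimed form. Proposition \ref{prop:suspensionmappingtorus}, applied with $\wh\ce$ and $\wh\alpha$ in place of $\ce$ and $\alpha$, identifies \eqref{crossedproductSES} with the mapping-torus SES $0\to C_0((0,1),\wh\ce)\to T_{\wh\alpha}(\wh\ce)\to\wh\ce\to 0$, for which the identification of the boundary map with $S^{-1}\circ(1-\wh\alpha^{-1}_*)$ is standard (cf.\ the Exercise 9.K of \cite{WO} already referenced in the excerpt, or a direct Toeplitz-type computation).

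For the commutative ladder, the key observation is that \eqref{geometricSES} is a short exact sequence of $V$-$C^*$-algebras: since the action $\tau$ of $V$ commutes with $\alpha$, it extends canonically to $\cb=\ind_L^U(\ce,\alpha)$ and to $\ci=C_0(U,\ce)$, and the maps $i$, $\varepsilon$ are $V$-equivariant by construction. Crossing with the abelian (hence amenable) group $V$ is exact and produces precisely \eqref{crossedproductSES}. Naturality of the $6$-term sequence under morphisms of $V$-equivariant SESs, combined with naturality of $\phi_V$ under $V$-equivariant $*$-homomorphisms, yields commutativity of the squares involving $i_*$ and $\varepsilon_*$. The remaining square, comparing $1-\alpha^{-1}_*$ (geometric) with $S^{-1}\circ(1-\wh\alpha^{-1}_*)$ (T-dual), reduces to the facts that boundary maps in the $6$-term sequence are natural under morphisms of SESs and that $\phi_V\circ\alpha_* = \wh\alpha_*\circ\phi_V$, the latter because $\wh\alpha$ is the canonical extension of $\alpha$ to the crossed product $\wh\ce=\ce\rtimes_\tau V$ and the Connes--Thom isomorphism is natural with respect to automorphisms commuting with the $V$-action.

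The main obstacle I anticipate is the careful bookkeeping of the suspension and Connes--Thom degree shifts ($+1$ versus $+d-1$), to ensure that all vertical arrows target the correct $K$-group in each column, and verifying that the identification of $\ind_L^U(\wh\ce,\wh\alpha)$ with the mapping torus $T_{\wh\alpha}(\wh\ce)$ (coming from Proposition \ref{prop:suspensionmappingtorus} after crossing with $V$) is compatible, at the level of boundary maps, with the analogous identification for $\ind_L^U(\ce,\alpha)$ and $T_{\alpha}(\ce)$ before crossing with $V$.
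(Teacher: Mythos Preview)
Your proposal is correct and follows essentially the same approach as the paper's proof, which is very terse: the first hexagon is just the $6$-term exact sequence for \eqref{crossedproductSES} with the suspension identification (exactly as you say), and the commutative ladder is attributed simply to the fact that the Connes--Thom map is a natural transformation compatible with suspensions. Your version spells out the square-by-square argument in more detail, including the explicit observation that $\phi_V\circ\alpha_*=\wh\alpha_*\circ\phi_V$, but this is precisely what naturality plus suspension-compatibility of $\phi_V$ encode.
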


\begin{proof}
The first diagram is just the long exact sequence hexagon for \eqref{crossedproductSES} with $K_\bullet(S(\wh{\ce}))$ replaced by $K_{1-\bullet}(\wh{\ce})$. The second diagram expresses the fact that the Connes--Thom  map is a natural transformation compatible with suspensions \cite{C}. 
\end{proof}

By \cite[Theorem 4.1]{PR} and Green's Imprimitivity Theorem \cite{Green, Echteroff}, the T-dual bulk algebra $\wh{\cb} = \cb\rtimes W$ can also be written as
\begin{equation}
\wh{\cb} \cong (\cb\rtimes V)\rtimes U = \ind_L^U(\wh{\ce},\alpha)\rtimes U \cong
(\wh{\ce}\rtimes_\alpha L)\otimes \cpt(\cL^2(U/L)).\label{bulkalgebrastable}
\end{equation}
Since the compact operators $\cK$ do not affect the $K$-theory, we will regard the Connes--Thom maps $\phi_U$ as isomorphisms $K_{\bullet+d-1}(\ind_L^U(\wh{\ce},\wh{\alpha}))\rightarrow K_{\bullet+d}(\wh{\ce}\rtimes_{\wh{\alpha}} L)$. We also have $\phi_W=\phi_U\circ\phi_V$. 

%The Pontryagin dual groups $\wh{U}$ and $\wh{V}$ act by the dual actions on $\wh{\ce}$ while $\wh{W}$ acts on $\wh{\cb}$, and there are reciprocal lattices 
%$$
%M^\perp = \{ \chi \in \wh{V}: \chi(m) = 1, m\in M\},
%$$
%and similarly $N^\perp \subset \wh{W}$, $L^\perp \subset \wh{U}$.

\begin{theorem}\label{thm:physicalhexagon}
The physical T-dual bulk algebra $\wh{\cb}$ is stably equivalent to $\wh{\ce}\rtimes_{\wh{\alpha}} L$, and the physical and geometrical $K$-groups are related by the commutative diagram:
\smallskip
$$
\ldots \begin{matrix}\to &K_{\bullet+1}(\ce)  &\rmap{1-\alpha^{-1}_*} &K_{\bullet+1}(\ce)) &\rmap{i_*} &K_\bullet(\cb) 
&\rmap{\varepsilon_*} &K_\bullet(\ce) &\to\cr
 &\ldmap{\phi_V} & &\ldmap{\phi_V} & 
 &\ldmap{\phi_W} & 
 &\ldmap{\phi_V} &\cr
\to &K_{\bullet+d}(\wh{\ce})  &\rmap{1-\wh{\alpha}^{-1}_*} &K_{\bullet+d}(\wh{\ce}) &\rmap{\phi_U\circ\wh{j}_*} &K_{\bullet+d}(\wh{\cb}) 
&\rmap{\wh{\varepsilon}_*\circ\phi_{\wh{U}}} &K_{\bullet+d-1}(\wh{\ce})  &\to  \cr\end{matrix}\ldots,
$$
and the physical Pimsner--Voiculescu (PV) exact sequence:
\smallskip
\begin{equation}
\begin{matrix}K_1(\wh{\ce}) &\rmap{\phi_U\circ\wh{j}_*} &K_1(\wh{\cb}) &\rmap{\wh{\varepsilon}_*\circ\phi_{\wh{U}}} &K_0(\wh{\ce}) \cr
\lumap{1-\wh{\alpha}^{-1}_*} & & & 
&\rdmap{1-\wh{\alpha}^{-1}_*} \cr
K_1(\wh{\ce}) &\lmap{\wh{\varepsilon}_*\circ\phi_{\wh{U}}} &K_0(\wh{\cb}) &\lmap{\phi_U\circ\wh{j}_*} &K_0(\wh{\ce}). \cr\end{matrix} \label{PVsequence1}
\end{equation}
\end{theorem}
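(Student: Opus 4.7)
The plan is to derive the theorem from Proposition \ref{prop:d-1CT} by applying the Connes--Thom isomorphism $\phi_U$ for the $U$-action, using the decomposition $\phi_W=\phi_U\circ\phi_V$ forced by $W=U\oplus V$. The stable equivalence $\wh{\cb}\cong(\wh{\ce}\rtimes_{\wh{\alpha}}L)\otimes\cpt(\cL^2(U/L))$ is already exhibited in the paragraph preceding the statement, combining \cite[Theorem 4.1]{PR} with Green's Imprimitivity Theorem; stability of K-theory then yields $K_\bullet(\wh{\cb})\cong K_\bullet(\wh{\ce}\rtimes_{\wh{\alpha}}L)$, which I will use at the end to recognise the hexagon \eqref{PVsequence1} as a Pimsner--Voiculescu sequence.

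The main construction is to apply $\phi_U$ termwise to the bottom long exact row of Proposition \ref{prop:d-1CT}. On the middle term, $\phi_U$ is a genuine Connes--Thom isomorphism carrying $K_{\bullet+d-1}(\ind_L^U(\wh{\ce},\wh{\alpha}))$ to $K_{\bullet+d}(\wh{\cb})$, and composition with $\wh{i}_*\circ S^{-1}$ produces the arrow $\phi_U\circ\wh{j}_*$. Naturality of Connes--Thom \cite{C} with respect to the equivariant $*$-homomorphisms and the suspension guarantees exactness of the transported row. On the outer copies of $K_{\bullet+d}(\wh{\ce})$ the group $U$ acts trivially, so $\phi_U$ reduces to a degree shift realised via Bott periodicity, which is precisely what allows the arrow $1-\wh{\alpha}^{-1}_*$ to reappear verbatim. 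For the connecting arrow $\wh{\varepsilon}_*\circ\phi_{\wh{U}}$, the key input is Takai duality: $\wh{\cb}\rtimes\wh{U}$ is Morita equivalent to $\ind_L^U(\wh{\ce},\wh{\alpha})$, and under this identification $\phi_{\wh{U}}$ realises a K-theoretic inverse of $\phi_U$ (modulo the $\ZZ_2$-periodicity in the complex case), so $\wh{\varepsilon}_*\circ\phi_{\wh{U}}$ is exactly the image of $\wh{\varepsilon}_*$ under the termwise transformation.

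For the vertical columns, the identity $\phi_W=\phi_U\circ\phi_V$ follows from the behaviour of Connes--Thom under a direct-sum splitting of the acting group. This inserts $\phi_U$ after the $\phi_V$ columns of Proposition \ref{prop:d-1CT}, so that the $\cb$-column now lands in $K_{\bullet+d}(\wh{\cb})$; commutativity of the new diagram follows from commutativity of the Proposition \ref{prop:d-1CT} diagram together with functoriality of $\phi_U$. Converting the resulting long exact sequence into hexagonal form and applying $K_\bullet(\wh{\cb})\cong K_\bullet(\wh{\ce}\rtimes_{\wh{\alpha}}L)$ then yields the PV sequence \eqref{PVsequence1}. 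The main obstacle I anticipate is the precise identification of $\wh{\varepsilon}_*\circ\phi_{\wh{U}}$ with the K-theoretic inverse of $\phi_U$ composed with $\wh{\varepsilon}_*$: tracking the Takai-duality Morita equivalence together with the Bott-periodicity degree shift, and verifying that the direction of the resulting map agrees with the boundary arrow produced by the long exact sequence of \eqref{crossedproductSES}, is where the bookkeeping requires real care.
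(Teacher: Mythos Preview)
Your proposal is correct and follows essentially the same route as the paper: derive everything from Proposition~\ref{prop:d-1CT} by applying $\phi_U$ to the bottom row (using $\phi_W=\phi_U\circ\phi_V$ for the $\cb$-column), invoke the stable equivalence \eqref{bulkalgebrastable} already recorded before the statement, and use Takai duality to identify $\phi_{\wh{U}}$ as the inverse of $\phi_U$; the PV hexagon is then just the lower row rewritten. The only cosmetic difference is that the paper does not literally apply $\phi_U$ to the outer $\wh{\ce}$-terms (it simply simplifies the suspension $S(\wh{\ce})$), whereas you phrase this as ``$\phi_U$ reduces to a degree shift when $U$ acts trivially''---the effect is the same, and your anticipated bookkeeping obstacle is precisely the Takai-duality step the paper invokes.
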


\begin{proof}
The statement about $\wh{\cb}$ follows from \eqref{bulkalgebrastable}. The first diagram is a rewriting of the corresponding diagram in Proposition \ref{prop:d-1CT}, after simplifying the suspensions, applying $\phi_U$ (along with $\wh{\cb}\sim_{\rm stable}\wh{\ce}\rtimes_{\wh{\alpha}} L$), and noting that $\phi_{\wh{U}}$ (with $\wh{U}$ acting dually to the $U$) is the inverse of $\phi_U$ by by Takai's duality Theorem \cite{T}. The PV exact sequence follows from the lower line of the first diagram.
\end{proof}

\begin{remark}
In the case of real $C^*$-algebras, $\phi_{\wh{U}}$ should be replaced by $\phi_U^{-1}$ and the hexagons in Theorem \ref{thm:physicalhexagon} should be replaced by 24-cyclic long exact sequences.
\end{remark}

%\begin{proof}
%The statement about $\wh{\cb}$ was the result of Green's Imprimitivity Theorem, and the diagram just simplifies the suspensions and incorporates the information about the physical bulk algebra just derived, together with the fact that, by Takai's duality Theorem, $\phi_{\wh{U}}$ is the inverse of $\phi_U$.
%The PV sequnce follows by using the lower line of this diagram. 
%\end{proof}

Taking account of the fact that $\wh{\cb}$ is stably isomorphic to $\wh{\ce}\rtimes L$, the sequence of 
$K$-groups in \eqref{PVsequence1} is exactly the same as that in the long exact sequence for the Toeplitz extension sequence, \cite{PV,Kellendonk1}
$$
0 \longrightarrow \wh{\ce} \longrightarrow \cT \longrightarrow \wh{\ce}\rtimes L \longrightarrow 0
$$
which was the original one used to derive the PV exact sequence. Explicitly, after identifying the $K$-theory of $\cT$ with that of $\wh{\ce}$, the Toeplitz long exact sequence is another PV sequence
\begin{equation}
\begin{matrix}K_1(\wh{\ce}) &\rmap{\jmath_*} &K_1(\wh{\ce}\rtimes L) &\rmap{\partial} &K_0(\wh{\ce}) \cr
\lumap{1-\wh{\alpha}^{-1}_*} & & & 
&\rdmap{1-\wh{\alpha}^{-1}_*} \cr
K_1(\wh{\ce}) &\lmap{\partial} &K_0(\wh{\ce}\rtimes L) &\lmap{\jmath_*} &K_0(\wh{\ce}), \cr\end{matrix} \label{PVsequence2}
\end{equation}
where $\jmath$ is inclusion into the crossed product and $\partial$ is the connecting boundary map. A natural question is whether the maps in the two hexagons \eqref{PVsequence1} (the ``mapping torus PV sequence'') and \eqref{PVsequence2} (the ``Toeplitz PV sequence'') also agree, i.e.,
\begin{equation}
    \wh{\varepsilon}_*=\partial\circ\phi_U,\qquad \phi_U\circ\wh{j}_*=\jmath_*,\label{PVintertwine}
\end{equation}
where $\phi_U$ is regarded as $K_{\bullet+d-1}(\ind_L^U(\wh{\ce},\wh{\alpha}))\rightarrow K_{\bullet+d}(\wh{\cb})\xrightarrow{\sim}K_{\bullet+d}(\wh{\ce}\rtimes_{\wh{\alpha}} L)$.

This question is related to Paschke's succinct distillation, \cite{Pas}, of a key construction in the proof of Connes' Thom Theorem, \cite{C}. Connes had initially shown, using his Thom isomorphism, that the $K$-theory of a mapping torus for a $\ZZ$-algebra $\mathcal{A}$ is isomorphic to the degree-shifted $K$-theory of the crossed product of $\mathcal{A}$ by $\ZZ$. This led to an alternative derivation of the ``mapping torus PV sequence'' without recourse to the Toeplitz long exact sequence, and is essentially what we used to obtain \eqref{PVsequence1}. On the other hand, Paschke constructs explicitly (without using Connes' Thom isomorphisms), for each $C^*$-algebra $\mathcal{A}$ with a $L=\ZZ$ action $\alpha'$, isomorphisms
$\gamma_{\alpha'}^\bullet:K_{\bullet}(\ind_L^U(\mathcal{A},\alpha'))\rightarrow K_{\bullet+1}(\mathcal{A}\rtimes_{\alpha'} L)$ which intertwine the ``Toeplitz PV sequence'' with the ``mapping torus PV sequence''. 

Replacing $\mathcal{A}\leftrightarrow\wh{\ce}$ and $\alpha'\leftrightarrow\wh{\alpha}$, the Paschke isomorphisms $\gamma_{\wh{\alpha}}^\bullet:K_\bullet(\ind_L^U(\wh{\ce},\wh{\alpha})) = K_\bullet(T_{\wh{\alpha}}(\wh{\ce}))\rightarrow K_{\bullet+1}(\wh{\ce}\rtimes_{\wh{\alpha}} L)$ are exactly such that $\wh{\varepsilon}_*=\partial\circ\gamma_{\wh{\alpha}}^\bullet$ and $\gamma_{\wh{\alpha}}^\bullet\circ\wh{j}_*=\jmath_*$. 
%In particular, Pashcke shows how one can reconstruct a Toeplitz extension. 
In the Appendix, we prove Theorem \ref{PaschkeequalsConnesThom} which says that Paschke's intertwining map is exactly the Connes--Thom isomorphism composed with the isomorphism from the imprimitivity theorem, thus our $\phi_U$, like $\gamma_{\wh{\alpha}}^\bullet$, does indeed satisfy \eqref{PVintertwine}. We also rederive Paschke's explicit formula starting from a simple axiomatic characterization of such maps.

The discussion in this section may be summarised as:

\begin{theorem} {(\bf c.f.\ \cite{C,Pas}.)}\label{thm:CTbulkboundary}
The $K$-groups for the T-dual boundary algebra $\wh{\ce} = \ce \rtimes V$ and the T-dual bulk algebra $\wh{\cb} = \cb \rtimes W\sim_{\rm stable}\wh{\ce}\rtimes_{\wh{\alpha}} L$
are related by the following Pimsner--Voiculescu sequence
\smallskip
$$
\begin{matrix}K_1(\wh{\ce}) &\rmap{\jmath_*} &K_1(\wh{\cb}) &\rmap{\partial} &K_0(\wh{\ce}) \cr
\lumap{1-\wh{\alpha}^{-1}_*} & & & 
&\rdmap{1-\wh{\alpha}^{-1}_*} \cr
K_1(\wh{\ce}) &\lmap{\partial} &K_0(\wh{\cb}) &\lmap{\jmath_*} &K_0(\wh{\ce}). \cr\end{matrix}
$$
This sequence can either be obtained as the T-dual (under the $V$ action) of the geometric long exact sequence sequence \eqref{geometricLES}, or as the sequence derived from the Toeplitz extension $$0 \longrightarrow \wh{\ce} \longrightarrow \cT \longrightarrow \wh{\ce}\rtimes L \longrightarrow 0.$$ 
\end{theorem}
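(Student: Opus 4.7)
The plan is to assemble the statement by chaining together the results already established in this section, namely Proposition \ref{prop:suspensionmappingtorus}, Proposition \ref{prop:d-1CT}, Theorem \ref{thm:physicalhexagon}, and the Paschke/Connes--Thom comparison result referenced as Theorem \ref{PaschkeequalsConnesThom} in the Appendix. Concretely, I would proceed along two tracks and then identify them.

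\emph{Track 1 (geometric $\to$ T-dual).} Start from the geometric short exact sequence
$0\to\ci\to\cb\to\ce\to 0$ and its $K$-theoretic hexagon \eqref{geometricLES}. Tensor with $V$ (exact, since $V\cong\RR^{d-1}$) to obtain the exact sequence \eqref{crossedproductSES}, whose associated hexagon is given in Proposition \ref{prop:d-1CT}. Now apply the Connes--Thom isomorphism $\phi_U$ to pass from $\ind_L^U(\wh{\ce},\wh{\alpha})$ to $\wh{\ce}\rtimes_{\wh{\alpha}} L$, using \cite[Theorem 4.1]{PR} and Green's imprimitivity theorem in the form $\wh{\cb}\cong\ind_L^U(\wh{\ce},\wh{\alpha})\rtimes U\sim_{\rm stable}(\wh{\ce}\rtimes_{\wh{\alpha}} L)\otimes\cpt$, as well as $\phi_{\wh U}=\phi_U^{-1}$ via Takai duality. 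This produces the ``mapping torus PV sequence'' \eqref{PVsequence1} from Theorem \ref{thm:physicalhexagon}, realised intrinsically as the T-dual of the geometric hexagon.

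\emph{Track 2 (Toeplitz).} Independently, from the Toeplitz extension
\[
0\longrightarrow \wh{\ce}\longrightarrow \cT\longrightarrow \wh{\ce}\rtimes L\longrightarrow 0,
\]
one obtains the classical Pimsner--Voiculescu hexagon \eqref{PVsequence2}, in which $\jmath_*$ is induced by inclusion into the crossed product and $\partial$ is the boundary map. The underlying sequences of $K$-groups in \eqref{PVsequence1} and \eqref{PVsequence2} are literally the same (both are the PV hexagon for the $L$-action on $\wh{\ce}$), so the only question is whether the horizontal arrows also agree, i.e.\ whether the intertwining relations \eqref{PVintertwine} hold.

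\emph{Identification of the two tracks.} This last point is the main obstacle, and is precisely what is settled by Paschke's explicit isomorphism $\gamma_{\wh\alpha}^\bullet$, which by construction intertwines the Toeplitz and mapping-torus PV sequences. The Appendix theorem (Theorem \ref{PaschkeequalsConnesThom}) asserts that $\gamma_{\wh\alpha}^\bullet$ coincides with the composition of $\phi_U$ with the imprimitivity isomorphism used in Track 1. Feeding this identification into the diagram produces $\wh{\varepsilon}_*\circ\phi_{\wh{U}}=\partial$ and $\phi_U\circ \wh{j}_*=\jmath_*$, thereby showing that the hexagon built in Track 1 is identical, arrow-by-arrow, to the one built in Track 2. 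Both descriptions therefore yield the single PV sequence displayed in the statement, completing the proof. The conceptual content of the theorem is contained in this identification, and everything else is naturality and bookkeeping of Connes--Thom and Takai dualities.
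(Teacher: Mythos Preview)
Your proposal is correct and mirrors the paper's own argument essentially step for step: the theorem is presented in the paper as a summary of the section, obtained by combining Proposition~\ref{prop:d-1CT} and Theorem~\ref{thm:physicalhexagon} (the ``mapping torus'' PV sequence) with the Toeplitz PV sequence, and then invoking Theorem~\ref{PaschkeequalsConnesThom} to establish the intertwining relations~\eqref{PVintertwine}. One small terminological slip: in Track~1 you should say ``take the crossed product by $V$'' rather than ``tensor with $V$'', and Proposition~\ref{prop:suspensionmappingtorus} is not actually needed here.
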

In particular, we also have:
\begin{corollary}\label{cor:restrictionequalsPV}
With the assumptions on $\ce$ as in the basic setup, the following diagram commutes,
$$
\xymatrix{
K_\bullet(\cb)  \ar[d]^{\varepsilon_*} \ar[rr]^{\sim\;}_{T_W} && K_{\bullet+d}(\wh{\cb})\ar[d]^\partial \\
K_\bullet(\ce) \ar[rr]^{\sim}_{T_V} && K_{\bullet+d-1}(\wh{\ce}),} 
$$
where $T_W$ is T-duality with respect to the $W$ action on $\cb$ and $T_V$ is T-duality with respect to the $V$ action on $\ce$, implemented by the Connes--Thom isomorphisms. This shows that T-duality interchanges the geometrical restriction map with the Toeplitz bulk-boundary map.
\end{corollary}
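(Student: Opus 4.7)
The plan is to assemble the corollary from two ingredients already in place: the naturality square of Theorem \ref{thm:physicalhexagon} and the identification of the Toeplitz boundary map $\partial$ with $\wh{\varepsilon}_*\circ\phi_{\wh{U}}$ supplied by Theorem \ref{thm:CTbulkboundary} (which in turn rests on the Appendix comparison of Paschke's map with the Connes--Thom isomorphism). With $T_W \coloneqq \phi_W$ and $T_V \coloneqq \phi_V$, the target identity is the single equation $\partial\circ\phi_W=\phi_V\circ\varepsilon_*$ on $K_\bullet(\cb)$.

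First, I would read off the rightmost square of the commutative ladder in Theorem \ref{thm:physicalhexagon}, which asserts
\[
\phi_V\circ\varepsilon_* \;=\; \bigl(\wh{\varepsilon}_*\circ\phi_{\wh{U}}\bigr)\circ\phi_W.
\]
This is exactly the naturality of the Connes--Thom isomorphism applied to the $V$-equivariant restriction homomorphism $\varepsilon:\cb\to\ce$, repackaged using Takai duality to expose the factor $\phi_{\wh{U}}$ along the transverse direction. Up to this point the only input is the standard machinery of crossed products and Connes--Thom, together with the stable equivalence $\wh{\cb}\sim_{\rm stable}\wh{\ce}\rtimes_{\wh{\alpha}}L$ of Theorem \ref{thm:physicalhexagon}.

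Next, I would insert the identification $\partial=\wh{\varepsilon}_*\circ\phi_{\wh{U}}$ on $K_{\bullet+d}(\wh{\cb})\cong K_{\bullet+d}(\wh{\ce}\rtimes_{\wh{\alpha}}L)$. Equivalently, composing on the right with $\phi_U$, this is the first half of \eqref{PVintertwine}, namely $\wh{\varepsilon}_*=\partial\circ\phi_U$. Substituting into the displayed equation above yields $\phi_V\circ\varepsilon_*=\partial\circ\phi_W$, which is precisely the commutativity of the square in the corollary.

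The nontrivial step is the second one: reconciling the ``mapping torus'' Pimsner--Voiculescu sequence produced from the geometric short exact sequence via $V$-crossed product plus $U$-Connes--Thom, with the classical ``Toeplitz'' Pimsner--Voiculescu sequence whose connecting map is $\partial$. This coincidence of connecting maps is exactly what Paschke's construction provides, and the Appendix's Theorem \ref{PaschkeequalsConnesThom} identifies Paschke's intertwiner with $\phi_U$ (modulo the imprimitivity theorem identification), so $\partial$ and $\wh{\varepsilon}_*\circ\phi_{\wh{U}}$ agree on the nose. Assuming that cited result, the corollary is a formal consequence of the preceding theorem; in the real case one simply replaces $\phi_{\wh{U}}$ by $\phi_U^{-1}$ and the hexagon by the 24-term long exact sequence, and the structure of the argument is unchanged.
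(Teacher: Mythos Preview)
Your proposal is correct and follows essentially the same route as the paper: you extract the rightmost square of the ladder in Theorem~\ref{thm:physicalhexagon} to get $\phi_V\circ\varepsilon_*=(\wh{\varepsilon}_*\circ\phi_{\wh{U}})\circ\phi_W$, and then invoke \eqref{PVintertwine} (justified via Theorem~\ref{PaschkeequalsConnesThom}) to replace $\wh{\varepsilon}_*\circ\phi_{\wh{U}}$ by $\partial$. This is exactly how the paper arrives at the corollary as a summary of the preceding discussion.
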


%%%%%%%%%%%%%%%%%%%%%%%%%%%%%%%%%%%%
%
%                Flux and deformations
%
%%%%%%%%%%%%%%%%%%%%%%%%%%%%%%%%%%%%

\section{Flux and deformations}\label{sect:FD}

%%%%%%%%%%%%%%%%%%%%%%%%%%%%%%%%%%%%
%
%                Product case
%
%%%%%%%%%%%%%%%%%%%%%%%%%%%%%%%%%%%%

%Suppose that in addition to the basic setup in the previous section, $\ce$ is a complex stable continuous trace algebra, with spectrum $E$ and Dixmier--Douady invariant $H(\ce)$. The actions $\alpha$ and $\beta$ on $\ce$ induce actions of $V$ and $L$ on $E$, and we assume that the lattice $M \subset V$ is the subgroup fixing $E$, so that $E$ is a principal  $V/M=\TT^{d-1}$-bundle over a base $X = E/V$. We shall also assume that $L$ acts trivially on $E$, which, by a result of Rosenberg, \cite[Corollary 2.2]{Ro}, \cite{EW}, means that its action on $\ce$ is locally inner, $\alpha^\ell \sim \ad(\rho(\ell))$ on neighbourhoods. Also, by \cite[Cor. 6.21]{RW}, $\cb = \ind_L^U(\ce,\alpha)$ is a continuous trace algebra with spectrum $B = (U\times E)/L = U/L\times E$, and this can be regarded as a principal $W/N=\TT^d$-bundle over $B/W = E/V = X$.

To connect these results with \cite[Conjecture 2.1]{HMT} we need to consider $K$-theory twisted by a Dixmier--Douady 3-form (a H-flux). There, we started with the \emph{product bundle} $B=X\times\TT^d$, on which there is a 3-form H-flux $H\in H^3(B,\ZZ)$. We will assume in this section that $X$ is locally compact second-countable Hausdorff (thus paracompact). We regard $B$ as a trivial principal torus bundle, with a fibrewise action of $\TT^d=W/N$. The $H$-twisted $K$-theory of $B$ can be identified with the $K$-theory of the associated stable complex continuous-trace algebra $\cb=\CT(B,H)$ with Dixmier--Douady invariant $H(\cb)=H$, where we recall that $\cb$ is \emph{locally} isomorphic to $C_0(B,\mathcal{K})$ but \emph{globally} twisted according to $H$. We can dimensionally reduce $H$ by the K\"{u}nneth theorem, writing $H=H_1+H_2+H_3$ where 
$$H_j\in H^j(X,H^{3-j}(\TT^d,\ZZ))=H^j(X,\ZZ)\otimes H^{3-j}(\TT^d,\ZZ),\qquad j=1,2,3,$$
thus the $j$-th component of $H$ has $j$ ``legs'' on the base $X$ and $3-j$ ``legs'' on the fibers $\TT^d$. Generally, there should also be a $H_0$ term, but we assume that it vanishes so as to avoid problems with nonassociativity when passing to the T-dual. Note that $H_3$ is supported entirely on the base, and we use the same symbol whether we regard it as living on $X$ or on a torus bundle over $X$. As explained in \cite{HM10} (see also Sect.\ \ref{sec:geometricheisenberg}), $\cb=\CT(B,H)$ has an action of $W=\RR^d$ covering the $\TT^d$ action on $B$. Its T-dual under this $W$ action is the parametrised deformation quantization $\wh{\cb}=\CT(\wh{B},H_3)_{\sigma}$, where the T-dual spectrum $\wh{B}_{H_2}$ is a $\TT^d$ bundle over $X$ with Chern class $H_2$, and $\sigma$ is the parametrised deformation built out of $H_1$.

Consider the subtorus $V/M=\TT^{d-1}\subset W/N$. Under the inclusion $\iota:E=X\times\TT^{d-1}\rightarrow X\times\TT^d=B$, there is a restriction map in twisted $K$-theory,  $\iota^*:K^{-\bullet}(B,H)\rightarrow K^{-\bullet}(E,\iota^* H)$. We can also dimensionally reduce the restricted H-flux
$$\iota^*H=(\iota^*H)_1+(\iota^*H)_2+(\iota^*H)_3, \qquad (\iota^*H)_j\in H^j(X,\ZZ)\otimes H^{3-j}(\TT^{d-1},\ZZ),$$
and it is easy to see that $(\iota^*H)_j=\iota^*H_j$. The T-dual of $\ce=\CT(E,\iota^*H)$ is then $\wh{\ce}=\CT(\wh{E},H_3)_{\iota^*\sigma}$, where $\wh{E}$ is a $\TT^{d-1}$ bundle with Chern class $\iota^*H_2$ and $\iota^*\sigma$ is the restricted deformation parameter associated to $\iota^*H_1$. With the $d$-th circle $\TT^{(d)}$ identified as $U/L$, we can identify, up to Morita equivalence, $\wh{\cb}=\CT(\wh{B},H_3)_{\sigma}$ with a crossed product of $\wh{\ce}$ by $L$ \cite{HMT}, then there is a PV boundary map $\partial:K_\bullet(\wh{\cb})\rightarrow K_{\bullet+1}(\wh{\ce})$.

The conjecture in \cite{HMT}, written in the above notation, has the condensed form of a commutative diagram
$$
\xymatrix{
K_\bullet(\cb)  \ar[d]^{\iota^*} \ar[rr]^{\sim\;}_{T} && K_{\bullet+d}(\wh{\cb})\ar[d]^\partial \\
K_\bullet(\ce) \ar[rr]^{\sim}_{T} && K_{\bullet+d-1}(\wh{\ce}).} 
$$
For the full detailed diagram, we write $\wh{B}_{H_2}\equiv \wh{B}$, $\wh{E}_{\iota^*H_2}\equiv\wh{E}$, and $q,q_a$ for the respective bundle projections. We also use $T_d, T_{d-1}$ instead of just $T$ to denote the T-duality transformations with respect to $\TT^d$ and $\TT^{d-1}$ respectively. We will now prove our conjecture, which we present as the following theorem:

\begin{theorem}\label{thm:productcase}
Let $X$ be a locally compact second countable Hausdorff space, $B=X\times\TT^d, E=X\times\TT^{d-1}$, $\iota$ be the inclusion $E\rightarrow B$, $H\in H^3(B,\ZZ)$, and let the T-dual spectra $\wh{B}, \wh{E}$ be as in the preceding paragraphs. The following diagram commutes,
\begin{equation}
\xymatrix{
K^{-\bullet}(B,\, H_1 + H_2 +H_3)  \ar[d]^{\iota^*} \ar[rr]^{\sim\;}_{T_d\;} && K_{\bullet+d}\left(\CT(\wh{B}_{H_2},\, q^*(H_3))_{\sigma}\right) \ar[d]^\partial \\
K^{-\bullet}(E,\, \iota^*H_1+\iota^*H_2 +H_3) \ar[rr]^{\sim}_{T_{d-1}} && K_{\bullet+d-1}\left(\CT(\wh{E}_{\iota^*H_2},\, q_a^*(H_3))_{\iota^*\sigma}\right)}  \label{productconjectureequation}
\end{equation}
showing that the bulk-boundary homomorphism $\partial$ is trivialised by T-duality in this parametrised context.
\end{theorem}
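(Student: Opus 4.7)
The plan is to derive Theorem \ref{thm:productcase} as a specialisation of Corollary \ref{cor:restrictionequalsPV} to continuous-trace algebras with H-flux, after matching the setting of the theorem with the Basic Setup of Section \ref{sect:BB} and identifying the resulting T-duals with the parametrised deformation algebras appearing in \eqref{productconjectureequation}.

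First I would split off the final circle. Writing $\TT^d=\TT^{d-1}\times\TT^{(d)}$ with $\TT^{(d)}=U/L$, the bulk space decomposes as $B=E\times\TT^{(d)}$, a trivial principal $\TT^{(d)}$-bundle over $E$. Using the K\"unneth decomposition $H=H_1+H_2+H_3$, further sorted according to how many legs of each component lie along $\TT^{(d)}$, I would identify $\cb=\CT(B,H)$ with the induced algebra $\ind_L^U(\ce,\alpha)$, where $\ce=\CT(E,\iota^*H)$ and the $\ZZ$-automorphism $\alpha$ records the monodromy of the underlying bundle gerbe around the $\TT^{(d)}$-factor. The $\RR^{d-1}$-action $\tau$ on $\ce$ that covers the fibrewise $\TT^{d-1}$-translations on $E$ commutes with $\alpha$ because the actions of $\TT^{d-1}$ and $\TT^{(d)}$ on $B$ are independent, so $(\ce,\tau,\alpha)$ fulfils the hypotheses of the Basic Setup with $\cb\cong\ind_L^U(\ce,\alpha)$.

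Next I would identify the T-duals. By the parametrised and twisted T-duality developed in \cite{HM09,HM10}, the $V$-crossed product $\wh{\ce}=\ce\rtimes_\tau V$ is Morita equivalent to $\CT(\wh{E}_{\iota^*H_2},q_a^*H_3)_{\iota^*\sigma}$, and the $W$-crossed product $\wh{\cb}$ is Morita equivalent to $\CT(\wh{B}_{H_2},q^*H_3)_\sigma$, so the Connes--Thom T-duality maps $T_V,T_W$ of Corollary \ref{cor:restrictionequalsPV} realise $T_{d-1}$ and $T_d$ respectively. The evaluation-at-zero map $\varepsilon:\cb\to\ce$ corresponds on spectra to the inclusion $\iota:E\hookrightarrow B$, so $\varepsilon_*$ coincides with the pullback $\iota^*$ in twisted $K$-theory. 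Substituting these identifications into the square of Corollary \ref{cor:restrictionequalsPV} yields the commutativity of \eqref{productconjectureequation}.

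The main obstacle is the induced-algebra presentation $\CT(B,H)\cong\ind_L^U(\ce,\alpha)$ in the general twisted case. When $H$ has no leg along $\TT^{(d)}$, $\alpha$ is trivial and the statement reduces to a twisted K\"unneth-type identification $\cb\cong C(\TT^{(d)},\ce)$. In general, the components of $H$ with legs along $\TT^{(d)}$ must be unpacked using the parametrised gerbe descriptions of \cite{HM09,HM10} to produce $\alpha$, and one must verify at the $C^*$-algebra level (not merely on spectra) that the resulting $\alpha$ commutes with $\tau$ and gives the correct induced algebra presentation of $\cb$; once this is settled, the theorem is a direct consequence of Corollary \ref{cor:restrictionequalsPV} applied to $(\ce,\tau,\alpha)$.
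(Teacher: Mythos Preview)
Your approach is essentially the paper's: reduce to Corollary \ref{cor:restrictionequalsPV} by presenting $\cb=\CT(B,H)$ as $\ind_L^U(\ce,\alpha)$ for $\ce=\CT(E,\iota^*H)$ with commuting $V$- and $L$-actions, and then identify the crossed-product T-duals with the parametrised deformation algebras. The obstacle you flag at the end is exactly where the paper does its work, and your sketch of how to resolve it (unpack the gerbe data to produce $\alpha$ and check commutation with $\tau$) is not quite what the paper does.

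Rather than describing $\alpha$ directly on $\ce$ as ``monodromy of the bundle gerbe around $\TT^{(d)}$'' and then arguing that it commutes with the $\RR^{d-1}$-action $\tau$, the paper drops one level further: it constructs a commuting $\ZZ^d=L\oplus M$-action $\alpha\times\beta$ on a stable continuous-trace algebra $\cX$ with spectrum $X$ and flux $H_3$, engineered via elementary ``flux-generation'' moves so that $\Ind_M^V(\cX,\beta)\cong\CT(E,\iota^*H)$ and $\Ind_N^W(\cX,\alpha\times\beta)\cong\CT(B,H)$ (this is Proposition \ref{lemma:fluxgeneration} and Corollary \ref{cor:existenceofNaction}). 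The commutation of $\alpha$ with $\tau=\tau^\beta$ is then automatic, because $\alpha$ and $\beta$ already commute on $\cX$, and the induced-algebra presentation $\cb\cong\ind_L^U(\ce,\alpha)$ follows from inducing in stages. So your outline is correct; what is missing is precisely this flux-generation construction over $X$, which both manufactures $\alpha$ and makes the commutativity with $\tau$ a triviality rather than something to be verified.
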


\begin{proof}
Let $L=\ZZ$, $M=\ZZ^{d-1}$, $N=L\oplus M$ and $U=\RR$, $V=\RR^{d-1}$, $W=U\oplus V$ as before, and $\TT=\TT^{(d)}=U/L$, $\TT^{d-1}=V/M$, $\TT^d=W/N$. By ``flux-generation'' through Corollary \ref{cor:existenceofNaction} below, there is an $N=L\oplus M$-action $\alpha\times\beta$ on an algebra $\cX$ such that $\CT(E, \iota^*H)=\Ind_M^V(\cX,\beta)$ and $\CT(B, H)=\Ind_N^W(\cX,\alpha\times\beta)$. Then $\ce\equiv\CT(E, \iota^*H)$ has the usual translation action $\tau^\beta$ by $V$ which commutes with $\alpha$, while $\cb\equiv\CT(B, H)\cong\Ind_L^U(\ce,\alpha)$; also $\cb$ has the translation action $\tau^\alpha\times\tau^\beta$ of $W$, and evaluation at 0 is the restriction $\iota^*$ to $\ce$. We are thus in the situation of Corollary \ref{cor:restrictionequalsPV}, and the commutativity of \eqref{productconjectureequation} follows.

\end{proof}

Before proving Corollary \ref{cor:existenceofNaction}, which was used in the proof of the above Theorem, we recall some general facts about the spectra and fluxes of induced algebras under spectrum-preserving actions \cite{RR}. By a result of Rosenberg, \cite[Corollary 2.2]{Ro}, \cite{EW}, a spectrum-preserving action of $N=\ZZ^d$ on a continuous-trace algebra $\cX$ with spectrum $X$ is locally inner, so for instance, $n\in N$ is implemented by $\ad(\rho(n))$ with $\rho(n)$ unitary on neighbourhoods. Furthermore, the induced algebra is again a continuous-trace algebra with spectrum $X\times\TT^d$ as a (trivial) principal $\TT^d$ bundle, with the induced $\RR^d$ translation action lifting the $\TT^d$ action (Prop.\ 3.2 of \cite{RR}). There is also a precise relationship between the Dixmier--Douady invariants (fluxes) for $\cX$, and the induced algebra given by Prop.\ 3.4 of \cite{RR}. We will only need the special case of a $\ZZ$-action, which will be iterated. 

A spectrum-preserving automorphism $\alpha\in\aut_{C_0(X)}(\cX)$ on $\cX$ is not usually globally inner; rather, Phillips and Raeburn gave an exact sequence characterising this failure when $\cX$ is stable:
$$
0 \longrightarrow \inn(\cX) \longrightarrow \aut_{C_0(X)}(\cX) \stackrel{\zeta}{\longrightarrow} H^2(X,\integer)\longrightarrow 0.
$$
Thus the class $\zeta(\alpha) \in H^2(X,\integer)$ measures the obstruction to realising $\alpha$ as a global inner automorphism $\ad (\rho)$. To find $\zeta(\alpha)$ \cite[Theorem 5.42]{RW} one chooses an open covering $\{X_j\}$ for $X$ such that for each $x\in X_j$ one can write the spectrum-preserving transformation $\alpha = \alpha(1)$ as $\ad(\rho_j(x))$. One then finds transition functions between $\rho_j$ and $\rho_k$, and proceeds exactly as in the line bundle classification to get the class $\zeta(\alpha)\in H^2(X,\integer)$. In effect, $\zeta(\alpha)$ is just the Chern class $c_1(\cL)$ of the line bundle $\cL$ whose fibre over $x\in X_j$ is just $\complex \rho_j(x)$, whose nonzero elements are precisely those implementing $\alpha$ in $X_j$, whether unitary or not. (The unitary implementers form a principal U(1)-bundle.) Furthermore, the fluxes $H(\cX)\in H^3(X,\ZZ)$ and $H(\Ind_\ZZ^\RR(\cX,\alpha))\in H^3(X\times\TT,\ZZ)$ are related by (\cite[Cor. 3.5]{RR})
\begin{equation}
H(\Ind_\ZZ^\RR(\cX,\alpha)) = \pi_X^*H(\cX) + z\cup \zeta(\alpha),\label{fluxdiscrepancy}
\end{equation}
where $\pi_X$ is the projection $X\times\TT\rightarrow X$, and $z$ is the generator of $H^1(\TT)$ (the standard volume form $d\theta$ if we use differential forms).

We will use these results to produce an \emph{arbitrary} flux $H\in H^3(X\times\TT^d,\ZZ)$ (with $H_0=0$) by inducing from a suitable $N=\ZZ^d$ action on $\cX$. This gives an explicit construction for the abstract result given in \cite{ENO} Lemma 8.1. We start with the following three basic ways to ``generate flux'' by induction.
\begin{enumerate}[{M}1:]
    \item Let $f:X\rightarrow\TT=K(\ZZ,1)$ be a continuous function representing a class $\eta\in H^1(X,\ZZ)$. Define a $\ZZ^2$ action $\beta_1$ on $C_0(X,\cK)$ as follows: the first group generator acts on the copy of $\cK=\cK(L^2(\TT))$ at $x$ by conjugation by the operator of multiplication by the identity function $\TT\rightarrow\TT$; the second generator acts by conjugation by the operator of translation by $f(x)$. These two operators commute up to a scalar, giving a projective representation of $\ZZ^2$, so their adjoint actions on $\cK$ commute. Then 
    \begin{equation}
    \Ind_{\ZZ^2}^{\RR^2}(C_0(X,\cK), \beta_1)\cong \CT(X\times\TT^2,\,\eta\cup z_1\cup z_2),\label{basicflux1}
    \end{equation}
    where $z_1,z_2$ are the generators of $H^1(\TT^{(1)},\ZZ)$ and $H^1(\TT^{(2)},\ZZ)$ respectively. This is basically the example in Section 5 of \cite{MR04a} (see also \cite{MR06}).
    \item Let $\mu\in H^2(X,\ZZ)$, and let $\beta_2\in\aut_{C_0(X)}(C_0(X,\cK))$ such that $\zeta(\beta_2)=\mu$ (this exists since $C_0(X,\cK)$ is stable and $\zeta$ is surjective). Then 
    \begin{equation}
        \Ind_\ZZ^\RR(C_0(X,\cK), \beta_2)\cong\CT(X\times\TT,\,z\cup\mu).\label{basicflux2}
    \end{equation}
    \item Let $H_3\in H^3(X)$, then the balanced tensor product $C_0(X,\cK)\otimes_X \CT(X,H_3)$ still has spectrum $X$ and flux $H_3$. This is essentially Brauer multiplication (c.f.\ Section 6.1 of \cite{RW}).
\end{enumerate}

Recall that $H^*(\TT^d,\ZZ)$ is isomorphic as a graded ring to the exterior algebra (over $\ZZ$) on $d$ generators $z_i, i=1,\ldots,d$, with $H^1(\TT^{(i)})=\ZZ[z_i]$. Thus we can further decompose each $H_j$ in $H=H_1+H_2+H_3\in H^3(X\times\TT^d,\ZZ)$ into a $\ZZ$-linear combination of elementary tensors:
\begin{align}
    H_1&=\sum_{1\leq k<l\leq d}\eta_{kl}\cup z_k\cup z_l,\qquad \eta_{kl}\in H^1(X,\ZZ), \label{cohomologydecomposition1}\\
    H_2&=\sum_{k=1}^d z_k\cup\mu_k, \qquad\qquad\qquad \mu_k\in H^2(X,\ZZ),\label{cohomologydecomposition}
\end{align}
while $H_3$ needs no further decomposition. The restrictions $\iota^*H_1, \iota^*H_2$ are given by a similar decomposition except that $1\leq k<l\leq d-1$.

\begin{proposition}[c.f..\ 8.1 of \cite{ENO}]\label{lemma:fluxgeneration}
Let $H\in H^3(X\times\TT^d,\ZZ)$ with vanishing $H_0$ component, and let $\cX$ be a stable continuous-trace algebra with spectrum $X$ and flux $H_3$. There is an action $\gamma$ of $\ZZ^d$ on $\cX$ such that $\CT(X\times\TT^d, H)\cong\Ind_{\ZZ^d}^{\RR^d}(\cX,\gamma)$.
\end{proposition}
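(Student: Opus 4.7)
The plan is to realise each of the three K\"unneth-type summands of $H$ using the basic ``flux-generating'' mechanisms M1, M2, M3 separately, and then to combine them into a single $\ZZ^d$-equivariant algebra by Brauer multiplication (balanced tensor product over $X$).

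First, I would use the decomposition \eqref{cohomologydecomposition1}--\eqref{cohomologydecomposition} to write $H = H_3 + \sum_{k} z_k\cup\mu_k + \sum_{k<l}\eta_{kl}\cup z_k\cup z_l$, thereby separating the obstruction into a pure base part, $d$ M2-type contributions (one per factor circle), and $\binom{d}{2}$ M1-type contributions (one per pair of circles). Next, I would construct, on stable continuous-trace algebras with spectrum $X$, three families of $\ZZ^d$-actions: (a) on $\cX$ itself (which carries the $H_3$ contribution by M3), the trivial $\ZZ^d$-action; (b) for each $k$ with $\mu_k\neq 0$, an auxiliary stable CT algebra with zero flux carrying an action whose $k$-th generator implements the M2 construction with $\zeta=\mu_k$, all other generators acting trivially; (c) for each pair $k<l$ with $\eta_{kl}\neq 0$, an auxiliary stable CT algebra with zero flux carrying a $\ZZ^d$-action whose $k$-th and $l$-th generators form the M1 projective pair associated to $\eta_{kl}$, all other generators acting trivially.

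Then I would form the balanced tensor product over $X$ of all these algebras, equipped with the diagonal $\ZZ^d$-action $\gamma$. Since each auxiliary factor is a stable CT algebra over $X$ with trivial flux, absorption by stability identifies the whole tensor product (equivariantly) with $\cX$, producing the desired action $\gamma$ on $\cX$. To compute the Dixmier--Douady invariant of $\Ind_{\ZZ^d}^{\RR^d}(\cX,\gamma)$, the plan is to exploit the fact that induction is compatible with balanced tensor product over the base (in the sense of Morita equivalence preserving Dixmier--Douady classes), so that the induced algebra decomposes as the Brauer product of $\Ind_{\ZZ^d}^{\RR^d}$ applied to each auxiliary factor. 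By M3, M2 and M1 respectively, these contribute $H_3$, $\sum_k z_k\cup\mu_k$, and $\sum_{k<l}\eta_{kl}\cup z_k\cup z_l$ to the total flux; since Brauer multiplication adds Dixmier--Douady classes, these sum to exactly $H$.

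The hardest part will be the flux computation in the final step: the Phillips--Raeburn formula \eqref{fluxdiscrepancy} is stated only for single $\ZZ$-inductions, so obtaining the $H_1$-type cross-terms requires either an iterated application in which the non-commutativity of local M1-implementers is carefully tracked (each individual automorphism being globally inner, yet the pair producing a nontrivial $\zeta$-class on the first induced algebra, which then pairs with the second $z$-generator), or a direct appeal to the $\ZZ^d$-generalisation of Prop.\ 3.4 of \cite{RR}. A secondary, more technical, obstacle is ensuring that the stable absorption of the auxiliary tensor factors into $\cX$ can be performed $\ZZ^d$-equivariantly, so that one really does obtain a $\ZZ^d$-action on $\cX$ itself rather than on a larger algebra only Morita equivalent to it.
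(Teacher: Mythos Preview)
Your proposal is correct and follows essentially the same route as the paper: decompose $H$ via K\"unneth, realise each $H_1$- and $H_2$-term on its own $C_0(X,\cK)$ factor using M1 and M2, carry $H_3$ on a $\CT(X,H_3)$ factor via M3, and take the diagonal $\ZZ^d$-action on the balanced tensor product. The paper sidesteps both of your stated technical worries by (i) simply \emph{assuming} from the outset that $\cX$ already contains sufficiently many $C_0(X,\cK)$ tensor factors (legitimate since stable continuous-trace algebras over $X$ are classified by their flux), and (ii) computing the induced flux by iterating the one-step formula \eqref{fluxdiscrepancy}, using that $\Ind_{\ZZ^d}^{\RR^d}$ is an iterated $\Ind_\ZZ^\RR$, rather than invoking compatibility of induction with Brauer multiplication.
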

\begin{proof}
We may assume that $\cX$ has a sufficiently large number of balanced tensor product factors $C_0(X,\cK)$ with $\cK=\cK(L^2(\TT))$, together with one factor of $\CT(X,H_3)$ contributing the $H_3$ flux component. Each flux component of $H$ in \eqref{cohomologydecomposition1}-\eqref{cohomologydecomposition} can be produced by defining a suitable action of the generators of $\ZZ^d$ on copies of $C_0(X,\cK)$ in $\cX$, using method M1 or M2 above. For example, $\eta_{kl}\cup z_k\cup z_l$ is produced by letting the $k$-th and $l$-th generators of $\ZZ^d$ act non-trivially only on one $C_0(X,\cK)$ factor as in M1; similarly $z_k\cup\mu_k$ is produced by letting the $k$-th generator act on (a different copy of) $C_0(X,\cK)$ as in M2. Compose all of these actions for each $k$ to obtain $d$ commuting $\ZZ$ actions on $\cX$. The result follows from \eqref{fluxdiscrepancy}, \eqref{basicflux1}, \eqref{basicflux2}, and the fact that inducing from $\ZZ^d$ to $\RR^d$ is the same as iteratively inducing from $\ZZ$ to $\RR$.
\end{proof}

\begin{corollary}\label{cor:existenceofNaction}
Let $H\in H^3(X\times\TT^d,\ZZ)$ with vanishing $H_0$ component, and let $\iota$ be the inclusion $X\times\TT^{d-1}\rightarrow X\times\TT^d$. There is an action $\alpha\times\beta$ of $\ZZ\oplus \ZZ^{d-1}$ on a $C^*$-algebra $\cX$, such that $\CT(X\times\TT^{d-1},\iota^*H)\cong\Ind_{\ZZ^{d-1}}^{\RR^{d-1}}(\cX,\beta)$ and $\CT(X\times\TT^d, H)\cong\Ind_{\ZZ^d}^{\RR^d}(\cX,\alpha\times\beta)$.
\end{corollary}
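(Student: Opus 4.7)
The plan is to invoke Proposition \ref{lemma:fluxgeneration} while carefully partitioning the flux contributions between the $d$-th generator of $\ZZ^d$ and the remaining $\ZZ^{d-1}$ subgroup. First, I would decompose $H$ via \eqref{cohomologydecomposition1}--\eqref{cohomologydecomposition} and separate each elementary piece into those not involving $z_d$ (which together recover $\iota^*H$) and those that do involve $z_d$, namely $\eta_{kd}\cup z_k\cup z_d$ for $k<d$ and $z_d\cup\mu_d$. This matches the way restriction along $\iota$ kills precisely the terms carrying a $z_d$ factor.

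Next, following the construction in the proof of Proposition \ref{lemma:fluxgeneration}, I would take $\cX$ to be a stable continuous-trace algebra over $X$ with flux $H_3$, built as a balanced tensor product over $X$ of one $\CT(X,H_3)$ factor together with enough copies of $C_0(X,\cK)$, one for each elementary flux term in $H_1+H_2$. I would then declare the $k$-th generator of $\ZZ^d$ to act non-trivially only on those $C_0(X,\cK)$ factors carrying a flux term that contains $z_k$, via methods M1 or M2 as appropriate. Setting $\beta$ equal to the resulting action of the first $d-1$ generators and $\alpha$ equal to that of the $d$-th, the two commute because on any shared factor the relevant operators are either built from disjoint generators or commute as in the projective M1 representation. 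Proposition \ref{lemma:fluxgeneration} then delivers $\Ind_{\ZZ^d}^{\RR^d}(\cX,\alpha\times\beta) \cong \CT(X\times\TT^d, H)$.

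For the companion identification $\Ind_{\ZZ^{d-1}}^{\RR^{d-1}}(\cX,\beta)\cong\CT(X\times\TT^{d-1},\iota^*H)$, I would observe that $\beta$ acts trivially on every factor introduced solely to realise a $z_d$-containing flux term. Since those factors are copies of the stable algebra $C_0(X,\cK)$ balanced over $X$, by method M3 their presence does not alter spectrum or Dixmier--Douady class. On the remaining factors, $\beta$ is set up exactly by the $\ZZ^{d-1}$-version of the recipe of Proposition \ref{lemma:fluxgeneration} for the flux $\iota^*H$, so iterating \eqref{fluxdiscrepancy} yields the claimed flux $\iota^*H_1+\iota^*H_2+H_3=\iota^*H$ over the spectrum $X\times\TT^{d-1}$.

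The main obstacle is making this last step rigorous: one needs the compatibility of $\Ind$ with balanced $C_0(X)$-tensor products when one factor carries the trivial action, so that the ``inert'' $C_0(X,\cK)$ pieces can be commuted through induction without altering the spectrum or Dixmier--Douady invariant. This is a routine but slightly technical check building on standard properties of induced $C_0(X)$-algebras, and once in hand both isomorphisms follow from the single $\cX$ and single commuting pair $(\alpha,\beta)$ produced by the construction.
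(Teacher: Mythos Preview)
Your proposal is correct and follows exactly the approach the paper intends: the paper states the corollary with no proof, leaving it as an immediate consequence of the explicit construction in Proposition~\ref{lemma:fluxgeneration}, and you have supplied precisely the details that make that deduction work---namely, partitioning the elementary flux terms according to whether they involve $z_d$, and observing that the first $d-1$ generators of the $\ZZ^d$-action built there generate exactly $\iota^*H$ upon induction. The technical point you flag about inert $C_0(X,\cK)$ factors is handled by the iterated use of \eqref{fluxdiscrepancy} together with the fact (from \cite[Prop.~3.2]{RR}) that inducing a spectrum-preserving action yields a continuous-trace algebra over $X\times\TT^{d-1}$; the trivial-action factors contribute nothing to $\zeta$ and hence nothing to the flux.
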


\begin{remark}
The concrete $\ZZ^d$ action $\gamma$ constructed in Prop.\ \ref{lemma:fluxgeneration} can be modified by inner automorphisms of $C_0(X,\cK)$ without changing the resulting flux of the induced algebra $\Ind_{\ZZ^d}^{\RR^d}(\cX,\cdot)$. If $\CT(X\times\TT^d, H)$ is given together with a particular $\RR^d$-action covering that on $X\times\TT^d$, there is a spectrum-preserving $\ZZ^d$ action $\gamma'$ on $C_0(X,\cK)=\CT(X\times\{0\},H|_{X\times\{0\}})$ by restriction, and $\Ind_{\ZZ^d}^{\RR^d}(\cX,\gamma')\cong\CT(X\times\TT^d, H)$ as $\RR^d$-algebras \cite{ENO}. The actions $\gamma'$ and $\gamma$ are exterior equivalent, so the crossed products $C_0(X,\cK)\rtimes_{\gamma'}\ZZ^d$ and $C_0(X,\cK)\rtimes_{\gamma}\ZZ^d$, which are (Morita equivalent to) the T-duals of $\Ind_{\ZZ^d}^{\RR^d}(\cX,\gamma')$ and $\Ind_{\ZZ^d}^{\RR^d}(\cX,\gamma)$, are isomorphic.
\end{remark}

%%%%%%%%%%%%%%%%%%%%%%%%%%%%%%%%%%%%
%
%                GENERAL HEISENBERG AND SOLVABLE GROUPS
%
%%%%%%%%%%%%%%%%%%%%%%%%%%%%%%%%%%%%

\section{Examples: Heisenberg and solvable groups}\label{sect:heis}
Consider $\ce = \ind_M^V(\cX, \beta)$, where $\cX$ is a continuous-trace algebra on $X$, with spectrum preserving actions $\beta$ and $\alpha$ of $M$ and $L$ respectively; thus $\ce$ is continuous-trace with spectrum $X\times\TT^{d-1}$. We assume that that the $H_3$ flux component of $\ce$ vanishes, then $\cX$ is stably equivalent to $C_0(X,\cpt)$ (the continuous-trace algebras used in Section \ref{sect:FD} satisfy these conditions, for instance). The T-dual boundary algebra is then $\wh{\ce} = \ind_M^V(\cX, \beta)\rtimes V \cong (\cX\rtimes_\beta M)\otimes \cpt(L^2(V/M))$. The bulk algebra is 
$$
\cb = \ind_L^U(\ce,\alpha) \cong \ind_N^W(\cX, \alpha\times \beta),
$$
and, with $(\alpha\times\beta)(n) = \ad(\rho(n))$ locally, we find Mackey obstructions $\sigma=\{\sigma_x\}$ which corresponds to an $H_1$ component of flux. The T-dual bulk algebra is now
$$
\wh{\cb} \cong (\wh{\ce}\rtimes_\alpha L)\otimes \cpt 
\cong (\cX\rtimes_{\alpha\times\beta} N)\otimes \cpt,
$$
and that is Morita equivalent to the sections of a bundle of twisted group C$^*$-algebras 
$C^*(N,\sigma_x)$ over $X$ (see also Sect.\ \ref{sec:geometricheisenberg}).

The basic example of $B=\TT^3=S^1\times\TT^2$, with Dixmier--Douady invariant the volume form, was considered in detail in \cite{HMT}. In this case, we have $X = S^1$, $U \cong V \cong \real$ and $L \cong M \cong \integer$. The boundary algebra $\ce$ has spectrum $E=S^1\times\TT=\torus^2$, and there can be no Dixmier--Douady obstruction, since the dimension is less than 3, so it will be Morita equivalent to $C(\torus^2)$. The T-dual bulk algebra $\wh{\cb}$ is Morita equivalent to the group $C^*$-algebra of a discrete Heisenberg group, which can be viewed geometrically as a bundle of noncommutative tori over $S^1$. The T-dual boundary algebra $\wh{\ce}$ is the group $C^*$-algebra of a $\ZZ^2$ subgroup of the Heisenberg group, and is isomorphic to the algebra of functions on a 2-torus dual to $\torus^2$. 

{\bf Section outline:} Sect.\ \ref{sec:universalheisenberg} contains some generalities about multipliers on $\ZZ^d$ and $\RR^d$ and their relationship to (abstract) Heisenberg groups. It may be skipped to to get directly to Sect.\ \ref{sec:heisenbergtduality}, which deals with concrete generalizations of the above basic example to higher-dimensions. A new analogous example involving the solvable group is then given in Sect.\ \ref{sect:solv}.

\subsection{Universal Heisenberg groups and algebras}\label{sec:universalheisenberg}

As in \cite{HMT}, we note that the principal role of $X$ here is to \emph{parametrise} the $\mathrm{U}(1)$-valued multipliers  (or 2-cocycles) of $N$. Now the cohomology class $[\sigma]\in H^2(G,\TT)$ of a \emph{single} multiplier $\sigma$ on a separable locally compact abelian group $G$, is characterised by the antisymmetric bicharacter $\wt{\sigma}(g_1,g_2)
= \sigma(g_1,g_2)/\sigma(g_2,g_1)$, \cite{Kl,H}. Note that if $\sigma$ is already an antisymmetric bicharacter, then $\wt{\sigma}=\sigma^2$. 

For a vector group $W$ we can write the bicharacter as $\wt{\sigma} = \exp(2\pi is)$, where $s$ is a skew symmetric bilinear form, which can be thought of as a linear functional $s\in \wedge^2\wh{W}$ on the antisymmetric tensors $\wedge^2W$ such that $s(w_1\wedge w_2) = s(w_1,w_2)\in \real$; thus $H^2(W,\TT)\cong\wedge^2\wh{W}$. Each class $[\sigma]\in H^2(W,\TT)$ has a representative multiplier $\wt{\sigma}^{\frac{1}{2}}={\rm exp}(\pi i s)$ which is also an antisymmetric bicharacter, corresponding to the skew-symmetric form $(w_1,w_2)\mapsto \frac12 s(w_1, w_2)$.

The Schur multiplier (or Moore representation group) is a \lq\lq universal\rq\rq\ central extension, from which all others can be obtained, which in this case is a central extension of $W$
by $\wedge^2W$, given by the exact sequence
\begin{equation}
1 \longrightarrow \wedge^2W \longrightarrow {\rm Heis}(W) \longrightarrow W  \longrightarrow 1.\label{moorerepresentation}
\end{equation}
It is a generalised (real) Heisenberg group, Heis$(W) = W\times \wedge^2W$ whose product can be given explicitly as
$$
(w_1,p_1)\cdot(w_2,p_2) = (w_1+w_2,p_1+p_2+w_1\wedge w_2).
$$ 

Each $s\in\wedge^2\wh{W}$ determines a central extension ${\rm Heis}_s(W)$ of $W$ by $\RR$: take $W\times\real$ with product 
$$(w_1,p_1)\cdot(w_2,p_2) = (w_1+w_2,p_1+p_2+ s(w_1\wedge w_2)), \qquad W_i\in W, p_j\in\RR.$$
Let $K_s$ be the kernel of $s$. The following commutative diagram with exact rows and columns (for non-zero $s$) shows how this relates to \eqref{moorerepresentation}:
\begin{equation}
\begin{matrix}
     &   &1  &   &1   &   &   &   &   \cr
     &   &\downarrow   &   &\downarrow   &   &  &   &   \cr
 1  &\longrightarrow  &K_s  &\xlongequal{\quad}  & K_s  &\longrightarrow  &1 &  & \cr
     &   &\downarrow   &   &\downarrow   &   &\downarrow   &   &   \cr
 1  &\longrightarrow  &\wedge^2 W  &\longrightarrow  &{\rm Heis}(W)  &\longrightarrow  &W &\longrightarrow  &1\cr
     &   &\rdmap{s}   &   &\rdmap{1\times s}   &   &\|   &   &   \cr
 1  &\longrightarrow  &\real  &\longrightarrow  &{\rm Heis}_s(W)  &\longrightarrow  &W &\longrightarrow  &1\cr
     &   &\downarrow   &   &\downarrow   &   & \downarrow   &   &   \cr
     &   &1   &   &1   &   & 1  &   &   \cr
\end{matrix} \label{realheisenbergdiagram}
\end{equation}
Although $ks$ and $s$ define different group extensions, up to isomorphism ${\rm Heis}_s(W)$ is independent of which non-zero multiple of $s$ has been chosen. If we took $ks$ instead of $s$, then the map $(w,p) \mapsto (w,kp)$ gives an isomorphism of Heis$_s(W)$ and  Heis$_{ks}(W)$ as groups.

Just as for vector groups $W$, the cohomology classes $[\sigma]$ of multipliers for the lattice group $N\subset W$ are parametrised by  the antisymmetric bicharacter $\wt{\sigma}(n,n^\prime) = \sigma(n,n^\prime)/ \sigma(n^\prime,n)$, and we can write $\wt{\sigma}={\rm exp}(2\pi is)$ for some skew-symmetric $s\in\wedge^2 \wh{W}$ as before. Those $s$ which restrict to integer-valued forms on $\wedge^2 N\subset\wedge^2 W$ give trivial multipliers for $N$, and they form a lattice $(\wedge^2 N)^\perp$ in $\wedge^2 \wh{W}$, so we have $H^2(N,\TT)\cong \wedge^2\wh{W}/(\wedge^2N)^\perp\cong\wedge^2\wh{N}$.

\begin{remark}
Simply choosing $\sigma^\prime = \wt{\sigma}^{\frac12}$ as a representative multiplier for the class labelled by $\wt{\sigma}$ can lead to discontinuities when dealing with \emph{parametrised} multipliers. For example, when $m,n \in N = \integer^2$, and $X = S^1$ (identified with complex numbers unit modulus), the family of multipliers $\{\sigma_x\}_x\in S^1$ with $\sigma_x(m,n) = x^{m_1n_2}$ gives $\wt{\sigma}_x(m,n) = x^{m_1n_2-n_1m_2}$. We could take the pointwise representative multipliers to be $\sigma^\prime_x(m,n) = x^{\frac12(m_1n_2-n_1m_2)}$, but $x^{\frac12}$ is discontinuous, unless lifted to a  double cover. We give a construction of representative multipliers in Appendix \ref{sec:discretemultiplier} which avoids this problem.
\end{remark}

The Schur multiplier for $N$ is a discrete Heisenberg group Heis$(N)$, which is a central extension
\begin{equation}
1 \longrightarrow \wedge^2N \longrightarrow {\rm Heis}(N) \longrightarrow N  \longrightarrow 1,\label{universaldiscreteheisenberg}
\end{equation}
with the multiplication rule
$$
(n_1,c_1)(n_2,c_2) = (n_1+n_2,c_1+c_2+n_1\wedge n_2).
$$

The central extensions of $N$ by $\ZZ$ are classified by $H^2(N,\ZZ)\cong (\wedge^2 N)^\perp$, and a cohomology class can be labelled by a homomorphism $s:\wedge^2 N\rightarrow\ZZ$. Unlike the vector group case, $s$ need not be surjective onto $\ZZ$. For example, $s$ and $ks$ have the same kernel $K_s$ for any $0\neq k\in\ZZ$, but they have different images in $\ZZ$. This has consequences for the resulting group extensions ${\rm Heis}_s(N)$, which are defined as the central extensions
$$
1 \longrightarrow \ZZ \longrightarrow {\rm Heis}_s(N) \longrightarrow N  \longrightarrow 1,
$$
with multiplication
$$
(n_1,c_1)(n_2,c_2) = (n_1+n_2,c_1+c_2+s(n_1\wedge n_2)).
$$ 
Instead of \eqref{realheisenbergdiagram}, we have the commuting diagram
$$
\begin{matrix}
     &   &1  &   &1   &   &   &   &   \cr
     &   &\downarrow   &   &\downarrow   &   &  &   &   \cr
 1   &\longrightarrow  &K_s  &\xlongequal{\quad}  & K_s  &\longrightarrow  &1 &  & \cr
     &   &\downarrow   &   &\downarrow   &   &\downarrow   &   &   \cr
 1   &\longrightarrow  &\wedge^2 N  &\longrightarrow  &{\rm Heis}(N)  &\longrightarrow  &N &\longrightarrow  &1\cr
     &   &\rdmap{s}   &   &\rdmap{s'\coloneqq 1\times s}   &   &\|   &   &   \cr
 1   &\longrightarrow  &\ZZ  &\longrightarrow  &{\rm Heis}_s(N)  &\longrightarrow  &N &\longrightarrow  &1\cr
     &   &\downarrow   &   &\downarrow   &   & \downarrow   &   &   \cr
  1   &\longrightarrow   &\ZZ/({\rm Im}\, s)   & \xrightarrow{\,\;\sim\;}  &{\rm Heis}_s(N)/({\rm Im}\, s')    & \longrightarrow  & 1  &   &   \cr
      &   &\downarrow   &   &\downarrow   &   &   &   &   \cr
     &   &1   &   &1   &   &   &   &   \cr
\end{matrix} 
$$
As an example, take $N = \integer^d\times \integer^d$ and we write $n = (a,b) \in \integer^d\times \integer^d$. The standard integer Heisenberg group ${\rm Heis}_{s_0}(N)$ is given by $s_0((a_1,b_1),(a_2,b_2)) = a_1\cdot b_2 \in \integer$. It has a faithful representation, the {\it standard representation}, given by
\begin{equation}
\pi((a,b),c) = 
\left(\begin{matrix}
1 &a &c\cr
0 &1_d &b\cr
0 &0 &1\cr
\end{matrix} \right).\label{integerheisrep}
\end{equation} 
For a non-zero integer $k$, the central extension ${\rm Heis}_{ks_0}(N)$ is not isomorphic as a group to ${\rm Heis}_{s_0}(N)$. But if we rescale to
$$
1 \longrightarrow k^{-1}\ZZ\longrightarrow k^{-1}{\rm Heis}_{ks_0}(N) \longrightarrow k^{-1}N  \longrightarrow 1,
$$
there is an isomorphism $\psi_k:{\rm Heis}_{s_0}(N)  \to k^{-1}{\rm Heis}_{ks_0}(N)$ given by 
$\psi_k(n,c) = (k^{-1}n,k^{-1}c)$, since
\begin{eqnarray*}
\left(k^{-1}n_1,k^{-1}c_1\right)\left(k^{-1}n_2,k^{-1}c_2\right) &=& \left(k^{-1}(n_1+n_2) , k^{-1}(c_1+c_2) + ks_0(k^{-1}n_1,k^{-1}n_2)\right)\\ &=& \left(k^{-1}(n_1+n_2),k^{-1}(c_1+c_2 + s_0(n_1,n_2))\right).
\end{eqnarray*}
We can also consider ${\rm Heis}_{s_0}(k^{-1}N)$ for the rescaled lattice $k^{-1}N = \{w\in W: kw \in N\}$, which is 
$$
1 \longrightarrow k^{-2}\ZZ \longrightarrow {\rm Heis}_{s_0}(k^{-1}N) \longrightarrow k^{-1}N  \longrightarrow 1
$$
with multiplication 
\begin{eqnarray*}
\left(k^{-1}n_1,k^{-2}c_1\right)\left(k^{-1}n_2,k^{-2}c_2\right) &=& \left(k^{-1}(n_1+n_2), k^{-2}(c_1+c_2) + s_0(k^{-1}n_1,k^{-1}n_2)\right)\\
&=& \left( k^{-1}(n_1+n_2),k^{-2}(c_1+c_2 + s_0(n_1,n_2))\right),
\end{eqnarray*}
where the centre is now $k^{-2}\ZZ$ since $s_0:\wedge^2(k^{-1}N)\rightarrow k^{-2}\ZZ$. Then we also have Heis$_{s_0}(N) \cong {\rm Heis}_{s_0}(k^{-1}N)$, with isomorphism $\phi_k$ given by $\phi_k(n,c) = (k^{-1}n,k^{-2}c)$.

We have therefore proved the following: 

\begin{theorem}
With the above notation we have isomorphisms
$$
{\rm Heis}_{s_0}(k^{-1}N)\cong {\rm Heis}_{s_0}(N)  \cong k^{-1}{\rm Heis}_{ks_0}(N).
$$
\end{theorem}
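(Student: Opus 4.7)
The plan is to exhibit explicit group isomorphisms in both directions and verify the multiplication rules carry across correctly, using the bilinearity of $s_0$. Since both claimed isomorphisms are between central extensions of $N$ (or of the rescaled lattice $k^{-1}N$) by a cyclic group, the natural approach is to write down set-theoretic bijections and show that they intertwine the two Heisenberg products; injectivity and surjectivity are then immediate because each map has an obvious inverse given by rescaling.

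First, for the right-hand isomorphism I would define
$$\psi_k:{\rm Heis}_{s_0}(N)\to k^{-1}{\rm Heis}_{ks_0}(N),\qquad \psi_k(n,c):=(k^{-1}n,k^{-1}c),$$
which is clearly bijective onto $k^{-1}N\times k^{-1}\ZZ$. To verify it is a homomorphism, compute the product $\psi_k(n_1,c_1)\psi_k(n_2,c_2)$ inside $k^{-1}{\rm Heis}_{ks_0}(N)$: the cocycle term contributes $ks_0(k^{-1}n_1,k^{-1}n_2)=k^{-1}s_0(n_1,n_2)$ by the bilinearity of $s_0$, and the result equals $\psi_k\bigl((n_1,c_1)(n_2,c_2)\bigr)$. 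This is exactly the displayed calculation above the theorem, so nothing new is needed.

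Next, for the left-hand isomorphism I would define
$$\phi_k:{\rm Heis}_{s_0}(N)\to {\rm Heis}_{s_0}(k^{-1}N),\qquad \phi_k(n,c):=(k^{-1}n,k^{-2}c),$$
again a bijection (the center of the target is $k^{-2}\ZZ$ since $s_0$ takes $\wedge^2(k^{-1}N)$ into $k^{-2}\ZZ$). The homomorphism check is the same bilinearity identity: $s_0(k^{-1}n_1,k^{-1}n_2)=k^{-2}s_0(n_1,n_2)$, so the second coordinate of the product in the target equals $k^{-2}(c_1+c_2+s_0(n_1,n_2))=\phi_k(n_1+n_2,c_1+c_2+s_0(n_1,n_2))$, matching the product in the source. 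This is also already laid out in the computation immediately preceding the theorem statement.

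There is no serious obstacle here: the only thing to verify is that $s_0$'s $\ZZ$-bilinearity survives rescaling, which it manifestly does. The theorem is essentially the assertion that different normalizations of the integer Heisenberg cocycle produce groups that become identified once one is allowed to rescale either the lattice of translations or the central $\ZZ$. I would present the proof as a short two-line verification of the two multiplication identities, plus the observation that $(n,c)\mapsto(kn,kc)$ and $(n,c)\mapsto(kn,k^2 c)$ give the inverses of $\psi_k$ and $\phi_k$ respectively, completing the two isomorphisms.
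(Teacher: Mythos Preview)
Your proposal is correct and is essentially identical to the paper's own argument: the paper defines the same maps $\psi_k(n,c)=(k^{-1}n,k^{-1}c)$ and $\phi_k(n,c)=(k^{-1}n,k^{-2}c)$ and verifies the multiplication rule via the bilinearity of $s_0$, precisely as you do. Indeed, the theorem in the paper is simply a summary of the computations displayed just before it, which you have faithfully reproduced.
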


From this we deduce that ${\rm Heis}_{ks_0}(N)$ is isomorphic to $k{\rm Heis}_{s_0}(k^{-1}N)$, whose elements are $k(k^{-1}n,k^{-2}c) = (n,k^{-1}c)$. This enables us to construct a matrix representation of  Heis$_{ks_0}(\integer^{2d})$  \cite{HMT}, from the standard representation $\pi$ of ${\rm Heis}_{s_0}(\integer^{2d})$: writing $n = (a,b) \in \integer^d\times \integer^d$ and $s_0((a_1,b_1),(a_2,b_2)) = a_1\cdot b_2$ as before, we obtain the faithful representation 
\begin{equation}
\pi_k((a,b),c) = 
\left(\begin{matrix}
1 &a &c/k\cr
0 &1_d &b\cr
0 &0 &1\cr
\end{matrix} \right)\label{kintegerheisrep}
\end{equation} 
of Heis$_{ks_0}(\integer^{2d})$.

With the concrete realizations \eqref{integerheisrep}-\eqref{kintegerheisrep}, it is easy to see that there is an exact sequence
$$
1\longrightarrow {\rm Heis}_{s_0}(N)\longrightarrow {\rm Heis}_{ks_0}(N)\longrightarrow \ZZ_k\longrightarrow 1.
$$

\subsubsection{Geometric picture of Heisenberg-type algebras}\label{sec:geometricheisenberg}
We can understand ${\rm Heis}(N)$ more geometrically by analysing the characters of its centre $\wedge^2 N$, which is $H^2(N,\TT)\cong \wedge^2 \wh{N}\cong\wedge^2 \TT^d$. Then $C^*({\rm Heis}(N))$ is, after stabilization, a ``universal bundle'' $\mathscr{A}$ of noncommutative tori over $\wedge^2 \TT^d$ --- the fibre over $\Theta\in\wedge^2 \TT^d$ is Morita equivalent to the noncommutative $d$-torus\footnote{$A_\Theta$ can be defined as the universal $C^*$-algebra generated by $d$ unitaries $U_1,\ldots,U_d$ subject to $U_kU_j={\rm exp}(2\pi i\Theta_{jk})U_jU_k$, where we regard $\Theta$ as a skew-symmetric $d\times d$ matrix.} $A_\Theta$.

\begin{theorem}[\cite{ENO,HM09}]
Suppose that $\cb = \ind_N^W(\cX,\ad(\nu_*))$, with $\cX = C_0(X,\cpt)$, where for each $x\in X$, $\nu_x$ is a projective representation. Regard $\nu$ as a continuous map $X\rightarrow H^2(N,\TT)$. Then the T-dual algebra $\wh{\cb}=\cb\rtimes W$ is Morita equivalent to the pullback $\nu^*(\mathscr{A})$.
\end{theorem}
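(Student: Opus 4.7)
My plan is to reduce the statement to the standard identification of the crossed product of $\cpt$ by the adjoint action of a projective representation with a stabilized twisted group algebra, and then to globalize this identification fiberwise over $X$.

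First, I would apply Green's imprimitivity theorem exactly as in the displayed equation following Proposition \ref{prop:d-1CT}: since $\cb = \ind_N^W(\cX, \ad(\nu_*))$ with $N \subset W$ a cocompact lattice, we get
\[
\wh{\cb} \;=\; \cb \rtimes W \;=\; \ind_N^W(\cX, \ad(\nu_*)) \rtimes W \;\cong\; (\cX \rtimes_{\ad(\nu_*)} N) \otimes \cpt(L^2(W/N)).
\]
The compact-operator factor is irrelevant for Morita equivalence, so the problem reduces to identifying $\cX \rtimes_{\ad(\nu_*)} N = C_0(X, \cpt) \rtimes_{\ad(\nu_*)} N$ with $\nu^*(\mathscr{A})$ up to Morita equivalence.

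Second, I would invoke the classical Packer--Raeburn stabilization trick: for a projective representation $\nu$ of $N$ on a separable Hilbert space with Mackey obstruction $\sigma \in Z^2(N,\TT)$, the $C^*$-algebra $\cpt \rtimes_{\ad(\nu)} N$ is Morita equivalent to the twisted group algebra $C^*(N, \sigma)$. Since $\cX = C_0(X, \cpt)$ is a $C_0(X)$-algebra with fibers $\cpt$, and since the action $\ad(\nu_*)$ is fiberwise given by $\ad(\nu_x)$ with $x \mapsto [\sigma_{\nu(x)}]$ the composition of $\nu$ with the classifying map for projective representations, the crossed product $\cX \rtimes_{\ad(\nu_*)} N$ is a $C_0(X)$-algebra whose fiber at $x$ is Morita equivalent to $C^*(N, \sigma_{\nu(x)})$. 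By the very definition of the universal bundle $\mathscr{A}$ over $\wedge^2 \TT^d = H^2(N,\TT)$, whose fiber at $\Theta$ is Morita equivalent to $A_\Theta \cong C^*(N, \sigma_\Theta)$, this $C_0(X)$-algebra is precisely the pullback $\nu^*(\mathscr{A})$ along $\nu: X \to H^2(N,\TT)$.

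The main obstacle is the second step: asserting that the fiberwise Morita equivalences assemble into a \emph{global} Morita equivalence of $C_0(X)$-algebras. This requires a continuous choice of representative cocycles $\sigma_{\nu(x)}$ together with a continuous implementation of the associated Packer--Raeburn imprimitivity bimodules over $X$. The continuity of $x \mapsto \sigma_{\nu(x)}$ is ensured by the hypothesis that $\nu: X \to H^2(N,\TT)$ is continuous and by the concrete choice of representative multipliers constructed in Appendix \ref{sec:discretemultiplier} (addressing the discontinuity issue flagged in the remark preceding \eqref{universaldiscreteheisenberg}); the Packer--Raeburn bimodule then bundles over $X$ canonically because the imprimitivity data are built from the projective representation $\nu_*$ and the multiplication on $C_0(X,\cpt)$, both of which are continuous in $x$. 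Once this is in place, the Morita equivalence of $\cX \rtimes_{\ad(\nu_*)} N$ with $\nu^*(\mathscr{A})$ is immediate.
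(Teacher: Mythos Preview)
The paper does not supply its own proof of this theorem: it is quoted with attribution to \cite{ENO,HM09} and used as input for the discussion of $C^*({\rm Heis}_s(N))$. So there is no in-paper argument to compare your proposal against directly.

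That said, your outline is the correct skeleton and matches how the cited references approach the problem. The reduction via Green's imprimitivity to $\cX \rtimes_{\ad(\nu_*)} N$ is exactly right (and is the same move the paper makes in \eqref{bulkalgebrastable}), and the fiberwise identification of $\cpt \rtimes_{\ad(\nu_x)} N$ with a stabilized $C^*(N,\sigma_x)$ via Packer--Raeburn is the standard mechanism. Where your sketch is thinnest is precisely the step you flag: promoting the fiberwise Morita equivalences to a global $C_0(X)$-linear one. Your proposed fix---choose continuous representative cocycles via Appendix \ref{sec:discretemultiplier} and observe that the Packer--Raeburn bimodule data vary continuously---is morally right but informal; one still needs to verify that the resulting family of imprimitivity bimodules assembles into an honest $C_0(X)$-linear equivalence bimodule, not merely a continuous field of such. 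In \cite{ENO} this is handled by proving a classification theorem for noncommutative principal torus bundles (so that any $C_0(X)$-algebra with the correct fibers and $\wh{N}$-action \emph{is} a pullback of the universal bundle), while \cite{HM09} gives an explicit parametrised strict deformation quantization that builds the global equivalence directly. Either of these would close the gap in your argument; as written, your final paragraph asserts the conclusion rather than proving it.
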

Thus $\wh{\cb}$ can be visualized as a bundle of noncommutative $d$-tori over $X$. 

In particular, this picture can be used to understand $C^*({\rm Heis}_s(N))$ as follows. The homomorphism $s:\wedge^2 N\rightarrow \ZZ$ induces a homomorphism on the Pontryagin duals $\nu:\TT\rightarrow \wedge^2 \wh{N}\cong H^2(N,\TT)$. For $N = \integer^d\times \integer^d$ and surjective $s_0$ as in the previous subsection, $\nu$ is an injective degree-1 map onto its image circle in $H^2(N,\TT)\cong\wedge^2\TT^d$. Then $C^*({\rm Heis}_{s_0}(N))$ is Morita equivalent to a bundle of (stablized) noncommutative $d$-tori over $S^1=\TT$, pulled back from the universal bundle $\mathscr{A}$ under $\nu$.

For $ks_0$, we have the factorisation
$$
\begin{tikzcd}
    \wedge^2 N\arrow{rr}{ks_0}\arrow[twoheadrightarrow]{rd}{s_0} &  & \ZZ \arrow{r} & \ZZ_k \arrow{r} & 0\\
    & \ZZ\arrow[hookrightarrow]{ru}{\times k} & & &
\end{tikzcd},
$$
and dually,
$$
\begin{tikzcd}
    0\arrow{r} & \ZZ_k\arrow{r}& \TT\arrow{rr}{\nu}\arrow[twoheadrightarrow]{dr}{p} & &  H^2(N,\TT).\\
    & & & \TT\arrow[hookrightarrow]{ur} &
\end{tikzcd},
$$
so $\nu$ is a degree-$k$ map from $S^1=\TT$ onto its image circle. Then $C^*({\rm Heis}_{ks_0}(N))$ is Morita equivalent to the pullback of the bundle for $C^*({\rm Heis}_{s_0}(N))$ under the $k$-fold covering map $p$.

\subsection{Heisenberg groups, nilmanifolds, and T-duality}\label{sec:heisenbergtduality}
The standard integer (matrix) Heisenberg group $\heisZ(2d), d\in\ZZ^+$, comprises the upper triangular matrices $$\begin{pmatrix} 1 & a & c \\ 0 & 1_d & b \\ 0 & 0 & 1\end{pmatrix},\qquad a, b \in \ZZ^d, c\in \ZZ,$$
which has the multiplication law $(a_1, b_1, c_1)\cdot(a_2,b_2,c_2)=(a_1+a_2,\, b_1+b_2,\, c_1+c_2+a_1\cdot b_2)$ (this is ${\rm Heis}_{s_0}(\ZZ^{2d})$ of Sect.\ \ref{sec:universalheisenberg}). It is a subgroup of $\heisR(2d)$ defined by the same formula but with $a,b\in\RR^d, c\in\RR$. Both matrix Heisenberg groups are central extensions,
\begin{align*}
    1&\longrightarrow\RR\longrightarrow\heisR(2d)\longrightarrow\RR^{2d}\longrightarrow 1,\\
    1&\longrightarrow\ZZ\longrightarrow\heisZ(2d)\longrightarrow\ZZ^{2d}\longrightarrow 1.
\end{align*}

Let $W=\RR^{2d}$ with $w\in W$ identified with $(a,b)\in \RR^d\times\RR^d$, and let $\omega(w_1,w_2)=a_1\cdot b_2 - a_2\cdot b_1$ be the standard symplectic form on $W$. We can also define the real Heisenberg group $\heisR(W,\omega)$ as the set $W\times\RR$ equipped with the product
\begin{equation}
(w_1,p_1)\cdot(w_2,p_2) = (w_1+w_2,p_1+p_2+\frac{1}{2}\omega(w_1,w_2)), \qquad p_i\in\RR, w_i\in W, i=1,2.\label{heisenbergsymplectic}
\end{equation}
This is a special case of a more general construction given in Sect.\ \ref{sec:universalheisenberg}. The groups $\heisR(2d)$ and $\heisR(W,\omega)$ can be identified by $$\heisR(2d)\ni(a,b,c)\longleftrightarrow ((a,b),\,c-\frac{1}{2}a\cdot b)\equiv (w,p)\in\heisR(W).$$ We shall use this identification in what follows.

Take a one-dimensional subspace $U\subset W$ and $V$ a complementary subspace $\RR^{2d-1}$. We can regard $U$ as a subgroup of $\heisR(W,\omega)$ by $u\leftrightarrow (0,u)$. The restricted symplectic form $\omega_|\coloneqq\omega_{|V\times V}$ is non-degenerate only on a $2(d-1)$ dimensional subspace of $V$. The lift (preimage) $\wt{V}$ of $V$ in $\heisR(2d)$ is a normal subgroup isomorphic to $\heisR(2(d-1))\times \RR$, and we can exhibit $\heisR(2d)$ as a semi-direct product $\heisR(2d)\cong \wt{V}\rtimes U$. With $L$ and $M$ the standard lattices in $U$ and $V$ respectively, we also have $\heisZ(2d)\cong \wt{M}\rtimes L$, where $\wt{M}\cong \heisZ(2(d-1))\times\ZZ$ is the lift of $M$. In particular, we can write $C^*(\heisZ(2d))\cong C^*(\wt{M})\rtimes L$
and obtain its associated PV-boundary map 
$$\partial:K_\bullet(C^*(\heisZ(2d)))\rightarrow K_{\bullet+1}(C^*(\wt{M})).$$
The quotient $\wt{V}/\wt{M}=\nil(2(d-1))\times \TT$ is a classifying space $B\wt{M}$, and there is a fibration of $\heisR(2d)/\heisZ(2d)=\nil(2d)=B\heisZ(2d)$ over $\TT=B\ZZ$ (in fact a fibre bundle),
$$\nil(2(d-1))\times \TT \xrightarrow{\;\;\iota\;\;}\nil(2d)\longrightarrow\TT.$$
For $k\in\ZZ, k\neq 0$, we can use the symplectic form $k\omega$ instead of $\omega$, to obtain the modified integer Heisenberg groups $\heisZ(2d,k)$ and nilmanifolds $\nil_k(2d)=\heisR(2d)/\heisZ(2d,k)$.

Consider $\CT(S^1\times\TT^{2d}, d\theta\wedge k\omega)$ where $\omega$ is the standard symplectic form on $\TT^{2d}=W/N$ (lifting to $\omega$ on $W=\RR^{2d}$ as defined earlier), and $d\theta$ is the usual 1-form on $S^1$. The flux $d\theta\wedge k\omega$ is of $H_1$ type, and by the same arguments for the $d=1$ case in \cite{MR04a}, the T-dual of $\CT(S^1\times\TT^{2d}, d\theta\wedge k\omega)$ with respect to $\TT^{2d}$ is $C^*(\heisZ(2d,k))$. In more detail, we can obtain
$$\CT(S^1\times\TT^{2d}, d\theta\wedge k\omega)\cong\Ind_N^W(C(S^1,\cK),\gamma),$$
as in Corollary \ref{cor:existenceofNaction}, where above each $x\in S^1$, there is a projective representation $\nu_x$ of $N$ with multiplier $\sigma_x$, and $\gamma=\Ad(\nu_x)$. Then the T-dual is a bundle over $S^1$ of twisted group $C^*$-algebras $C^*(N,\sigma_x)$, which is what we obtain when we decompose 
$C^*(\heisZ(2d,k))$ into a twisted crossed product $C^*(\ZZ)\rtimes_\sigma N$ (see also Sect.\ \ref{sec:geometricheisenberg}). Similarly, the restriction $\CT(S^1\times\TT^{2d-1}, d\theta\wedge k\omega_|)$ where $\TT^{2d-1}=V/M$ is induced from $C(S^1,\cK)$ by $\gamma$ restricted to $M$. Its T-dual with respect to $\TT^{2d-1}$ is $(C^*(\heisZ(2(d-1,k)))\otimes C^*(\ZZ))\cong C^*(\wt{M})$ --- note that the $\omega_|$-complement of $\TT^{2d-1}$, which is a circle $\TT$, does not see any flux so it T-dualizes to the $C^*(\ZZ)$ factor. 

On the other hand, as an $S^1$ bundle over $\TT^{2d}$ with flux $d\theta\wedge k\omega$ (now of purely $H_2$ type), the T-dual of $\CT(S^1\times\TT^{2d}, d\theta\wedge k\omega)$ is $\nil_k(2d)$, a principal circle bundle over $\TT^{2d}$ with Chern class represented by $k\omega$. Similarly, the T-dual with respect to $S^1$ of $\CT(S^1\times\TT^{2d-1}, d\theta\wedge k\omega_|)$ is $\nil_k(2(d-1))\times \TT)=\wt{V}/\wt{M}$, where $\nil_k(2(d-1))$ has Chern class supported on the subtorus $\TT^{2(d-1)}$ on which $\omega_|$ is non-degenerate (this subtorus is the base of $\nil_k(2(d-1))$), while the extra $\TT$ just goes along for the ride.

The nilmanifold $\nil_k(2d)$ has an action of $\heisR(2d)$ and the symmetric imprimitivity theorem \cite{Rieffelsymmetric} gives a strong Morita equivalence
$$C(\nil_k(2d))\rtimes\heisR(2d)\sim C(\heisR(2d)\backslash \heisR(2d))\rtimes \heisZ(2d,k) = C^*(\heisZ(2d,k)),$$
which together with the generalization of the Connes--Thom isomorphism Theorem to $\heisR(2d)$ (\cite{C}, \cite{Fack81}), gives ``non-abelian T-duality'' (c.f.\ Section 3.3 of \cite{HMT})
$$K^{-\bullet}(\nil_k(2d)) =K_\bullet(C(\nil_k(2d))) \stackrel{T_{\heisR(2d)}}{\cong} K_{\bullet+1}(C^*(\heisZ(2d,k))).$$
Similarly, we have
$$K^{-\bullet}(\nil_k(2(d-1))\times\TT) \stackrel{T_{\heisR(2(d-1))\times\RR}}{\cong} K_{\bullet}(C^*(\heisZ(2(d-1),k))\otimes C^*(\ZZ)).$$

With the above definitions, we have the following commutative diagram, which is a higher-dimensional generalization of Proposition 3.3 in \cite{HMT},
$$
%\beq\label{BBfull}
\small
\xymatrix{
K^{-\bullet}(S^1\times \TT^{2d},\,  d\theta \wedge \omega)  \ar[dr]_{T_1}^{\sim} \ar[dddd]^{\iota^*} \ar[rr]^{\sim\;}_{T_{2d}\;} & &  K_\bullet(C^*(\heisZ(2d))) \ar[dddd]^\partial \\
&K^{-\bullet-1} (\nil(2d)) \ar[ur]_{\;\;\;\;\;T_{\heisR(2d)}}^{\sim} \ar[dd]^{\iota^*} &\\
&&\\
&K^{-\bullet-1}(\nil(2(d-1))\times\TT)\ar[dr]^{\;\;T_{\heisR(2(d-1))\times\RR}}_{\sim} &\\
K^{-\bullet}(S^1 \times \TT^{2d-1}, d\theta\wedge\omega_|) \ar[rr]^{\sim}_{T_{2d-1}} \ar[ur]^{T_1}_{\sim} &&  K_{\bullet+1}(C^*(\heisZ(2(d-1)))\otimes C^*(\ZZ))  }  
$$
\normalsize

\begin{remark}
We can also obtain a similar commutative diagram for the generalized integer Heisenberg groups $\heisZ(2d,{\bf k})$ labelled by an $d$-tuple of integers ${\bf k}=k_1,k_2,\ldots,k_n$, as in \cite{LeePacker}. Our $\heisZ(2d,{\bf k})$ is the case where $k_1=\ldots=k_d=k$; alternatively we can consider the ${\rm Heis}_s(\ZZ^{2d})$ of Sect.\ \ref{sec:universalheisenberg} for a general $s$.
\end{remark}

\subsection{Solvable groups, solvmanifolds, and T-duality}\label{sect:solv}
Recall (cf.\ \cite{Scott}) that the real simply-connected 3 dimensional Solvable group $\solR$ is defined as a split $\RR^2$-extension of $\RR$, 
$$
 1\longrightarrow  \RR^2 \longrightarrow \solR \longrightarrow \RR \longrightarrow 1
$$
where $\RR$ acts on $\RR^2$ by 
$$
(t, (x,y)) \mapsto (e^t x, e^{-t} y), 
$$
that is, $t\in \RR$ acts on $\RR^2$ as the ${\rm SL}(2, \RR)$ matrix 
$$
\left(\begin{array}{cc} e^t & 0 \\0 & e^{-t}\end{array}\right).
$$
$\solR$ can also be defined as the matrix group,
$$
\solR = \left\{\left(\begin{array}{ccc}e^t & 0 & x\\0 & e^{-t} & y \\0 & 0 & 1\end{array}\right) \Big|\, x,y,t \in \RR\right\}.\nonumber
$$
If $\solR$ is identified with $\RR^3$ so that $(x, y)$ are the coordinates on the $\RR^2$-normal subgroup,
then the product is given by
$$
(x,y,t).(x',y',t') = (x+ e^{-t}x', y+ e^{t}y', t+t').
$$
Then $(0,0,0)$ is the identity in $\solR$. 
The inverse of $(x, y, t)$ is $(-e^t x, -e^{-t} y, -t)$, and one sees that the left invariant 1-forms on $\solR$ are
$\tau_1=e^{-t} dx$, $\tau_2= -e^t dy$ and $\tau_3=dt$.  We compute that $d\tau_1 = -dt \wedge \tau_1$, 
$d\tau_2 = dt \wedge \tau_2$, and $d\tau_3=0$.

Now a general left invariant 2-form $B$ on $\solR$ is given by 
$$
B = \frac{1}{2} \sum \Theta_{ij} \,\tau_i\wedge \tau_j =  \frac{1}{2} \tau^t \Theta \tau
$$
where $\tau$ is the column vector with $j$-th componet $\tau_j$, $\tau^t $ is the transpose of $\tau$
and $\Theta$ a skew-symmetric $(3\times 3)$ matrix.
By the computations above, $dB=0$, that is $B$ is a closed left invariant 2-form on $\solR$. A left invariant Riemannian metric is given by 
$$
ds^2 = e^{2t} dx^2 + e^{-2t} dy^2 + dt^2.
$$ 

We next recall the construction of lattices in $\solR$ and we follow \cite{ADS} closely, see also \cite{Marcolli}.
Let
$\bbK=\bbQ(d)$ be a real quadratic field and let $\iota_1, \iota_2 :\bbK\to \bbR$, be its two real embeddings. Denote by ${\bf L}\subset \bbK$ a lattice, with $U_{\bf L}^+$ the group of totally positive units preserving ${\bf L}$,
$$
U_{\bf L}^+ =\{u\in O_\bbK^* : \iota_j (u) \in \bbR_+^*, \,\text{for}\, j=1,2,   u{\bf L}\subset {\bf L}\}.
$$
Denote by $A$ a generator, so that 
$U_{\bf L}^+ =A^\ZZ = \{A^n : n \in \ZZ\}$. In the example where 
${\bf L} = O_\bbK$ is the ring of integers of $\bbK$, then the generator $A$ is a fundamental unit. 
Consider the embedding of ${\bf L}$ in $\RR^2$ given by the mapping
$$
{\bf L} \ni \ell  \longrightarrow (\iota_1(\ell), \iota_2(\ell)) \in \RR^2.
$$
Let $\Lambda = (\iota_1,\iota_2)({\bf L})$. It is a lattice in $\RR^2$, and 
 $U_{\bf L}^+$ acts on $\Lambda$ by
$$
 (\iota_1(\ell), \iota_2(\ell)) \longrightarrow  (A\iota_1(\ell), A^{-1}\iota_2(\ell)).
$$

Denote by ${\bf V}$ the group  $U_{\bf L}^+$; then ${\bf V}\cong \ZZ$. Consider the semi direct product 
$$
\solZ(A)=\Lambda \rtimes_A {\bf V}
$$
where the action of ${\bf V} = A^\ZZ$ on $\Lambda$ is induced by the action by multiplication
on ${\bf L}$. As shown in \cite{ADS}, these are all lattices in the solvable Lie group $\solR$.

%Let $t_0>0$ be fixed and consider the ${\rm SL}(2, \RR)$ matrix 
%$$
%A=\left(\begin{array}{cc} e^{t_0} & 0 \\0 & e^{-t_0}\end{array}\right).
%$$
%Then the discrete group $\solZ(A) = \ZZ^2 \rtimes_A \ZZ$ is easily seen to be a lattice subgroup of $\solR$.

The homogeneous space
$$
\solv_A = \solR/\solZ(A)
 $$ 
is a torus bundle over the circle,
$$
\TT^2 \hookrightarrow \solv_A \to \TT.
$$
The cohomology of $\solv_A$ is given by (cf. \cite{Marcolli})
$$
H^j(\solv_A) = \Bigg\{\begin{array}{l} \ZZ \quad \text{if} \,\, j=0, 3 ; \\[+7pt]
\ZZ \oplus {\rm coker}(1-A) \quad \text{if}\,\, j=2 ; \\[+7pt]
\ZZ \quad \text{if}\,\, j=1.
 \end{array} 
$$
The non-torsion generator of $H^2(\solv_A)$ has representative given by $\tau_1\wedge \tau_2$. 

The K-theory of $\solv_A$ is given by 
$$
K^j(\solv_A) = \Bigg\{\begin{array}{l} \ZZ^2  \oplus  {\rm coker}(1-A)  \quad \text{if} \,\, j=0 ; \\[+7pt]
\ZZ^2   \quad \text{if}\,\, j=1.
 \end{array} 
$$

With the above definitions, we have the following commutative diagram using Appendix A, 
$$
%\beq\label{BBfull}
\small
\xymatrix{
K^{-\bullet}(\solv_A)  \ar[dd]^{\iota^*} \ar[rrr]^{\sim\;}_{T\;} & & &  K_{\bullet+1}(C^*(\solZ(A))) \ar[dd]^\partial \\
&&&\\
K^{-\bullet}(\TT^{2}) \ar[rrr]^{\sim}_{\rm id} && &  K_\bullet(C(\TT^2))  }  
$$
\normalsize
where we have used that $C(\solv_A) = \Ind_\ZZ^\RR(C(\TT^2))$ by the above, and therefore the crossed product $C(\solv_A)\rtimes \RR$ is 
strongly Morita equivalent to $C(\TT^2) \rtimes_A \ZZ \cong C^*(\solZ(A)).$ The inclusion of the fibre torus $\iota: \TT^2 \hookrightarrow \solv_A $ 
induces a restriction map $\iota^*$. Similarly, the the semi direct product 
$
\solZ(A)=\Lambda \rtimes_A {\bf V}
$
gives rise to the Pimsner-Voiculescu boundary homomorphism $\partial$, where $C^*(\Lambda)\cong C(\TT^2)$ is also used.

As with the Heisenberg groups, we may also obtain a similar commutative diagram for higher dimensional Solvable groups, but we omit the details.

\section{Bulk-boundary correspondence for higher dimensional quantum hall effect and  
topological insulators with disorder}\label{sect:disorder}

This section is concerned with the application of the abstract results in Sect.\ \ref{sect:BB}, to the bulk-boundary correspondence for the higher dimensional analogue of the quantum Hall effect and  
topological insulators, generalizing the 2D case covered in \cite{MT2}.

Let $\cA$ be a complex $C^*$-algebra with $\ZZ$-action $\alpha$. Then by Appendix \ref{appendix:paschke}, we have the commutative diagram,
\begin{equation}
    \xymatrix{K_\bullet(\Ind_\ZZ^\RR(\mathcal{A},\alpha))\ar[d]_{\iota^*}\ar[rr]^{\gamma_\alpha^\bullet} && K_{\bullet+1}(\mathcal{A}\rtimes_\alpha\ZZ)\ar[d]^{\partial}\\
    K_\bullet(\mathcal{A})\ar[rr]^{\rm Id} &&  K_{\bullet}(\mathcal{A}) 
    }\nonumber
\end{equation}
with $\gamma_\alpha^\bullet$ the Paschke isomorphism. We recall that the induced algebra $\Ind_\ZZ^\RR(\cA, \alpha) $ can be regarded as the flat fibre bundle over the circle with fibre $\cA$, that is 
$\Ind_\ZZ^\RR(\cA, \alpha) = \RR \times_{\ZZ} \cA$, and  $\iota: \cA \to \Ind_\ZZ^\RR(\cA, \alpha)$ is the inclusion of a fibre. There is a $\RR$-action on $\Ind_\ZZ^\RR(\cA, \alpha) $ by translation. Thus $\gamma_\alpha^\bullet$ implements circle T-duality in a modified sense in which the circle action on the base may not lift to the total space, but does lift to an $\RR$-action.

Now assume that $\alpha$ is a $\ZZ^d$ action on $\cA$, and write $\alpha_|$ for the restricted $\ZZ^{d-1}$-action. Then $\cA\rtimes_\alpha \ZZ^{n}$ is an $d$-fold iterated crossed product by $\ZZ$, and in particular, $(\cA\rtimes_{\alpha_|} \ZZ^{d-1})\rtimes\ZZ$. Also, $\Ind_{\ZZ^d}^{\RR^d}(\cA)\cong \Ind_\ZZ^\RR(\Ind_{\ZZ^{d-1}}^{\RR^{d-1}}(\cA))$, with the obvious actions omitted in the notation, so there is an inclusion $\iota:\Ind_{\ZZ^{d-1}}^{\RR^{d-1}}(\cA)\rightarrow \Ind_{\ZZ^d}^{\RR^d}(\cA)$. Writing $\gamma_\alpha^\bullet$ for the iterated Paschke map for the $d$-commuting $\ZZ$-actions, one has the commutative diagram
\begin{equation}
    \xymatrix{K_\bullet(\Ind_{\ZZ^d}^{\RR^d}(\cA,\alpha))\ar[d]_{\iota^*}\ar[rr]^{\gamma_\alpha^\bullet} && K_{\bullet+d}(\cA\rtimes_\alpha \ZZ^{n})\ar[d]^{\partial}\\
    K_\bullet(\Ind_{\ZZ^{d-1}}^{\RR^{d-1}}(\cA,\alpha_|))\ar[rr]^{\gamma_{\alpha_|}^\bullet} &&  K_{\bullet+d-1}(\cA\rtimes_{\alpha_|} \ZZ^{d-1}) 
    }\label{commutativityhighinduced}
\end{equation}
Alternatively, with $\ce=\Ind_{\ZZ^{d-1}}^{\RR^{d-1}}(\cA,\alpha_|)$ and $\cb=\Ind_{\ZZ^d}^{\RR^d}(\cA,\alpha)$, the hypotheses of Corollary \ref{cor:restrictionequalsPV} are satisfied and we can deduce commutativity of \eqref{commutativityhighinduced}.

Now consider a skew-symmetric matrix real $d\times d$ matrix $\Theta$ encoding a (scalar-valued) twisting 2-cocycle, 
$$(x,y)\mapsto \exp (\pi {\rm i} \langle x, \Theta (y) \rangle ),\qquad x,y\in\ZZ^d,$$
and consider the twisted crossed product algebra
$\cA\rtimes_\Theta \ZZ^{d}$. By the Packer--Raeburn stabilization trick, the adjoint map gives a canonical isomorphism
$(\cA\rtimes_\Theta \ZZ^{d})\otimes \cK \cong (\cA\otimes \cK ) \rtimes \ZZ^{d}$ with an untwisted $\ZZ^d$ action on the right-hand-side which we denote by $\alpha$. Using this in \eqref{commutativityhighinduced}, we get the commutative diagram,
\begin{equation}
    \xymatrix{K_\bullet(\Ind_{\ZZ^d}^{\RR^d}(\cA\otimes\cK,\alpha))\ar[d]_{\iota^*}\ar[rr]^{\gamma_\alpha^\bullet} && K_{\bullet+d}(\cA\rtimes_\Theta \ZZ^{d})\ar[d]^{\partial}\\
    K_\bullet(\Ind_{\ZZ^{d-1}}^{\RR^{d-1}}(\cA\otimes\cK,\alpha_|))\ar[rr]^{\gamma_{\alpha_|}^\bullet} &&  K_{\bullet+d-1}(\cA\rtimes_{{\Theta}|} \ZZ^{d-1}).
    }
\end{equation}
Let $A_\Theta$ denote the noncommutative $d$-torus and $A_{{\Theta}|}$ the  noncommutative $(d-1)$-torus, which are twisted versions of the trivial crossed products $\CC\rtimes_{\rm id}\ZZ^d$ and $\CC\rtimes_{\rm id}\ZZ^{d-1}$. Let $\CC\rightarrow \mathcal{A}$ be an equivariant homomorphism (which is automatic if it is unital), so that there is an induced morphism $A_\Theta\rightarrow \cA\rtimes_\Theta\ZZ^d$. By the naturality of the PV sequence and the Paschke maps, we have the commutative diagram
$$
\xymatrix{
K_\bullet(\Ind_{\ZZ^{d}}^{\RR^{d}}(\cA\otimes\cK))  \ar[dddd]^{\iota^*} \ar[rrr]^{\sim\;}_{T\;} & &  &K_{\bullet+d}(\cA\rtimes_\Theta \ZZ^{d}) \ar[dddd]^\partial \\
&K_\bullet(\Ind_{\ZZ^{d}}^{\RR^{d}}(\cK))\ar[r]^{\sim}_{T} \ar[dd]^{\iota^*} \ar[ul]&K_{\bullet+d}(A_\Theta) \ar[ur]\ar[dd]^{\partial} &\\
&&\\
& K_\bullet(\Ind_{\ZZ^{d-1}}^{\RR^{d-1}}(\cK))\ar[r]^{\quad\sim}_{\quad T}  \ar[dl]&K_{\bullet+d-1}(A_{{\Theta}|})\ar[dr] &\\
K_\bullet(\Ind_{\ZZ^{d-1}}^{\RR^{d-1}}(\cA\otimes\cK))\ar[rrr]^{\sim}_{T}  && & K_{\bullet+d-1}(\cA\rtimes_{{\Theta}|} \ZZ^{d-1}) }  
$$
where we have simply written $T$ for the various Paschke maps, and the diagonal maps are induced by $\CC\rightarrow\mathcal{A}$. In particular, this shows that the class in $K_0(\cA\rtimes_\Theta \ZZ^{d})$
coming from the fundamental class of the noncommutative $d$-torus  $A_\Theta$, goes to the 
class in $K_0(\cA\rtimes_{{\Theta}|} \ZZ^{d-1})$
coming from the fundamental class of the noncommutative $(d-1)$-torus  $A_{{\Theta}|}$, under the PV-boundary map. 

Specializing to the case when $\cA=C(\Sigma)$, where $\Sigma$ is a compact Cantor set with an action of $\ZZ^d$, we get precisely the higher dimensional analog of Example 4, section 4.6 in \cite{MT2}, generalizing the phenomenon that T-duality trivialises the bulk-boundary correspondence in the disordered case in 2D.

Now let $\cA$ be a \emph{real} $C^*$-algebra with a $\ZZ^d$-action $\alpha$. Since the real version of Paschke's map \cite{Pas} is also, up to a sign convention, the Connes--Thom isomorphism \cite{C} composed with an isomorphism from Green's imprimitivity theorem, it follows that the real version of the diagram above also commutes.
\begin{equation}
\xymatrix{
KO_\bullet(\Ind_{\ZZ^{d}}^{\RR^{d}}(\cA))  \ar[dddd]^{\iota^*} \ar[rrr]^{\sim\;}_{T\;} & &  &KO_{\bullet+d}(\cA\rtimes \ZZ^{d}) \ar[dddd]^\partial \\
&KO_\bullet(\Ind_{\ZZ^{d}}^{\RR^{d}}(\RR))\ar[r]^{\sim}_{T} \ar[dd]^{\iota^*} \ar[ul]&KR^{-\bullet-d}(\TT^d) \ar[ur]\ar[dd]^{\partial} &\\
&&\\
& KO_\bullet(\Ind_{\ZZ^{d-1}}^{\RR^{d-1}}(\RR))\ar[r]^{\sim}_{T}  \ar[dl]&KR^{-\bullet-d+1}(\TT^{d-1})\ar[dr] &\\
KO_\bullet(\Ind_{\ZZ^{d-1}}^{\RR^{d-1}}(\cA))\ar[rrr]^{\sim}_{T}  && & KO_{\bullet+d-1}(\cA\rtimes \ZZ^{d-1}) }  \label{realdisorderedcase}
\end{equation}
Here, $\TT^d$ is the character space of $\ZZ^d$ with involution inherited from complex conjugation, and so $KR^{-\bullet}(\TT^d)\cong KO_\bullet(\RR\rtimes_{\rm id}\ZZ^d)$.

The physical relevance of the commutative diagrams in this section is that a $\ZZ^d$ action on $\mathcal{A}=C(\Omega)$ is often used to model disorder with disorder probability space $\Omega$ (e.g.\ \cite{PSB}). The twisting by $\Theta$ is a fundamental feature of the \emph{magnetic translations} in the integer quantum Hall effect and its higher dimensional generalizations. Both disorder and twisting are crucial ingredients needed to explain phenomena like plateau in the Hall conductivity \cite{B}. In the case of real $C^*$-algebras \eqref{realdisorderedcase} says, for example, that the bulk-boundary homomorphism for time-reversal symmetric topological insulators is again trivialized into the geometric restriction map under real T-duality, even in the presence of disorder. This generalizes our earlier computations in \cite{MT2}.

%%%%%%%%%%%%%%%%%%%%%%%%%%%%%%%%%%%%
%
%                Acknowledgements
%
%%%%%%%%%%%%%%%%%%%%%%%%%%%%%%%%%%%%

\section*{Acknowledgements}

The last two authors were supported by the Australian Research Council via ARC Discovery Project grants %DP110100072, 
DP150100008,  FL170100020
and DE170100149 respectively. The authors thank the Erwin Schr\"odinger Institute (ESI), Vienna, for its hospitality 
during the ESI Program on 
{\em Higher Structures in String Theory 
and Quantum Field Theory},
when part of this research was completed.

%%%%%%%%%%%%%%%%%%%%%%%%%%%%%%%%%%%%
%
%                APPENDIX
%
%%%%%%%%%%%%%%%%%%%%%%%%%%%%%%%%%%%%

\appendix
\section{Appendix: Paschke's map and the Connes--Thom isomorphism}\label{appendix:paschke}

This appendix explains why Paschke's map \cite{Pas} is, up to a sign convention, the Connes--Thom isomorphism \cite{C} composed with an isomorphism from Green's imprimitivity theorem. This result belongs to a family of ideas which can be found in \cite{C,Cun,CMR,Pas} and may be known to experts, but we provide a detailed argument for the reader's reference. The Paschke map is explicitly defined for each $\ZZ$-algebra $(\mathcal{A},\alpha)$, whereas the Connes--Thom map is more abstractly defined. As we will see, there is an analogous abstract characterization of the Paschke map, which determines its formula uniquely.

\subsection*{Generalities on Connes--Thom isomorphism}
Recall from \cite{C} that the Connes--Thom isomorphism is a natural transformation between the functors $K_\bullet(\cdot)$ and $K_{\bullet+1}((\cdot)\rtimes\RR)$, $\bullet\in\ZZ_2$, i.e.\ it assigns to every $\RR$-$C^*$-algebra $(\mathscr{A},\alpha)$ an isomorphism $\phi^\bullet_\alpha: K_\bullet(\mathscr{A})\rightarrow K_{\bullet+1}(\mathscr{A}\rtimes_\alpha\RR)$, such that for any morphism $\rho:(\mathscr{A},\alpha)\rightarrow(\mathscr{B},\beta)$, the following diagram commutes:
\begin{equation}
    \xymatrix{K_\bullet(\mathscr{A})\ar[d]_{\rho_*}\ar[rr]^{\phi_\alpha^\bullet} && K_{\bullet+1}(\mathscr{A}\rtimes_\alpha\RR)\ar[d]^{(\rho\rtimes\RR)_*}\\
    K_\bullet(\mathscr{B})\ar[rr]^{\phi_\beta^\bullet} &&  K_{\bullet+1}(\mathscr{B}\rtimes_\beta\RR) 
    }
\end{equation}
Here, $\rho\rtimes\RR$ is the morphism of crossed products induced by the equivariant map $\rho$. Furthermore, Connes shows that $\phi_\alpha^\bullet$ is an isomorphism which is determined uniquely by this naturality property together with a normalization condition (for $(\mathcal{A},\alpha)=(\CC,\mathrm{id})$) and compatibility with suspensions.

\subsection*{The analogous axioms for Paschke's map}
If $\rho:(\mathcal{A},\alpha)\rightarrow (\mathcal{B},\beta)$ is a morphism of $\ZZ$-$C^*$-algebras (not necessarily unital), the induction functor $\Ind\coloneqq\Ind_\ZZ^\RR(\cdot)$ gives a morphism of induced algebras $\tilde{\rho}:\Ind(\mathcal{A},\alpha)\rightarrow\Ind(\mathcal{B},\beta)$ which is equivariant for the respective translation actions $\tau^\alpha, \tau^\beta$ of $\RR$. Elements of $\Ind(\mathcal{A},\alpha)$ are viewed either as bounded continuous functions $f:\RR\rightarrow \mathcal{A}$ satisfying an equivariance condition\footnote{We have switched convention for induced algebras in the Appendix compared to the main text. This is to make closer contact to Paschke's original work, and the effect is that $\alpha^{-1}$ is replaced by $\alpha$.}, $f(x+1)=\alpha(f(x)), x\in\RR$, or alternatively, continuous $f:[0,1]\rightarrow\mathcal{A}$ such that $f(1)=\alpha(f(0))$ (the mapping torus). The translation action $\tau^\alpha$ of $\RR$ on $\Ind(\mathcal{A},\alpha)$ is $(\tau^\alpha_t f)(s)=f(s+t), s,t\in\RR$.

Let $S\alpha$ be the suspended $\ZZ$-action on $S\mathcal{A}$ (acting by $\alpha$ on $\mathcal{A}$ and trivially on the suspension variable). Note that $S(\Ind(\mathcal{A},\alpha))\cong \Ind(S\mathcal{A},S\alpha)$ and $S(\mathcal{A}\rtimes_\alpha\ZZ)\cong S\mathcal{A}\rtimes_{S\alpha}\ZZ$, and write $S^\bullet_{(\cdot)}$ for the natural suspension isomorphisms $K_\bullet(\cdot)\rightarrow K_{\bullet-1}(S(\cdot))$.

For $\bullet\in\ZZ_2$, let $\gamma^\bullet$ be a natural transformation of the functors $(\mathcal{A},\alpha)\mapsto K_\bullet(\Ind(\mathcal{A},\alpha))$ and $(\mathcal{A},\alpha)\mapsto K_{\bullet+1}(\mathcal{A}\rtimes_\alpha\ZZ)$ from $\ZZ$-algebras to abelian groups, satisfying the following three axioms:
\begin{axioms}[Normalization]\label{normalization}
If $(\mathcal{A},\alpha)=(\CC,\mathrm{id})$, then $\gamma^0_{\mathrm{id}}:K_0(C(\RR/\ZZ))=K_0(C(\TT))\rightarrow K_1(C^*(\ZZ))$ takes $[1_{C(\TT)}]$ to the (Bott) generator $[b]$ of $K_1(C^*(\ZZ))$ corresponding to $1\in\ZZ$ regarded as an element of $C^*(\ZZ)$.
\end{axioms}
\begin{axioms}[Naturality]\label{naturality}
If $\rho:(\mathcal{A},\alpha)\rightarrow(\mathcal{B},\beta)$ is a morphism of $\ZZ$-algebras, then the following diagram commutes:
\begin{equation}
    \xymatrix{K_\bullet(\Ind(\mathcal{A},\alpha))\ar[d]_{\tilde{\rho}_*}\ar[rr]^{\gamma_\alpha^\bullet} && K_{\bullet+1}(\mathcal{A}\rtimes_\alpha\ZZ)\ar[d]^{(\rho\rtimes\ZZ)_*}\\
    K_\bullet(\Ind(\mathcal{B},\beta))\ar[rr]^{\gamma_\beta^\bullet} &&  K_{\bullet+1}(\mathcal{B}\rtimes_\beta\ZZ) 
    }
\end{equation}
\end{axioms}
\begin{axioms}[Suspension]\label{suspension}
$\gamma$ is compatible with $S$ in the sense that
\begin{equation}
    S^{(\bullet+1)}_{\mathcal{A\rtimes_\alpha\ZZ}}\circ\gamma^\bullet_\alpha=\gamma^{(\bullet-1)}_{S\alpha}\circ S^\bullet_{\Ind(\mathcal{A},\alpha)}.
\end{equation}
\end{axioms}

\begin{proposition}\label{uniqueness}
There is a unique $\gamma$ satisfying Axioms \ref{normalization}-\ref{suspension}. 
\end{proposition}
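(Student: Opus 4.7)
The strategy is to show that any $\gamma$ satisfying the three axioms is uniquely determined on every class in $K_\bullet(\Ind(\mathcal{A},\alpha))$. I would proceed in three stages, reducing the problem from general $(\mathcal{A},\alpha)$ and both degrees down to the single normalized universal class of Axiom~\ref{normalization}. First, Axiom~\ref{suspension} (suspension compatibility) lets me reduce the problem to a single degree: since each $S^\bullet$ is a natural isomorphism, knowing $\gamma^0_\alpha$ for every $\ZZ$-algebra $(\mathcal{A},\alpha)$ determines $\gamma^1_{S^{-1}\alpha}$ and hence, via Bott periodicity, $\gamma^1$ on every $\ZZ$-algebra. So it suffices to show uniqueness of $\gamma^0_\alpha\colon K_0(\Ind(\mathcal{A},\alpha))\to K_1(\mathcal{A}\rtimes_\alpha\ZZ)$.

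Next I would handle the trivial-action case $(\mathcal{A},\mathrm{id})$, where $\Ind(\mathcal{A},\mathrm{id})\cong C(\TT)\otimes\mathcal{A}$ and $\mathcal{A}\rtimes_{\mathrm{id}}\ZZ\cong \mathcal{A}\otimes C^*(\ZZ)$. By the K\"unneth theorem (applicable since $K_*(C(\TT))$ is free abelian), every $K_0$-class of $C(\TT)\otimes\mathcal{A}$ decomposes as $[1_{C(\TT)}]\otimes[p]+[z]\otimes[u]$ with $[p]\in K_0(\mathcal{A})$ and $[u]\in K_1(\mathcal{A})$. Axiom~\ref{naturality} applied to the $\ZZ$-equivariant unit inclusion $(\CC,\mathrm{id})\hookrightarrow(\mathcal{A},\mathrm{id})$ transports the universal class $[1_{C(\TT)}]$, so together with Axiom~\ref{normalization} and stability (itself a consequence of naturality with respect to corner embeddings $\mathcal{A}\hookrightarrow M_n(\mathcal{A})$), this pins down $\gamma^0_{\mathrm{id}}$ on the first K\"unneth summand. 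The second summand is handled by reapplying Axiom~\ref{suspension}: a class of the form $[z]\otimes[u]$ is identified with $S^1_{\Ind(\mathcal{A},\mathrm{id})}$ of a class of the form $[1]\otimes[\tilde u]$ for the suspended algebra $(S\mathcal{A},\mathrm{id})$, reducing to the previous case.

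For general $(\mathcal{A},\alpha)$, I would use the natural short exact sequence
\begin{equation*}
0\longrightarrow S\mathcal{A}\longrightarrow \Ind(\mathcal{A},\alpha)\stackrel{\mathrm{ev}_0}{\longrightarrow}\mathcal{A}\longrightarrow 0
\end{equation*}
and its Toeplitz-extension counterpart $0\to\mathcal{A}\otimes\mathcal{K}\to\mathcal{T}_\alpha\to\mathcal{A}\rtimes_\alpha\ZZ\to 0$ on the crossed-product side. These yield respectively the mapping-torus hexagon of Proposition~\ref{prop:d-1CT} and the classical Pimsner--Voiculescu hexagon, both with connecting map $1-\alpha^{-1}_*$. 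Naturality of $\gamma$ applied to the relevant morphisms of $\ZZ$-algebras (identified using Proposition~\ref{prop:suspensionmappingtorus}) produces a commutative ladder between the two hexagons whose vertical arrows are various degree-twists of $\gamma$. The outer terms of the ladder involve $K_\bullet(\mathcal{A})$ (with the trivial action on the $S\mathcal{A}$ and $\mathcal{A}$ ends), where $\gamma$ is already pinned down by Stage~2; a five-lemma argument along the ladder then forces $\gamma$ on the middle term $K_\bullet(\Ind(\mathcal{A},\alpha))$ as well.

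The main obstacle lies in the third stage: the mapping-torus exact sequence is a sequence of $C^*$-algebras but not naturally of $\ZZ$-algebras, because $\Ind(\mathcal{A},\alpha)$ carries only an $\RR$-translation action, not a canonical $\ZZ$-action compatible with the trivial action on the end terms. The technical heart is therefore to produce the required ladder of six-term sequences without a direct ``$\ZZ$-equivariant morphism'' between the two extensions. The trick is to view the two hexagons as distinct realizations of the same Pimsner--Voiculescu sequence and to let $\gamma$ itself serve as the comparison map, using Axioms~\ref{naturality} and~\ref{suspension} to verify that the connecting maps on both sides genuinely coincide with $1-\alpha^{-1}_*$. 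Once this compatibility is in place, the diagram chase closes the argument.
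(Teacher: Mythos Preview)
Your reduction via Axiom~\ref{suspension} to $\gamma^0$ is fine and matches the paper. After that, however, your approach diverges from the paper's and runs into a genuine gap in Stage~3.

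The problem is twofold. First, you correctly identify that the mapping-torus short exact sequence is not a sequence of $\ZZ$-algebras for the actions you need, so Axiom~\ref{naturality} does not directly produce the ladder. Your proposed fix---``let $\gamma$ itself serve as the comparison map, using Axioms~\ref{naturality} and~\ref{suspension} to verify that the connecting maps on both sides genuinely coincide with $1-\alpha^{-1}_*$''---does not close the gap: knowing that both hexagons have the same $1-\alpha^{-1}_*$ at the corners says nothing about whether the squares involving $\gamma_\alpha$ commute. There is no $\ZZ$-equivariant morphism to which naturality applies to force $\partial\circ\gamma_\alpha=\varepsilon_*$ or $\gamma_\alpha\circ j_* = \jmath_*$. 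Second, even granting a commutative ladder, a five-lemma argument proves that $\gamma_\alpha$ is an isomorphism, not that it is unique. In the split case ($\alpha_*=\mathrm{id}$ on $K$-theory) one has $K_0(\Ind(\mathcal A,\alpha))\cong K_0(\mathcal A)\oplus K_1(\mathcal A)$ and $K_1(\mathcal A\rtimes_\alpha\ZZ)\cong K_0(\mathcal A)\oplus K_1(\mathcal A)$; any map of the form $\begin{psmallmatrix}1&\ast\\0&1\end{psmallmatrix}$ makes the relevant squares commute, so there is a ${\rm Hom}(K_0(\mathcal A),K_1(\mathcal A))$ worth of ambiguity that the ladder alone cannot eliminate.

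The paper instead follows Connes' original uniqueness argument for the Thom isomorphism, adapted to the $\ZZ$-setting. Given a projection $p\in\Ind(\mathcal A,\alpha)$, one passes via an explicit $1$-cocycle to an exterior-equivalent action $\alpha'$ on $\mathcal A$ that \emph{fixes} $p_0=p(0)$; the corresponding projection $p'\in\Ind(\mathcal A,\alpha')$ is then translation-invariant, hence lies in the image of the genuinely $\ZZ$-equivariant map $\omega:(\CC,\mathrm{id})\to(\mathcal A,\alpha')$, $\lambda\mapsto\lambda p_0$. Axioms~\ref{normalization} and~\ref{naturality} now pin down $\gamma_{\alpha'}[p']=(\omega\rtimes\ZZ)_*[b]$ directly. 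Finally, the $2\times 2$ matrix trick (Lemma~\ref{inner1} and Corollary~\ref{exteriorreplacement}), which \emph{is} an application of Axiom~\ref{naturality} to the corner inclusions into $(M_2(\mathcal A),\kappa)$, shows that $\gamma_\alpha$ and $\gamma_{\alpha'}$ are intertwined by explicit isomorphisms, yielding $\gamma_\alpha[p]$. This computes $\gamma_\alpha$ class by class without ever invoking the PV hexagons, and indeed is what later \emph{produces} Paschke's formula and the intertwining with the Toeplitz sequence---the intertwining is a consequence of uniqueness, not an input to it.
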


\begin{proof}
{\bf Outline:} We use a modification of Connes' argument for the uniqueness of $\phi$. Axiom \ref{suspension} and Bott periodicity ensures that we only need to look at $\gamma^0$ (which we will simply denote by $\gamma$ subsequently). We need to show that $\gamma_\alpha[p]\in K_1(\mathcal{A}\rtimes_\alpha\ZZ)$ is uniquely determined for any projection $p\equiv\{p_\theta\}_{\theta\in[0,1]}$ in $\Ind(\mathcal{A},\alpha)$. The basic idea is to construct a $\ZZ$-action $\alpha'$, exterior equivalent to $\alpha$, which fixes $p_0$. Furthermore the new induced algebra $\Ind(\mathcal{A},\alpha')$, with its translation action $\tau^{\alpha'}$, is $\RR$-equivariantly isomorphic to the original $\Ind_{\alpha}\mathcal{A}$ with a \emph{modified} action $\tau'$, where $\tau'$ is exterior equivalent to the original $\tau^\alpha$ and fixes $p$. Thus the projection $p'\in\Ind(\mathcal{A},\alpha')$ corresponding to $p\in\Ind(\mathcal{A},\alpha)$ under this isomorphism is itself translation invariant under $\tau^{\alpha'}$. Then the Axioms determine what $\gamma_{\alpha'}[p']$ has to be. Finally we need to argue that the modification $\alpha\mapsto\alpha'$ can be assumed because the above exterior equivalences determine unique maps, consistent with the Axioms, which yield the desired $\gamma_\alpha[p]$ from $\gamma_{\alpha'}[p']$.

{\bf Details:} 
As in \cite{C}, we may assume without loss that $\mathcal{A}$ is unital and $p\in\Ind(\mathcal{A},\alpha)$. By \cite{C} Prop.\ 4 (c.f.\ \cite{CMR} Lemma 10.16, \cite{Rieffel2}), we may also assume that on $\Ind(\mathcal{A},\alpha)$ there is an action $\RR$-action $\tau'$, exterior equivalent to $\tau^\alpha$, which fixes $p$. Let $U\equiv\{U_t\}_{t\in\RR}$ be the 1-cocycle on $\Ind(\mathcal{A},\alpha)$ which relates $\tau^\alpha$ and $\tau'$ via $\tau'_t=\Ad(U_t)\circ\tau^\alpha_t$. Thus \ $\{U_t\}_{t\in\RR}$ satisfies the cocycle condition and equivariance condition, 
\begin{equation}
    U_{t_1+t_2}=U_{t_1}\tau^\alpha_{t_2}(U_{t_2}),\qquad U_t(1)=\alpha(U_t(0)), \qquad t,t_1,t_2\in\RR.\label{1cocycleinduced}
\end{equation}
Define the modified $\ZZ$-action $\alpha'$ on $\mathcal{A}$ via 
\begin{equation}
    \alpha'_n=\Ad(U_n(0))\circ\alpha_n, \qquad n\in\ZZ.\label{modifiedZaction}
\end{equation}
We can verify, from \eqref{1cocycleinduced}, that the assignment $u:n\mapsto U_n(0)\eqqcolon u_n,\, n\in\ZZ$ defines a unitary 1-cocycle $\{u_n\}_{n\in\ZZ}$ on $(\mathcal{A},\alpha)$, i.e., $u_{m+n}=u_m\alpha^m(u_n)$. Thus $\mathcal{A}\rtimes_{\alpha'}\ZZ\cong\mathcal{A}\rtimes_\alpha\ZZ$ (a canonical isomorphism $\varphi_u$ is given in Lemma \ref{inner1} later).

Since $\tau'_1=\Ad(U_1)\circ \tau^\alpha_1$ fixes $p$, it follows that $p_0\coloneqq p(0)\in\mathcal{A}$ fixed by $\alpha'$. Define the isomorphism $\Psi_U:\Ind(\mathcal{A},\alpha)\rightarrow\Ind(\mathcal{A},\alpha')$ by 
\begin{equation}
   \Psi_Uf(s)=\Ad(U_s(0))[f(s)],\qquad s\in\RR, f\in \Ind(\mathcal{A},\alpha).\label{isoZcrossed}
\end{equation}
We can verify that $\Psi_Uf$ does satisfy $\alpha'$-equivariance,
and intertwines $\tau'$ on $\Ind(\mathcal{A},\alpha)$ with the new translation action $\tau^{\alpha'}$ on $\Ind(\mathcal{A},\alpha')$, i.e.\ $(\tau^{\alpha'}_t(\Psi_U f))= (\Psi_U(\tau'_t f))$.

Observe that $p'(\theta)=p'_0=p_0$ since $p'=\Psi_U(p)$ is fixed by $\tau^{\alpha'}$. The homomorphism
\begin{equation*}
    \omega:(\CC,\mathrm{id})\rightarrow (\mathcal{A},\alpha'),\;\;\;\;\;\omega(\lambda)=\lambda p'_0
\end{equation*}
is $\ZZ$-equivariant, and we write $\tilde{\omega}$ for the corresponding inflated map $\tilde{\omega}:C(\TT)\cong\Ind(\CC,\mathrm{id})\rightarrow\Ind(\mathcal{A},\alpha')$. Note that
\begin{equation*}
    (\tilde{\omega} (1_{C(\TT)}))(\theta)\equiv \omega(1_{C(\TT)}(\theta))=p'_0=p'(\theta),\qquad \theta\in [0,1],
\end{equation*}
so $p'=\tilde{\omega}(1_{C(\TT)})$. Then naturality and normalization means that $\gamma_{\alpha'}[p']$ must be determined by the equation
\begin{equation}
    \gamma_{\alpha'}[p']=\gamma_{\alpha'}\circ\tilde{\omega}_*[1_{C(\TT)}]=(\omega\rtimes\ZZ)_*\circ \gamma_{\mathrm{id}}[1_{C(\TT)}]=(\omega\rtimes\ZZ)_*[b].\label{CTforfixedprojection}
\end{equation}
Corollary \ref{exteriorreplacement} gives the details of how $\gamma[p]$ is obtained from $\gamma_{\alpha'}[p']$. 
\end{proof}

\subsubsection*{Connes' $2\times 2$ matrix trick}
\begin{lemma}[c.f.\ \cite{C} Lemma 2]\label{inner1}
Let $\alpha, \alpha'$ be exterior equivalent $\ZZ$-actions on $\mathcal{A}$ related by a unitary 1-cocycle $\ZZ\ni n \mapsto u_n \in U\mathcal{M}\mathcal{A}$. Then 
\begin{equation}
    \kappa\begin{pmatrix}a_{11} & a_{12} \\ a_{21} & a_{22}
    \end{pmatrix}= \begin{pmatrix} \alpha(x_{11}) & \alpha(x_{12})u_1^* \\ u_1\alpha(x_{21}) & \alpha'(x_{22}) 
    \end{pmatrix}
\end{equation}
is a $\ZZ$-action on $M_2(\mathcal{A})$ which restricts to $\alpha$ in the top-left corner and to $\alpha'$ in the bottom-right corner. Let $\iota, \iota'$ be the equivariant inclusions of $\mathcal{A}$ into their respective corners in $M_2(\mathcal{A})$. There is a unique isomorphism $\varphi_u:\mathcal{A}\rtimes_\alpha\ZZ\rightarrow\mathcal{A}\rtimes_{\alpha'}\ZZ$ such that $(\iota'\rtimes\ZZ)(\varphi_u(y))=\Ad(\begin{pmatrix} 0 & 1 \\ 1 & 0\end{pmatrix})[(\iota\rtimes\ZZ)(y)]$ for all $y\in\mathcal{A}\rtimes_\alpha\ZZ$.
\end{lemma}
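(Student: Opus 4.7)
The plan has two movements. First, I verify that the displayed formula for $\kappa$ really assembles into a $\ZZ$-action on $M_2(\mathcal{A})$. Second, viewing the flip matrix $v=\begin{pmatrix}0&1\\1&0\end{pmatrix}$ as a unitary multiplier of $M_2(\mathcal{A})\rtimes_\kappa\ZZ$, I read off $\varphi_u$ by direct computation of $\Ad(v)$ on a generator.

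For the first movement, a direct check shows $\kappa$ is a $*$-homomorphism; the pivotal identity is $u_1^{*}\alpha'(x)=\alpha(x)u_1^{*}$, which comes from $\alpha'=\Ad(u_1)\circ\alpha$ and makes the off-diagonal entries in $\kappa(a)\kappa(b)$ collapse correctly. An induction on $n$ then gives
$$
\kappa^n\!\begin{pmatrix}a_{11}&a_{12}\\a_{21}&a_{22}\end{pmatrix}
=\begin{pmatrix}\alpha^n(a_{11})&\alpha^n(a_{12})\,u_n^{*}\\u_n\,\alpha^n(a_{21})&(\alpha')^n(a_{22})\end{pmatrix},
$$
where the inductive step uses the cocycle identity $u_{n+1}=u_1\,\alpha(u_n)$, equivalently $u_{n+1}^{*}=\alpha(u_n^{*})u_1^{*}$; negative powers follow from $u_{-n}=\alpha^{-n}(u_n^{*})$, itself a consequence of $u_0=1$. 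In particular $\kappa\circ\iota=\iota\circ\alpha$ and $\kappa\circ\iota'=\iota'\circ\alpha'$, so $\iota$ and $\iota'$ are equivariant and induce injective $*$-homomorphisms $\iota\rtimes\ZZ:\mathcal{A}\rtimes_\alpha\ZZ\to M_2(\mathcal{A})\rtimes_\kappa\ZZ$ and $\iota'\rtimes\ZZ:\mathcal{A}\rtimes_{\alpha'}\ZZ\to M_2(\mathcal{A})\rtimes_\kappa\ZZ$.

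For the second movement, let $U$ denote the canonical unitary generator of the appropriate crossed product (context fixes the domain). Using $\kappa^n(v^{*})=\begin{pmatrix}0&u_n^{*}\\u_n&0\end{pmatrix}$ extracted from the formula above and the relation $U^n v^{*}=\kappa^n(v^{*})U^n$, I compute on a generator $aU^n\in\mathcal{A}\rtimes_\alpha\ZZ$:
\begin{align*}
\Ad(v)\bigl[(\iota\rtimes\ZZ)(aU^n)\bigr]
&= v\,\iota(a)\,\kappa^n(v^{*})\,U^n\\
&= \begin{pmatrix}0&1\\1&0\end{pmatrix}\!\begin{pmatrix}a&0\\0&0\end{pmatrix}\!\begin{pmatrix}0&u_n^{*}\\u_n&0\end{pmatrix}\!U^n
= \iota'(au_n^{*})\,U^n
= (\iota'\rtimes\ZZ)\bigl(au_n^{*}U^n\bigr).
\end{align*}
Injectivity of $\iota'\rtimes\ZZ$ then forces the unique definition $\varphi_u(aU^n)=au_n^{*}U^n$ on generators, which extends by linearity and continuity. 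It is a $*$-homomorphism, either because $\Ad(v)$ is one and the two corner images are $\Ad(v)$-conjugate $*$-subalgebras, or by direct verification using $u_{m+n}^{*}=\alpha^m(u_n^{*})u_m^{*}$. The mirror construction, using that $\alpha'$ is in turn exterior equivalent to $\alpha$ via the inverse cocycle, yields a two-sided inverse, so $\varphi_u$ is a $*$-isomorphism, and uniqueness is already built into the construction.

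The principal obstacle is the bookkeeping in the first movement: correctly placing $u_n$ versus $u_n^{*}$ in the off-diagonal entries of $\kappa^n$, and recognising that the exterior-equivalence identity in the form $u_1^{*}\alpha'(x)=\alpha(x)u_1^{*}$ is precisely what is needed both to make $\kappa$ multiplicative and to propagate the inductive form. Once that formula is secured, the computation of $\kappa^n(v^{*})$ and of $\Ad(v)$ on generators is essentially a one-line matrix manipulation.
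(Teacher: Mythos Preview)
Your proof is correct. The paper does not supply its own proof of this lemma, merely citing Connes' Lemma 2, so your argument stands as a complete verification of a result the paper takes for granted; your approach---building $\kappa^n$ inductively from the cocycle identity and then reading off $\varphi_u$ from the explicit action of $\Ad(v)$ on generators $aU^n$---is the standard one and matches what is implicit in Connes' original.
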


Similarly, there are inclusions $\tilde{\iota}, \tilde{\iota'}$ of the induced algebras into the respective corners of $\Ind_\kappa M_2(\mathcal{A})$. Let $\tau^\alpha, \tau'$ be the exterior equivalent $\RR$-actions on $\Ind(\mathcal{A},\alpha)$ as in the proof of Proposition \ref{uniqueness}, related by the 1-cocycle $\{U_t\}_{t\in\RR}$. A straightforward computation shows:
\begin{lemma}\label{inner2}
Define $X\in\Ind_{\kappa} M_2(\mathcal{A})$ by
\begin{equation*}
X(s)=\begin{pmatrix} 0 & U_s^*(0)\\ U_s(0) & 0
\end{pmatrix}.
\end{equation*}
Then the isomorphism $\Psi_U$ satisfies $\tilde{\iota'}(\Psi_U(f))=\Ad(X)[\tilde{\iota}f], f\in\Ind(\mathcal{A},\alpha)$.
\end{lemma}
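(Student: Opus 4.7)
The plan is to establish the lemma by a direct $2\times 2$ matrix computation, split into two parts: verifying that $X$ lies in $\Ind_\kappa M_2(\mathcal{A})$, and then checking that conjugation by $X$ sends $\tilde{\iota}f$ to $\tilde{\iota'}(\Psi_U f)$. Neither step is difficult in an essential sense; the only care needed is in tracking the $\RR$-cocycle identity for $\{U_t\}_{t\in\RR}$ and how it interlocks with the $\alpha$-equivariance built into the induced algebra.

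For the first step, I would compute both $X(s+1)$ and $\kappa(X(s))$ entry by entry. The cocycle identity $U_{1+s}=U_1\,\tau^\alpha_1(U_s)$ from \eqref{1cocycleinduced}, evaluated at $0$ and combined with the equivariance condition $U_s(1)=\alpha(U_s(0))$, yields
\[
U_{s+1}(0)=U_1(0)\,\alpha(U_s(0))=u_1\,\alpha(U_s(0)),
\]
where $u_1=U_1(0)$ as in \eqref{modifiedZaction}. This matches the $(2,1)$-entry of $\kappa(X(s))$ read off from the definition of $\kappa$ in Lemma \ref{inner1}, and the $(1,2)$-entry then matches by taking adjoints. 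Continuity of $X$ is inherited from continuity of $s\mapsto U_s(0)$, and $X(s)$ is manifestly unitary in the multiplier algebra since $U_s(0)$ is. Hence $X\in\Ind_\kappa M_2(\mathcal{A})$.

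For the second step, I would multiply out
\[
X(s)\begin{pmatrix} f(s) & 0\\ 0 & 0\end{pmatrix}X(s)^{*}
\]
using the purely off-diagonal block form of $X(s)$. All entries cancel except the bottom-right one, which collapses to $U_s(0)\,f(s)\,U_s(0)^{*}=\Ad(U_s(0))[f(s)]=(\Psi_U f)(s)$ by the definition of $\Psi_U$ in \eqref{isoZcrossed}. This is exactly $\tilde{\iota'}(\Psi_U f)(s)$, so the required identity holds pointwise in $s$.

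The conceptual reason the proof is so short is that a single cocycle $\{U_s(0)\}_{s\in\RR}$ simultaneously implements the exterior equivalence $\alpha\sim\alpha'$ (at integer $s$, through $u_n=U_n(0)$) and the isomorphism $\Psi_U$ of induced algebras (for all real $s$). Packaging it as the off-diagonal unitary $X$ inside the $2\times 2$ amplification captures both facts at once, so a single conjugation suffices and there is no genuine obstacle to overcome.
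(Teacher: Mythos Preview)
Your proposal is correct and matches the paper's approach: the paper simply states that ``a straightforward computation shows'' the lemma, and you have supplied exactly that computation, checking the $\kappa$-equivariance of $X$ via the cocycle identity and then the conjugation identity by direct matrix multiplication. There is nothing to add.
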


\begin{corollary}\label{exteriorreplacement}
Suppose $\gamma$ satisfies Axioms \ref{normalization}-\ref{suspension}, and $\alpha, \alpha', U$ are as above. Then $\gamma^\bullet_{\alpha'}=(\varphi_u)_*\circ \gamma^\bullet_\alpha\circ(\Psi_U^{-1})_*$.
\end{corollary}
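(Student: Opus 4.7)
The plan is to use $(M_2(\mathcal{A}),\kappa)$ as an intermediate $\ZZ$-algebra bridging $(\mathcal{A},\alpha)$ and $(\mathcal{A},\alpha')$, so that both $\gamma^\bullet_\alpha$ and $\gamma^\bullet_{\alpha'}$ can be compared with $\gamma^\bullet_\kappa$ via Axiom \ref{naturality}. Lemma \ref{inner1} already provides the equivariant corner inclusions $\iota:(\mathcal{A},\alpha)\hookrightarrow(M_2(\mathcal{A}),\kappa)$ and $\iota':(\mathcal{A},\alpha')\hookrightarrow(M_2(\mathcal{A}),\kappa)$, and Lemmas \ref{inner1}--\ref{inner2} realize $\varphi_u$ and $\Psi_U$ as conjugation by explicit unitaries once one passes through these corners. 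The inner nature of these conjugations will be what trivializes the obstruction to intertwining.

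First, I would apply Axiom \ref{naturality} to $\iota$ and $\iota'$, yielding
\begin{equation*}
(\iota\rtimes\ZZ)_*\circ\gamma^\bullet_\alpha=\gamma^\bullet_\kappa\circ\tilde\iota_*,\qquad (\iota'\rtimes\ZZ)_*\circ\gamma^\bullet_{\alpha'}=\gamma^\bullet_\kappa\circ\tilde\iota'_*.
\end{equation*}
Next, I would verify that the element $X$ of Lemma \ref{inner2} really lies in $\Ind(M_2(\mathcal{A}),\kappa)$; the $\kappa$-equivariance $X(s+1)=\kappa(X(s))$ reduces to the identity $U_{s+1}(0)=U_1(0)\,\alpha(U_s(0))$, which is exactly \eqref{1cocycleinduced} with $t_1=1$, $t_2=s$, evaluated at $0$. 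Since conjugation by a unitary in (the multiplier algebra of) a $C^*$-algebra induces the identity on $K$-theory, Lemmas \ref{inner1}--\ref{inner2} then yield
\begin{equation*}
\tilde\iota'_*\circ(\Psi_U)_*=\tilde\iota_*,\qquad (\iota'\rtimes\ZZ)_*\circ(\varphi_u)_*=(\iota\rtimes\ZZ)_*.
\end{equation*}
Chaining these four identities produces
\begin{equation*}
(\iota'\rtimes\ZZ)_*\circ\gamma^\bullet_{\alpha'}\circ(\Psi_U)_*=\gamma^\bullet_\kappa\circ\tilde\iota_*=(\iota\rtimes\ZZ)_*\circ\gamma^\bullet_\alpha=(\iota'\rtimes\ZZ)_*\circ(\varphi_u)_*\circ\gamma^\bullet_\alpha,
\end{equation*}
and cancelling $(\iota'\rtimes\ZZ)_*$ on the left delivers $\gamma^\bullet_{\alpha'}=(\varphi_u)_*\circ\gamma^\bullet_\alpha\circ(\Psi_U^{-1})_*$, as required.

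The main obstacle I anticipate is justifying the final cancellation, which requires $(\iota'\rtimes\ZZ)_*$ to be injective (it will in fact be an isomorphism). This reduces to checking that $e_{22}$ is a full $\kappa$-invariant projection in $M_2(\mathcal{A})$, so that the bottom-right corner embedding $\mathcal{A}\rtimes_{\alpha'}\ZZ\hookrightarrow M_2(\mathcal{A})\rtimes_\kappa\ZZ$ is a full corner and hence induces a Morita equivalence. This is standard but requires some care in the nonunital setting, where $e_{22}$ must be placed in $M_2(\mathcal{M}(\mathcal{A}))$ rather than $M_2(\mathcal{A})$ itself; an analogous verification is needed for $\tilde\iota'_*$ on the induced-algebra side.
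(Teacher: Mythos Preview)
Your proof is correct and follows essentially the same route as the paper's: apply naturality to the corner inclusions $\iota,\iota'$ into $(M_2(\mathcal{A}),\kappa)$, use that $\varphi_u$ and $\Psi_U$ become inner after passing to $M_2$ (Lemmas \ref{inner1}--\ref{inner2}) so that $(\iota\rtimes\ZZ)_*=(\iota'\rtimes\ZZ)_*\circ(\varphi_u)_*$ and $\tilde\iota_*=\tilde\iota'_*\circ(\Psi_U)_*$, chain, and cancel $(\iota'\rtimes\ZZ)_*$. The only cosmetic difference is that the paper simply invokes \cite[Propositions 2--3]{C} for the injectivity of $(\iota'\rtimes\ZZ)_*$, whereas you justify it via the full-corner/Morita argument; these are the same mechanism.
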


\begin{proof}
Recall that inner automorphisms induce the identity map in $K$-theory. Thus Lemmas \ref{inner1} and \ref{inner2} show that $(\iota\rtimes\ZZ)_*=(\iota'\rtimes\ZZ)_*\circ(\varphi_u)_*$ and $\tilde{\iota}_*=\tilde{\iota'}_*\circ(\Psi_U)_*$. Using the naturality axiom, and dropping the superscript ${}^\bullet$ for now,
\begin{equation*}
    (\iota'\rtimes\ZZ)_*\circ\gamma_{\alpha'}= \gamma_\kappa\circ \tilde{\iota'}_*= \gamma_\kappa\circ \tilde{\iota}_*\circ(\Psi_U^{-1})_*= (\iota\rtimes\ZZ)_*\circ\gamma_\alpha\circ(\Psi_U^{-1})_*= (\iota'\rtimes\ZZ)_*\circ(\varphi_u)_*\circ\gamma_\alpha\circ(\Psi_U^{-1})_*,
\end{equation*}
so injectivity of $(\iota'\rtimes\ZZ)_*$ (see \cite{C} Proposition 2-3) gives the required result.
\end{proof}

\subsection*{Paschke's isomorphisms}
Paschke's isomorphisms \cite{Pas}
$$\gamma^{\bullet,\rm{Paschke}}_\alpha: K_\bullet(\Ind(\mathcal{A},\alpha))\rightarrow K_{\bullet+1}(\mathcal{A}\rtimes_\alpha\ZZ)$$ 
are first defined for unital $\mathcal{A}$ and $\bullet=0$ on $[p]$, by exhibiting the existence of a path of unitaries $\theta\mapsto w_\theta\in \mathcal{A}$ such that 
\begin{equation}
    p(\theta)=\Ad(w_\theta)(p_0),\qquad \theta\in[0,1].\label{Paschkeunitary}
\end{equation}
In particular, $p_1=\Ad(w_1)(p_0)$. Then $\gamma^{0,\rm{Paschke}}_\alpha[p]$ is defined to be \begin{equation}
    \gamma^{0,\rm{Paschke}}_\alpha[p]=[L^*w_1p_0+1-p_0],\label{Paschkeformula}
\end{equation}
where $L\in\mathcal{M}(\mathcal{A}\rtimes_\alpha\ZZ)$ is the unitary implementing $\alpha$ (i.e.\ $\alpha(a)=LaL^*, a\in\mathcal{A}$), with this map shown to be well-defined in $K$-theory\footnote{There is no loss of generality in assuming that $\mathcal{A}$ is unital and that $p$ is in $\Ind(\mathcal{A},\alpha)$ rather than in $\Ind(\mathcal{A},\alpha)\otimes M_n$, see \cite{Pas}.}.

The $\bullet=1$ case is defined by compatibility with suspensions. Since $\gamma^{0,\rm{Paschke}}_\alpha$ may be expressed using representative projections and unitaries (rather than their $K$-theory classes), it is easy to see that the naturality axiom \ref{naturality} is satisfied. It is also clear that normalization axiom \ref{normalization} is satisfied up to a minus sign.

On the other hand, there is a commutative diagram due to the Connes--Thom natural isomorphisms,
\begin{equation}
    \xymatrix{K_\bullet(\Ind(\mathcal{A},\alpha))\ar[d]_{\tilde{\rho}_*}\ar[rr]^{\phi_{\tau^\alpha}^\bullet} && K_{\bullet+1}(\Ind(\mathcal{A},\alpha)\rtimes_\alpha\RR)\ar[d]^{(\tilde{\rho}\rtimes\RR)_*}\\
    K_\bullet(\Ind(\mathcal{B},\beta))\ar[rr]^{\phi_{\tau^\beta}^\bullet} &&  K_{\bullet+1}(\Ind(\mathcal{B},\beta)\rtimes_\beta\RR) 
    }.
\end{equation}
Composing $\phi_{\tau^\alpha}^\bullet$ with the natural $K$-theory isomorphisms 
\begin{equation*}
    M_\alpha^{\bullet+1,\rm Green}:K_{\bullet+1}(\Ind(\mathcal{A},\alpha)\rtimes_{\tau^\alpha}\RR)\cong K_{\bullet+1}(\mathcal{A}\rtimes_\alpha \ZZ)
\end{equation*}
given by the Morita equivalence $\Ind(\mathcal{A},\alpha)\rtimes_{\tau^\alpha}\RR\sim_{\rm M.E.}\mathcal{A}\rtimes_\alpha\ZZ$ (this is a special case of Green's imprimitivity theorem \cite{Green}), we obtain another $\gamma$ satisfying Axioms \ref{normalization}-\ref{suspension}: The normalization axiom follows from the fact that $\phi_{\tau^{\rm id}}$ is an isomorphism, so there is only one possibility (apart from a sign choice) for $M_{\rm id}^{0,\rm Green}\circ\phi_{\tau^{\rm id}}:\ZZ[1_{C(\TT)}]\rightarrow\ZZ[b]$. Compatibility with suspensions is inherited from the Connes--Thom map, while naturality follows from naturality of implementing the Morita equivalence (see \cite{Echteroff} Chapter 4). By Proposition \ref{uniqueness}, we have

\begin{theorem}\label{PaschkeequalsConnesThom}
The map $\gamma^{\bullet,\,\rm{Paschke}}_\alpha:K_\bullet(\Ind(\mathcal{A},\alpha))\rightarrow K_{\bullet+1}(\mathcal{A}\rtimes_\alpha\ZZ)$ is, up to a sign, the composition $M_\alpha^{\bullet+1,\,\rm Green}\circ\phi^\bullet_{\tau^\alpha}$.

\end{theorem}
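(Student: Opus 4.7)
The plan is to invoke the uniqueness result of Proposition \ref{uniqueness}. I would verify that both $\gamma^{\bullet,\rm Paschke}_\alpha$ and the composition $M^{\bullet+1,\rm Green}_\alpha\circ\phi^\bullet_{\tau^\alpha}$ are natural transformations satisfying Axioms \ref{normalization}--\ref{suspension} up to signs, and then deduce that they agree up to an overall sign.

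First I would check the axioms for $M^{\bullet+1,\rm Green}_\alpha\circ\phi^\bullet_{\tau^\alpha}$. Naturality is the composite of the naturality of Connes--Thom (by its defining property) with that of Green's imprimitivity isomorphism, which is functorial under equivariant morphisms of the underlying $\ZZ$-algebras (standard, see Chapter 4 of \cite{Echteroff}). Compatibility with suspensions follows from the analogous property of the Connes--Thom map together with the fact that tensoring Green's imprimitivity bimodule by $C_0(\RR)$ implements the imprimitivity equivalence for the suspended algebras. For normalization on $(\CC,\mathrm{id})$, $\Ind_\ZZ^\RR(\CC,\mathrm{id})\cong C(\TT)$, and the composition is an isomorphism between the infinite cyclic groups $K_0(C(\TT))$ and $K_1(C^*(\ZZ))$, so it sends $[1_{C(\TT)}]$ to $\pm[b]$.

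Next I would verify the axioms for Paschke's map. Naturality is transparent from the formula \eqref{Paschkeformula}: if $\rho:(\mathcal{A},\alpha)\to(\mathcal{B},\beta)$ is an equivariant morphism and $\{w_\theta\}$ implements $p$ via \eqref{Paschkeunitary}, then $\{\rho(w_\theta)\}$ implements $\tilde{\rho}(p)$, and the multiplier unitaries $L^\alpha, L^\beta$ are intertwined by $\rho\rtimes\ZZ$. Suspension compatibility is built into Paschke's definition of the $\bullet=1$ case. For normalization on $(\CC,\mathrm{id})$, take $p_\theta=1$ for all $\theta$, so $w_\theta=1$ satisfies \eqref{Paschkeunitary}, and \eqref{Paschkeformula} gives $\gamma^{0,\rm Paschke}_{\mathrm{id}}[1_{C(\TT)}]=[L^*]$. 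Since $L$ corresponds to $1\in\ZZ\subset C^*(\ZZ)$, we have $[L^*]=-[b]$ in $K_1(C^*(\ZZ))$, so Paschke's map satisfies the axioms with normalization sign $-1$.

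By Proposition \ref{uniqueness}, applied to both maps (noting that multiplication by $-1$ preserves Axioms \ref{naturality}--\ref{suspension} while flipping the sign in Axiom \ref{normalization}, so the proposition still forces uniqueness given a chosen sign), the two natural transformations coincide up to an overall sign, which is Theorem \ref{PaschkeequalsConnesThom}. The main obstacle in executing this plan rigorously is the careful treatment of the suspension axiom for the composition $M\circ\phi$: one must check that Green's imprimitivity bimodule behaves functorially under suspension, so that the induced $K$-theory isomorphisms intertwine the natural identifications $S(\Ind(\mathcal{A},\alpha)\rtimes_{\tau^\alpha}\RR)\cong \Ind(S\mathcal{A},S\alpha)\rtimes_{\tau^{S\alpha}}\RR$ and $S(\mathcal{A}\rtimes_\alpha\ZZ)\cong S\mathcal{A}\rtimes_{S\alpha}\ZZ$; the ingredients are standard but the bookkeeping is nontrivial.
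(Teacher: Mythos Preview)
Your proposal is correct and follows essentially the same route as the paper: verify that both $\gamma^{\bullet,\rm Paschke}_\alpha$ and $M_\alpha^{\bullet+1,\rm Green}\circ\phi^\bullet_{\tau^\alpha}$ satisfy Axioms \ref{normalization}--\ref{suspension} (the former with a minus sign on normalization, exactly as you compute $[L^*]=-[b]$), and then appeal to the uniqueness in Proposition \ref{uniqueness}. Your explicit computation of the normalization for Paschke's map and your remark that the suspension compatibility of Green's imprimitivity bimodule requires some bookkeeping are accurate elaborations of points the paper leaves implicit.
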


\subsubsection*{Paschke's unitaries and Connes' cocycle}
Given a $p\in\Ind(\mathcal{A},\alpha)$, we may start from Connes' cocycle $\{U_t\}_{t\in\RR}$ in \eqref{1cocycleinduced} above, and obtain a path of unitaries $w_\theta\coloneqq U^*_\theta(0), \theta\in[0,1]$ in $\mathcal{A}$. Then $w_0=1$ and
\begin{align*}
    p_\theta=p(\theta)=\tau^\alpha_\theta(p(0))=(\Ad(U^*_\theta)\circ\tau'(p))(0)=\Ad(U^*_\theta(0))(p(0))=\Ad(w_\theta)(p_0),
\end{align*}
thus $\{w_\theta\}_{\theta\in[0,1]}$ is precisely a Paschke path whose existence was proved in \cite{Pas}.

Conversely, given a Paschke path $\{w_\theta\}_{\theta\in[0,1]}$ such that $w_0=1$ and $\Ad(w_\theta)(p_0)=p_\theta$, we can reconstruct Connes' cocycle for the action $\tau^\alpha$ on $\Ind(\mathcal{A},\alpha)$ as follows. Regard $p\in\Ind(\mathcal{A},\alpha)$ as a projection-valued function $\RR\rightarrow\mathcal{A}$ through the equivariance condition $p(s)\equiv p_s=\alpha(p_{s-1}), s\in\RR$. Similarly, extend Paschke's $w:[1,0]\mapsto \mathcal{A}$ to a continuous unitary function $w:\RR\ni s\mapsto w_s\in\mathcal{A}$ by the recursion relation $w_s=\alpha(w_{s-1})\cdot w_1$. It is easy to check that
$p_s=\Ad(w_s)(p_0)$ for all $s\in\RR$, and that $n\mapsto w_n$ defines a 1-cocycle for $(\mathcal{A},\alpha)$. The projection $p_0$ is then fixed by $\alpha'_n=\Ad(w^*_n)\circ\alpha, n\in\ZZ$, since $$\alpha'(p_0)=\Ad(w^*_1)(\alpha(p_0))=\Ad(w^*_1)\circ\Ad(w_1)(p_0)=p_0.$$

The assignment $t\mapsto U_t\in\Ind(\mathcal{A},\alpha)$ defined by
\begin{equation}
U_t(s)=w_sw_{s+t}^*, \qquad s,t\in\RR\label{Connescocycle}
\end{equation}
is the desired Connes 1-cocycle for $(\Ind(\mathcal{A},\alpha),\tau^\alpha)$; we can verify that each $U_t$ is $\alpha$-equivariant and satisfies the cocycle condition. If we define the $\RR$-action $\tau'$ on $\Ind(\mathcal{A},\alpha)$ by $\tau'_t=\Ad(U_t)\circ(\tau^\alpha_t)$, we find that the projection $p$ is now fixed by $\tau'$.

\subsection*{Paschke's formula from axioms}
Recall that $\gamma^0_\alpha[p]$ can be computed by passing to $\alpha'$, using $\gamma^0_\alpha[p]=(\varphi_u^{-1})_*\circ\gamma^0_{\alpha'}\circ(\Psi_U)_*[p]$.
In \eqref{CTforfixedprojection}, we saw that $\gamma^0_{\alpha'}\circ(\Psi_U)_*[p]=\gamma^0_{\alpha'}[p']=(\omega\rtimes\ZZ)_*[b].$
Note that even if $\mathcal{A}$ is unital, the homomorphism $\omega$ (and thus $\omega\rtimes\ZZ$) is \emph{non-unital} in general, so the induced map $(\omega\rtimes\ZZ)_*$ in $K_1$ is slightly tricky to compute (c.f. Prop.\ 8.1.6 of \cite{Rordam}). Namely, we have to append a formal unit ${\bf 1}$ to $C(\TT)$ and $\mathcal{A}\rtimes_{\alpha'}\ZZ$, replace $[b]$ by $[{\bf 1}-1_{C(\TT)}+b]$, then compute $(\omega\rtimes\ZZ)_*$ using the unital extension $(\omega\rtimes\ZZ)^+:C(\TT)^+\rightarrow (\mathcal{A}\rtimes_{\alpha'}\ZZ)^+$. This gives
\begin{align*}
((\omega\rtimes\ZZ)^+)_*[{\bf 1}-1_{C(\TT)}+b]&=[{\bf 1}+(\omega\rtimes\ZZ)(-1_{C(\TT)}+b)]\\&=[{\bf 1}-L'p'_0+p'_0]\\
&=[{\bf 1}-1_{\mathcal{A}\rtimes_{\alpha'}\ZZ}+1_{\mathcal{A}\rtimes_{\alpha'}\ZZ}+L'p'_0-p'_0],
\end{align*}
and mapping back to unitaries in $\Ind(\mathcal{A},\alpha')$, we obtain
$$(\omega\rtimes\ZZ)_*[b]=[1+L'p'_0-p'_0]=[L'p_0+(1-p_0)].$$ Finally, recall that the isomorphism $\varphi_u$ implements the isomorphism $\mathcal{A}\rtimes_\alpha\ZZ\cong\mathcal{A}\rtimes_{\alpha'}\ZZ$ induced by the exterior equivalence $\alpha'_n=\Ad(w^*_n)\circ\alpha_n$, so it effects $w_1^*L\mapsto L'$. Thus,
\begin{equation*}
    \gamma^0_\alpha[p]=(\varphi_u^{-1})_*\circ\gamma^0_{\alpha'}\circ(\Psi_U)_*[p]=(\varphi_u^{-1})_*[L'p_0+(1-p_0)]=[w_1^*Lp_0+(1-p_0)],
\end{equation*}
which, up to a sign, concides with $\gamma^{0,\rm Paschke}_\alpha[p]$.

\subsection*{The real $C^*$-algebra case}
A general reference for the $K$-theory of real $C^*$-algebras is \cite{Schroder}. The Connes--Thom isomorphisms and Pimsner--Voiculescu exact sequence also hold for real $C^*$-algebras $\mathcal{A}$ and real crossed products, see \cite{Ro2, Schroder}. Similarly, Paschke's map $\gamma^{0,\rm Paschke}_\alpha$ still makes sense on real $\ZZ$-algebras $(\mathcal{A},\alpha)$. We can then repeat the arguments in this appendix to obtain the real version of Theorem \ref{PaschkeequalsConnesThom}. The only significant change is in the normalization axiom, and in taking $\bullet\in\ZZ_8$ rather than in $\ZZ_2$.

Recall that $\mathbb{C}\rtimes_{\rm id}\ZZ=C_\RR^*(\ZZ)\cong C(\TT;\varsigma)$ where $\varsigma$ is the involution $\theta\mapsto -k\theta$ on $\TT=\widehat{\ZZ}$ inherited from complex conjugation of characters (parametrised by $\theta\in[0,1]$), and $C(\TT;\varsigma)$ is the \emph{real} $C^*$-algebra of continuous functions $f\in\TT\rightarrow \CC$ such that $\overline{f(\theta)}=f(-\theta)$. These functions are precisely the ones with real Fourier coefficients. It is known (pp.\ 40 of \cite{Schroder}) that $K_1(C(\TT;\varsigma))\cong\ZZ[z]\oplus\ZZ_2[-1_{C(\TT,\varsigma)}]$, where $z$ denotes the unitary function $\theta\mapsto e^{2\pi i\theta}$ with winding number 1. Furthermore, $[z]$ is the Bott element implementing (1,1) periodicity in $KR$-theory (c.f.\ Theorem 1.5.4 of \cite{Schroder}, Theorem 10.3 of \cite{LM}), so we also write it as $[b_r]$. We can also identify it with the unitary $L$ implementing the (trivial) automorphism in $\RR\rtimes_{\rm id}\ZZ$.

On the other hand, the induced algebra for $(\RR,{\rm id})$ is just the continuous real-valued functions on the circle $C(\TT;\RR)$. It is also known that $$KO_0(C(\TT;\RR))=\ZZ[1_{C(\TT;\RR)}]\oplus\ZZ_2\left[[P_{\rm Mob}]-[1_{C(\TT;\RR)}]\right],$$ where $P_{\rm Mob}$ is the M\"{o}bius projection in $M_2(C(\TT;\RR))$ given by
\begin{equation*}
    P_{\rm Mob}(\theta)=\Ad(R(\theta))\begin{pmatrix} 1 & 0 \\ 0 & 0    \end{pmatrix}=\Ad\begin{pmatrix}\cos\,\pi\theta & -\sin \,\pi\theta \\ \sin\,\pi\theta & \cos\,\pi\theta\end{pmatrix}\begin{pmatrix} 1 & 0 \\ 0 & 0    \end{pmatrix}.
\end{equation*}

It is straightforward to compute the real Pashcke map on $KO_0(C(\TT;\RR))$. For the constant projection $1$, the Paschke path of unitaries is trivial, so we have $\gamma^{0,\rm Paschke}_{\rm id}[1]=[L+1-1]=[L]=[b_r]$. For the M\"{o}bius projection, the Paschke path of unitaries is $w_\theta=R(\theta)$, so $w_1$ is minus the identity. Then 
\begin{equation*}
    \gamma^{0,\rm Paschke}_{\rm id}[P_{\rm Mob}]=[-LP_{\rm Mob}(0)+1_2-P_{\rm Mob}(0)]=\left[\begin{pmatrix} -z & 0 \\ 0 & 1_{C(\TT,\varsigma)}\end{pmatrix}\right]\\
    =[-z]=[-1_{C(\TT,\varsigma)}]+[z],
\end{equation*}
so $\gamma^{0,\rm Paschke}_{\rm id}([P_{\rm Mob}]-[1_{C(\TT;\RR)}])=[-1_{C(\TT,\varsigma)}]$, i.e.\ the torsion generators are correctly mapped to each other.

The normalization axiom in the real $C^*$-algebra case needs to be replaced by
\begin{axioms}[Normalization]\label{normalizationreal}
If $(\mathcal{A},\alpha)=(\RR,\mathrm{id})$, then $\gamma^0_{\mathrm{id}}:KO_0(C(\TT;\RR))\cong KO^0(\TT)\rightarrow KO_1(C_\RR^*(\ZZ))\cong KO_1(C(\TT;\varsigma))$ takes $[1_{C(\TT;\RR)}]$ to the (Bott) generator $[b_r]$ and $([P_{\rm Mob}]-[1_{C(\TT;\RR)}])$ to $[-1_{C(\TT,\varsigma)}]$.
\end{axioms}

\begin{theorem}\label{realuniqueness}
There is a unique natural transformation $\gamma$ of the functors $(\mathcal{A},\alpha)\mapsto KO_\bullet(\Ind(\mathcal{A},\alpha))$ and $(\mathcal{A},\alpha)\mapsto KO_{\bullet+1}(\mathcal{A}\rtimes_\alpha\ZZ)$, $\bullet\in\ZZ_8$ , from the category of real $\ZZ$-algebras to abelian groups, which satisfy Axiom \ref{normalizationreal} and Axioms \ref{naturality}-\ref{suspension} with $K$ replaced by $KO$. Thus the real version of Corollary \ref{PaschkeequalsConnesThom} holds:  $\gamma^{\bullet,\,\rm{Paschke}}_\alpha=M_\alpha^{\bullet+1,\,\rm Green}\circ\phi^\bullet_{\tau^\alpha}$ up to a sign convention.
\end{theorem}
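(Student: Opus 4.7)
I would follow the complex-case argument of Proposition \ref{uniqueness} and Theorem \ref{PaschkeequalsConnesThom}, adjusting the bookkeeping for $KO$-theory. Existence of a real natural transformation satisfying the three axioms is essentially free: Paschke's formula \eqref{Paschkeformula} makes sense verbatim for real $C^*$-algebras, and its verification of Axioms \ref{naturality} and \ref{suspension} only uses representative-level manipulations that transfer unchanged from the complex case. The $[1_{C(\TT;\RR)}]$ half of Axiom \ref{normalizationreal} is immediate, and the M\"obius half has already been checked in the text above via the explicit computation $\gamma^{0,\rm Paschke}_{\rm id}[P_{\rm Mob}]=[-z]$ using the Paschke path $w_\theta=R(\theta)$ with $w_1=-I_2$.

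For uniqueness I would mirror the proof of Proposition \ref{uniqueness}. Axiom \ref{suspension} together with $KO$-periodicity (period $8$ rather than $2$, but this is immaterial for the reduction) brings the problem down to pinning down $\gamma^0_\alpha$ for arbitrary real $\ZZ$-algebras. Given a class $[p]\in KO_0(\Ind(\mathcal{A},\alpha))$, represented after stabilisation by a single projection, I would invoke the real analogue of Connes' Proposition~4, available through Rosenberg's treatment of the real Connes--Thom isomorphism \cite{Ro2,Schroder}, to produce a unitary $1$-cocycle $\{U_t\}_{t\in\RR}$ for $\tau^\alpha$ such that $\tau'_t=\Ad(U_t)\circ\tau^\alpha_t$ fixes $p$. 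Lemmas \ref{inner1}--\ref{inner2} and Corollary \ref{exteriorreplacement} are purely algebraic and valid real or complex, so $(\Psi_U)_*$ transfers $[p]$ to the class of a constant projection $p'_0\in\mathcal{A}^{\alpha'}$, and the equivariant morphism $\omega:(\RR,{\rm id})\to(\mathcal{A},\alpha')$ given by $\lambda\mapsto\lambda p'_0$ realises $p'=\tilde\omega(1_{C(\TT;\RR)})$. Naturality combined with the first half of Axiom \ref{normalizationreal} then forces
\[
\gamma^0_{\alpha'}[p']=(\omega\rtimes\ZZ)_*\,\gamma^0_{\rm id}[1_{C(\TT;\RR)}]=(\omega\rtimes\ZZ)_*[b_r],
\]
and Corollary \ref{exteriorreplacement} expresses $\gamma^0_\alpha[p]$ uniquely in terms of this. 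Since the composition $M_\alpha^{\bullet+1,\rm Green}\circ\phi^\bullet_{\tau^\alpha}$ also satisfies the three axioms --- naturality from Green's real imprimitivity theorem plus Rosenberg's real Connes--Thom map, suspension from compatibility of $\phi^\bullet$ with suspensions, and normalization by direct computation on the two $KO_0(C(\TT;\RR))$ generators --- uniqueness forces it to agree with $\gamma^{\bullet,\rm Paschke}_\alpha$ up to a global sign, yielding the claimed real version of Theorem \ref{PaschkeequalsConnesThom}.

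The hard part will be checking that the Connes--Thom/Green composition $M_{\rm id}^{1,\rm Green}\circ\phi^0_{\tau^{\rm id}}$ indeed satisfies the M\"obius half of Axiom \ref{normalizationreal}, namely that it maps $[P_{\rm Mob}]-[1_{C(\TT;\RR)}]$ to the torsion class $[-1_{C(\TT,\varsigma)}]\in KO_1(C_\RR^*(\ZZ))$. This requires tracking a subtle $\ZZ_2$-sign arising from the imprimitivity bimodule implementing $C(\TT;\RR)\rtimes\RR\sim C_\RR^*(\ZZ)$ at the $KO$-theoretic level, or equivalently from the real Connes--Thom map acting on the $KR$-theoretic M\"obius class; unlike in the complex case, this cannot be dismissed as a sign convention because it distinguishes genuine torsion elements. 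Apart from this technical verification the remainder of the argument transfers formally from the complex case, with essentially no new conceptual input beyond the availability of real Connes--Thom and the extended normalization axiom.
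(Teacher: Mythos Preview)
Your approach matches the paper's, which gives no separate proof beyond directing the reader to repeat the complex-case arguments with $KO$ replacing $K$ and $\ZZ_8$-periodicity in place of $\ZZ_2$.

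However, the ``hard part'' you isolate is superfluous. Your own uniqueness argument uses only the \emph{first} half of Axiom~\ref{normalizationreal}: every projection class is reduced, via exterior equivalence and naturality along $\omega:(\RR,\mathrm{id})\to(\mathcal{A},\alpha')$, to the image of $[1_{C(\TT;\RR)}]$, so any natural transformation satisfying Axioms~\ref{naturality}--\ref{suspension} is already completely determined by its value on $[1_{C(\TT;\RR)}]$. Consequently, once $M^{1,\rm Green}_{\rm id}\circ\phi^0_{\tau^{\rm id}}$ is known to send $[1_{C(\TT;\RR)}]$ to $\pm[b_r]$, uniqueness forces $M^{\bullet+1,\rm Green}_\alpha\circ\phi^\bullet_{\tau^\alpha}=\pm\gamma^{\bullet,\rm Paschke}_\alpha$ globally, and the M\"obius identity for the Connes--Thom/Green composition then comes \emph{for free} from the explicit Paschke computation you already cite. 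No independent $\ZZ_2$-torsion tracking through the imprimitivity bimodule is required; you have set up the logic so that it is a consequence, not a hypothesis. If there is any residual subtlety in the real case, it lies rather in the $[1_{C(\TT;\RR)}]$-half, since an isomorphism $\ZZ\oplus\ZZ_2\to\ZZ\oplus\ZZ_2$ is no longer pinned down to $\pm 1$ as in the complex case --- but this is the point you pass over as ``direct computation'', not the one you flag as hard.
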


\section{Appendix: Multipliers for lattice groups}\label{sec:discretemultiplier}
Let $N=\ZZ^d$ and for a multiplier $\sigma$, let $\wt{\sigma}(n,n^\prime) = \sigma(n,n^\prime)/ \sigma(n^\prime,n)$ be the antisymmetric bicharacter labelling its class in $H^2(N,\TT)$. We reconstruct a canonical representative multiplier in this class starting from $\wt{\sigma}$. Choose generators $\{e_j\}_{j\in J}$, where $J$ is a totally ordered set, and, expanding $n = \sum_j n_je_j \in N$, and $n^\prime$ similarly, we define the multiplier 
$$
\sigma_J(n,n^\prime) = \prod_{j<k}\wt{\sigma}(e_j,e_k)^{n_jn_k^\prime},
$$
which is also a bicharacter (but not antisymmetric). Then we see that
\begin{eqnarray*}
\wt{\sigma}_J(n,n^\prime) &=& \frac{\sigma_J(n,n^\prime)}{\sigma_J(n^\prime,n)}\\ 
&=& \prod_{j<k}\wt{\sigma}(e_j,e_k)^{n_jn_k^\prime}\prod_{k<j}\wt{\sigma}(e_k,e_j)^{-n_jn_k^\prime}\\ 
&=& \prod_{j\neq k}\wt{\sigma}(e_j,e_k)^{n_jn_k^\prime}\\ 
&=& \wt{\sigma}(n,n^\prime).
\end{eqnarray*}
Thus we see that $\wt{\sigma}_J = \wt{\sigma}$, and consequently $\sigma_J$ is cohomologous to $\sigma$, and so also to any $\sigma_{J^\prime}$ for any other totally ordered set $J^\prime$.
We note that when $N = \integer^{2}$ and $J = \{1,2\}$ indexes the usual generators with the natural ordering, the antisymmetric bicharacter $\wt{\sigma}(n,n^\prime) = \exp(i(n_1n_2^\prime - n_2n_1^\prime))$ produces $\sigma_J(n,n^\prime) = \exp(i(n_1n_2^\prime))$.

%%%%%%%%%%%%%%%%%%%%%%%%%%%%%%%%%%%%
%
%               Bibiography
%
%%%%%%%%%%%%%%%%%%%%%%%%%%%%%%%%%%%%

%\newpage
%\listoftodos

\end{document}